\setlist[itemize]{leftmargin=*}
\definecolor{darklink}{RGB}{76,70,255}%
\def\expandafter\UrlBreaks\expandafter{\UrlBreaks\do\a%
    \do\b\do\c\do\d\do\e\do\f\do\g\do\h\do\i\do\j\do\k\do\l\do\m\do\n%
    \do\o\do\p\do\q\do\r\do\s\do\t\do\u\do\v\do\w\do\x\do\y\do\z\do\&}
\long\def\ignore#1{}
\theoremstyle{plain}
\newtheorem{theorem}{Theorem}
\newtheorem{proposition}[theorem]{Proposition}
\newtheorem{lemma}[theorem]{Lemma}
\theoremstyle{definition}
\newtheorem{definition}[theorem]{Definition}
\theoremstyle{remark}
\newtheorem{remark}[theorem]{Remark}
\newtheorem{example}[theorem]{Example}
\theoremstyle:=definition,remark,plain
\numberwithin{theorem}{section}
\newcolumntype{H}{>{\setbox0=\hbox\bgroup}c<{\egroup}@{}}
\newcommand{\Optic}{\hyperlink{linkoptic}{\mathbf{Optic}}}
\newcommand{\Sets}{\mathbf{Set}}
\newcommand{\Lan}{\mathsf{Lan}}
\newcommand{\Nat}{\mathrm{Nat}}  
\newcommand{\Tamb}{\hyperlink{linktamb}{\mathbf{Tamb}}}
\newcommand{\Prof}{\mathbf{Prof}}
\newcommand{\C}{\mathbf{C}}
\newcommand{\D}{\mathbf{D}}
\newcommand{\E}{\mathbf{E}}
\newcommand{\M}{\mathbf{M}}
\newcommand{\N}{\mathbf{N}}
\newcommand{\X}{\mathbf{X}}
\newcommand{\Y}{\mathbf{Y}}
\newcommand{\A}{\mathbf{A}}   
\renewcommand{\AA}{\mathcal{A}}
\newcommand{\DA}{\mathcal{DA}}
\newcommand{\CCom}{\mathbf{CCom}}
\newcommand{\CMon}{\mathbf{CMon}}
\newcommand{\ListMonad}{\hyperlink{linklist}{\mathcal{L}}}
\newcommand*\circled[3]
\newcommand{\docircL}[3]{\mathbin{\circled{#1}{#2}{\scalebox{#3}{\tiny{$\mathsf{L}$}}}}}
\newcommand{\docircR}[3]{\mathbin{\circled{#1}{#2}{\scalebox{#3}{\tiny{$\mathsf{R}$}}}}}
\newcommand*{\circL}{
  \mathchoice{\docircL{0.75pt}{-0.60ex}{1}}
             {\docircL{0.75pt}{-0.60ex}{1}}
             {\docircL{0.3pt}{-0.45ex}{0.70}}
             {\docircL{0.1pt}{-0.35ex}{0.50}}}
\newcommand*{\circR}{
  \mathchoice{\docircR{0.75pt}{-0.60ex}{1}}
             {\docircR{0.75pt}{-0.60ex}{1}}
             {\docircR{0.3pt}{-0.45ex}{0.70}}
             {\docircR{0.1pt}{-0.35ex}{0.50}}}
\newcommand{\actL}[2]{#1 \circL{} #2}
\newcommand{\actR}[2]{#1 \circR{} #2}
\newcommand{\nto}{\nrightarrow}
\newcommand\id{\mathrm{id}}
\newcommand\EM{\operatorname{EM}}
\newcommand\Kl{\operatorname{Kl}}
\newcommand\Vt[1]{\(\V\)\mbox{-}\nobreak\hspace{0pt}#1}
\newcommand\Wt[1]{\(\W\)\mbox{-}\nobreak\hspace{0pt}#1}
\newcommand\V{{\hyperlink{linkvenrichment}{\mathcal{V}}}}
\newcommand\W{{\hyperlink{linkwenrichment}{\mathcal{W}}}}
\newcommand\U{{\mathcal{U}}}
\newcommand{\Obj}{\operatorname{Obj}}
\newcommand\hask[1]{\texttt{\detokenize{#1}}}
 \newcommand{\nicelinktarget}[1]{\Hy@raisedlink{\hypertarget{#1}{}}}
\newcommand\defining[2]{\nicelinktarget{#1}{{\emph{#2}}}}
\lstdefinestyle{mystyle}{
    basicstyle=\ttfamily,
    columns=fixed,
    breakatwhitespace=false,         
    breaklines=true,                 
    captionpos=b,                    
    keepspaces=true,                 
    numbersep=5pt,                  
    showspaces=false,
    showstringspaces=false,
}
\newcommand\hint[1]{\quad\mbox{\{#1\}}}
\newcommand\Yoneda{\hyperlink{linkyoneda}{Yoneda}}
\newcommand\Coyoneda{\hyperlink{linkcoyoneda}{Coyoneda}}
\newcommand\hintCoyoneda{\hint{\Coyoneda}}
\newcommand\hintYoneda{\hint{\Yoneda}}
\newcommand\hintContinuity{\hint{\hyperlink{linkcontinuity}{Continuity}}}
\newcommand\hintAdjunction[2]{\hint{\hyperlink{linkadjunction}{Adjunction} $#1 \dashv #2$}}
\newcommand\hintNaturalTransformation{\hint{\hyperlink{linknaturaltransformationsends}{Natural transformation}}}
\newcommand\Lens{\hyperlink{linklens}{\mathbf{Lens}}}
\newcommand\MndLens{\hyperlink{linkmonadiclens}{\mathbf{MndLens}}}
\newcommand\AlgLens{\hyperlink{linkalgebraiclens}{\mathbf{AlgLens}}}
\newcommand\LinearLens{\hyperlink{linklinearlens}{\mathbf{LinearLens}}}
\newcommand\Prism{\hyperlink{linkprism}{\mathbf{Prism}}}
\newcommand\Affine{\hyperlink{linkaffine}{\mathbf{Affine}}}
\newcommand\Setter{\hyperlink{linksetter}{\mathbf{Setter}}}
\newcommand\Grate{\hyperlink{linkgrate}{\mathbf{Grate}}}
\newcommand\mLens{\hyperlink{linkmlens}{\mathbf{MonLens}}}
\newcommand\Traversal{\hyperlink{linktraversal}{\mathbf{Traversal}}}
\newcommand\Pw{\hyperlink{linkpw}{\operatorname{Pw}}}
\newcommand\App{\hyperlink{linkapp}{\mathbf{App}}}
\newcommand\tLenses{\hyperlink{linkdefinelens}{Lenses}}
\author[Clarke]{Bryce Clarke}
\author[Elkins]{Derek Elkins}
\author[Gibbons]{Jeremy Gibbons}
\author[Loregian]{Fosco Loregian}
\author[Milewski]{Bartosz Milewski}
\author[Pillmore]{Emily Pillmore}
\author[Román]{Mario Román}
\title{Profunctor Optics: a Categorical Update}
\begin{document}

\maketitle
\date{}

\lstset{language=Haskell}

\begin{abstract}
  Optics are bidirectional data accessors that capture data
  transformation patterns such as accessing subfields or iterating over
  containers. Profunctor optics are a particular choice of representation supporting modularity, meaning that we can construct accessors for
  complex structures by combining simpler ones.
\iffalse
  Profunctor optics have been
  studied only using \(\Sets\) as the enriching category and in the
  \emph{non-mixed} case. However, functional programming languages are arguably
  better described by enriched categories and we have found that some structures
  in the literature are actually \emph{mixed} optics.
\else
  Profunctor optics have previously been
  studied only in an unenriched and non-mixed setting,
  in which both directions of access are modelled in \(\Sets\).
  However, functional programming languages are arguably
  better described by enriched categories;
  and we have found that some structures
  in the literature are actually \emph{mixed} optics,
  with access directions modelled in different categories.
\fi
  Our work generalizes a
  classic result by Pastro and Street on Tambara theory and uses it to describe
  \emph{mixed} \Vt{enriched} profunctor optics and to endow them with
  \Vt{category} structure. We provide some original families of optics and
  derivations, including an elementary one for \emph{traversals}. Finally, we discuss a Haskell implementation.
  \(\quad\)

\emph{Keywords:} lens, profunctor, Tambara module, coend calculus.
\end{abstract}

\newpage
\tableofcontents
\newpage

\section{Introduction}

\subsection{Optics}

\emph{Optics} are an abstract representation of some common patterns in
bidirectional data accessing.
The most widely known optics are \emph{lenses},
`focusing' on a subfield $A$ of a larger data structure $S$ through
a pair of functions \emph{view} $(S \to A)$ and \emph{update}
$(S \times A \to S)$ that respectively retrieve and modify the field.
\emph{Lenses} have been used in functional programming as a compositional
solution to the problem of accessing fields of nested data
structures~\cite{foster05} (\Cref{fig:exampleaddress}).

\begin{figure}[H]
  \begin{lstlisting}[language=Haskell]
       data Address = Address
         { street'  :: String
         , town'    :: String
         , country' :: String }

       viewStreet :: Address -> String
       viewStreet = street'

       updateStreet :: Address -> String -> Address
       updateStreet a s = a {street' = s}

       example :: Address
       example = Address
         { street'  = "221b Baker Street"
         , town'    = "London"
         , country' = "UK" }

       >>> example
       Address { street'  = "221b Baker Street"
               , town'    = "London"
               , country' = "UK"}

       >>> viewStreet example
       "221b Baker Street"

       >>> updateStreet example "4 Marylebone Road"
       Address { street'  = "4 Marylebone Road"
               , town'    = "London"
               , country' = "UK"}
  \end{lstlisting}%
\caption{Lenses are pairs of `view' and `update' functions that capture
  the repeating pattern of accessing subfields.
  Here, \lstinline[language=Haskell]!viewStreet! extracts a field from a record, and
  \lstinline[language=Haskell]!updateStreet! updates that field.
}
\label{fig:example0}
\label{fig:exampleaddress}
\end{figure}

As the understanding of these data accessors grew, different \emph{families of optics}
were introduced for a variety of different types (e.g. \emph{prisms} for disjoint
unions and \emph{traversals} for containers), each one of them capturing a
particular data accessing pattern (\Cref{fig:examplePrism}).

\subsection{Modularity}
It is straightforward to compose two lenses, one given by $S \to A$ and
$S \times A \to S$ and the other given by $A \to X$ and $A \times X \to A$, 
in order to access nested subfields. However, explicitly writing down this
composition (or explaining it to a computer) can be tedious. Intercomposability
only becomes increasingly difficult as other data accessors enter the stage:
composing a \emph{lens} given by $S \to A$ and $S \times A \to S$ with a \emph{prism} given by
$A \to X + A$ and $X \to A$ can produce a function
$S \to X \times (X \to S) + S$, which is neither a \emph{lens} nor a \emph{prism} but a different
optic known as \emph{affine traversal}. Implementing explicitly a composition
like this for every possible pair of optics would be prone to errors and result
in a large codebase. However, we would like optics to behave \emph{modularly};
in the sense that, given two optics, it should be possible to join them into a
composite optic that directly accesses the innermost subfield.

\begin{figure}[H]
\centering
\begin{lstlisting}[language=Haskell] 
  buildString :: Address -> String
  buildString (Address s t c) = s ++ ", " ++ t ++ ", " ++ c

  verifyAddress :: String -> Either String Address
  verifyAddress a = case splitOn ", " a of
        [str, twn, ctr] -> Right (Address str twn ctr)
        failure -> Left a

  asAddress :: Prism Address String
  asAddress = mkPrism verifyAddress buildString

  >>> "221b Baker Street, London, UK" ?. asAddress
  Just (Address
    { street'  = "221b Baker Street"
    , town'    = "London" 
    , country' = "UK" })
\end{lstlisting}
\caption{A prism is given by a pair of functions \lstinline[language=Haskell]!match! and
\lstinline[language=Haskell]!build! that account for the possiblity of failure on pattern matching.
  In the figure, we verify whether a string can be parsed as an address.
  The combinator \lstinline[language=Haskell]!(?.)! returns the results using the \lstinline[language=Haskell]!Maybe! monad (see \S
  \ref{sec:tablecombinators}).}
\label{fig:examplePrism}
\end{figure}

Perhaps surprisingly, many implementations allow the programmer to wrap optics
into a different representation and then use \emph{ordinary function
  composition} to construct composite optics.

How is it possible to compose two constructs that are not functions using
ordinary function composition? Implementations provided by popular libraries
such as \textsf{lens} \cite{kmett15}, \textsf{mezzolens} \cite{oconnor15} in Haskell, or
\textsf{profunctor-optics} \cite{freeman15} in Purescript, achieve this effect by
using different representations of optics in terms of polymorphic functions and
the Yoneda lemma. This paper focuses on the encoding known as \emph{profunctor
  representation}, which is based on the isomorphism between lenses (and optics
in general) and functions that are polymorphic over profunctors with a particular
algebraic structure called a \emph{Tambara module}. Optics under this encoding
are called \emph{profunctor optics}.

\subsection{Profunctor optics}
Profunctor optics, and various other representations of lenses, were originally
proposed in functional programming as a compositional solution to the problem of
accessing fields of nested data structures~\cite{foster05,laarhoven09}.

Different families of profunctor optics are intercomposable. When we use the profunctor
representation of optics, composing optics of different kinds becomes also a
particular case of polymorphic function composition. Together, all \emph{families
  of optics} form a powerful language for modular data access. Consider the
example of \Cref{fig:lensandprism}, where a lens and a prism are used in
conjunction to manipulate parts of a string.

\begin{figure}[H]
\centering
  \begin{lstlisting}[language=Haskell]
  >>> let place = "221b Baker St, London, UK"

  >>> place ?. asAddress . street
  Just "221b Baker St"

  >>> place & asAddress . street .~ "4 Marylebone Rd"
    "4 Marylebone Rd, London, UK"
\end{lstlisting}
\caption{The composition of a prism (\texttt{asAddress}) and a lens
  (\texttt{street}) produces a composite optic (\texttt{asAddress} .
  \texttt{street}). This optic (an ``affine traversal'', see \S\ref{def:affine}) is
  used to parse a string and then access and modify one of its
  subfields.}\label{fig:lensandprism}
\end{figure}
 
Moreover, optics can be used to entirely change not only the value but the type
of the focus, and propagate that change back to the original data structure.
These are called \emph{type-variant} optics, in contrast with the
\emph{type-invariant} optics we have introduced so far (\Cref{fig:typevarying}).
In that case, the functions defining the optic need to account for that type
change (commonly, by also introducing polymorphism), but the internal
representation will work the same. The optics we discuss in this paper are
assumed to be type-variant, with type-invariant optics being a special case.

\begin{figure}[H]
\centering
\begin{lstlisting}
    data Timestamped a = Timestamped
      { created' :: UTCTime
      , modified' :: UTCTime
      , contents' :: a }

    viewContents :: Timestamped a -> a
    viewContents = contents'

    updateContents :: Timestamped a -> b -> Timestamped b
    updateContents x b = x {contents' = b}

    contents :: Lens' a b (Timestamped a) (Timestamped b)
    contents = mkLens' viewContents updateContents
\end{lstlisting}
\caption{A type-variant lens that targets the contents of a value paired with
  creation and modification timestamps. The lens is constructed from two functions \texttt{viewContents}
  and \texttt{updateContents}.}
\label{fig:typevarying}
\end{figure}

In its profunctor representation, each optic is written as a single function that
is polymorphic over profunctors with a certain algebraic structure. For
instance, \emph{lenses} can be written as functions polymorphic over
\emph{cartesian} profunctors, whereas \emph{prisms} can be written as functions
polymorphic over \emph{cocartesian} profunctors~\cite[\S 3]{pickering17}.
Milewski~\cite{milewski17} identified these algebraic structures (cartesian
profunctors, cocartesian profunctors,~\ldots) as \emph{Tambara
  modules}~\cite{tambara06} and used a result by Pastro and
Street~\cite{pastro08} to propose a unified definition of optic. This definition
has been later extended by Boisseau and Gibbons~\cite{boisseau18} and
Riley~\cite{riley18}, both using different techniques and proposing
laws for optics.

\subsection{Mixed profunctor optics}
However, the original result by Pastro and Street cannot be
used directly to unify all the optics that appear in practice. Our work
generalizes this result, going beyond the previous definitions of optic
to cover \emph{mixed}~\cite[\S 6.1]{riley18} and \emph{enriched optics}.

The generalized profunctor representation theorem captures optics already
present in the literature and makes it possible to upgrade them to more
sophisticated definitions. For instance, many generalizations of \emph{lenses}
in functional programming are shown to be particular cases of a more refined
definition that uses mixed optics (\Cref{def:lens}). We also show
derivations for some new optics that were not present in the literature
(\Cref{def:glass,def:kaleidoscope,def:algebraiclens}).
Finally, Milewski~\cite{milewski17} posed the problem of fitting the three basic
optics (lenses, prisms and traversals) into an elementary pattern; lenses and
prisms had been covered in his work, but traversals were missing. We present a
new description of \emph{traversals} in terms of power series functors (\Cref{prop:traversal}) whose derivation
is more direct than the ones based on \emph{traversables} as studied by
Jaskelioff and Rypacek~\cite{rypacek12}.

\subsection{Coend Calculus}

\emph{Coend calculus} is a branch of category theory that describes the behaviour of
\emph{ends} and \emph{coends}, certain universal objects associated with
profunctors \(P : \C^{op} \times \C \to \V\).  Ends can be thought of as some form of
universal quantifier, whereas coends can be thought of as their existential counterparts.

Ends are subobjects of the product
$\prod_{X \in \C} P(X,X)$, whereas coends result from quotienting the coproduct
$\coprod_{X \in \C}P(X,X)$. Both take into account the fact that $P$ depends on
two ``terms'', covariantly on the second, and contravariantly on the first.

\begin{definition}[Ends and coends]
  The \emph{end} is the equalizer of the action of morphisms on both
arguments of the profunctor, whereas the \emph{coend} is dually defined as a coequalizer.
\[\mathrm{end}(P) \coloneqq \mathrm{eq} \left( \begin{tikzcd}
\prod_{X \in \C} P(X,X) \rar[yshift=0.5ex]\rar[yshift=-0.5ex] & \prod_{f \colon A \to B} P(A,B)
\end{tikzcd}
\right),
\]
\[\mathrm{coend}(P) \coloneqq \mathrm{coeq}\left( \begin{tikzcd}
\coprod_{f \colon B \to A} P(A,B) \rar[yshift=-0.5ex, swap] \rar[yshift=0.5ex] &
\coprod_{X \in \C} P(X,X)
\end{tikzcd}\right).
\]
\end{definition}

\emph{Ends} are usually denoted with a subscripted integral;
\emph{coends} use a superscripted integral.
\[\int_{X \in \C} P(X,X) \coloneqq \mathrm{end}(P), \qquad
  \int^{X \in \C} P(X,X) \coloneqq \mathrm{coend}(P).\]
In both cases, $X$ is a dummy variable, and we consider
$\int_{X \in \C} P(X,X)$ and $\int_{Y \in \C} P(Y,Y)$ `equivalent modulo
$\alpha$-conversion'.
The notation draws on an analogy with elementary calculus. An integral $\int f(x)\;dx$
depends ``covariantly'' on the variable $x$ defined, say, on $\mathbb{R}^{n}$, whereas
the differential ``$dx$'' can be regarded as an element of the \emph{dual} space
$(\mathbb{R}^{n})^{\ast}$. An even more striking analogy is that co/ends satisfy
a form of `Fubini rule' and a `Dirac delta' integration rule (see
\Cref{prop:fubinirule} and \Cref{prop:yonedareduction}).

Theorems involving ends and coends can be proved using their universal
properties. Here, we offer a terse account of coend calculus
\cite{caccamo01,loregian19}. Using the calculus based on the following rules, it
is possible to construct isomorphisms between objects of a category by means of
a chain of `deduction steps'.

\begin{proposition}
  Evaluation on the identity defines the following isomorphisms, called
  \defining{linkyoneda}{Yoneda} and \defining{linkcoyoneda}{coYoneda} reductions, respectively.
\label{prop:yonedareduction}
  \[ \int_{X \in \C} \V(\C(A,X),FX) \cong  FA. \qquad
    \int^{X \in \C} \C(X,A) \otimes FX \cong FA, \]
where \(\otimes \) is the tensor product in the monoidal category \(\V \) 
 (the base of the enrichment for \(\C \)) and $F \colon \C \to \V$ 
is a co-presheaf (there are analogous identities for presheaves).
\end{proposition}

\begin{proposition}
  The
  \defining{linkfubini}{Fubini rule}
  is satisfied up to isomorphism.
\label{prop:fubinirule}
    \[\int_{X_{1} \in \C} \int_{X_{2} \in \C} P(X_{1},X_{2},X_{1},X_{2})
    \cong
    \int_{X_{2} \in \C} \int_{X_{1} \in \C}  P(X_{1},X_{2},X_{1},X_{2}).\]
  \[\int^{X_{1} \in \C} \int^{X_{2} \in \C} P(X_{1},X_{2},X_{1},X_{2})
    \cong
    \int^{X_{2} \in \C} \int^{X_{1} \in \C}  P(X_{1},X_{2},X_{1},X_{2}).\]
\end{proposition}

\begin{proposition}
  \label{prop:continuity}
  \defining{linkcontinuity}{Continuity} and \defining{linkcocontinuity}{cocontinuity} induce the following isomorphisms.
  \[\V\left(A, \int_{X \in \C} P(X,X)\right) \cong \int_{X \in \C} \V(A,P(X,X)).\]
  \[\V\left(\int^{X \in \C} P(X,X), A\right) \cong \int_{X \in \C} \V(P(X,X),A).\]
\end{proposition}

\begin{proposition}
  \label{prop:naturaltransformationsasends}
  The set of \defining{linknaturaltransformationsends}{natural transformations} can be rewritten as a coend.
  \[\int_{X \in \C}\D(FX,GX) \cong [\C,\D](F,G) \phantom{\int}\]
\end{proposition}

In particular, coend calculus expressions are commonly simplified using \defining{linkadjunction}{adjunctions}
$(F \dashv G)$: an adjunction is equivalently an isomorphism $\D(FX,Y) \cong \C(X,GY)$.

\subsection{Contributions}

Our first contribution is the derivation and partial classification of mixed
optics, covering both optics existing in the literature and some novel ones, all
following a unified definition (\Cref{def:optic}). Our work completes and extends
the classification of (non-mixed) optics in \cite{boisseau18,riley18}.

Explicitly, we present a new family of optics in \Cref{def:algebraiclens}, that
unifies new examples with some optics already present in the literature, such as
\emph{achromatic lenses} \cite[\S 5.2]{boisseau17}. We introduce an original
derivation showing that \emph{monadic lenses}~\cite{AbouSaleh16} are mixed
optics in \Cref{prop:monadiclens}. Similarly, in \Cref{prop:monoidallenses}, we
present a novel derivation showing that the appropiate generalization of lenses
to an arbitrary monoidal category \cite[\S 2.2]{spivak19} is not an optic but a
mixed optic. We give a unified definition of \emph{lens} in \Cref{def:lens},
that for the first time can be specialized to all of these previous examples.
Finally, we present a new derivation of the optic known as \emph{traversal} in
\Cref{prop:traversal}.

Our second contribution is the definition of the enriched category of mixed profunctor
optics. The construction requires a generalization of the \emph{Tambara modules}
of \cite{pastro08} that had been used to define categories of profunctor optics
\cite{boisseau18,riley18} to \emph{generalized Tambara module}.
This is done in \Cref{sec:tambaratheory}. As a corollary, we extend the result
that justifies the use of the profunctor representation of optics in functional
programming to the case of enriched and mixed optics (\Cref{th:profrep}),
endowing them with \Vt{category} structure.

\subsection{Synopsis}

We introduce the definition of \emph{mixed optic} in \Cref{sec:optics}.
\Cref{sec:examples} describes some practical examples from functional
programming and shows how they are captured by the definition.
\Cref{sec:tambaratheory} describes how the theory of Tambara modules can be
applied to obtain a profunctor representation for optics. \Cref{sec:conclusions}
contains concluding remarks. The Appendix (\Cref{sec:implementation}) introduces
the details of a full Haskell implementation.

\subsection{Setting}

We shall work with categories enriched over a B\'enabou cosmos \((\defining{linkvenrichment}{\ensuremath{\mathcal{V}}},\otimes,I)\);
that is, a (small)-complete and cocomplete symmetric monoidal closed category.
In particular, \(\V\) is enriched over itself, and we write the internal
hom-object between \(A,B \in \operatorname{Obj}(\V)\) as
\(\V(A,B)\). Our intention is to keep a close eye
on the applications in functional programming: the enriching category \(\V\)
should be thought of as the category whose objects model the types of an
idealized programming language and whose morphisms model the programs. Because
of this, \(\V\) will be cartesian in many of the examples. We can, however,
remain agnostic as to which specific \(\V\) we are addressing.

For calculations, we make significant use of coend calculus as described, for
instance, by Loregian~\cite{loregian19}. The proofs in this paper can be carried
out without assuming choice or excluded middle, but there is an important
set-theoretical issue: in some of the examples, we compute coends over non-small
categories. We implicitly fix a suitable Grothendieck universe and our
categories are to be considered small with respect to that universe. As
Riley~\cite[\S 2]{riley18} notes, this will not be a problem in general: even if
some coends are indexed by large categories and we cannot argue their existence
using the cocompleteness of \(\Sets\), we will still find them represented by
objects in the category.

\section*{Acknowledgements}
This work was started in the last author’s MSc thesis~\cite{roman19};
development continued at the Applied Category School 2019 at
Oxford~\cite{pillmore20}, and we thank the organizers of the School for that
opportunity. We also thank Pawe\l{} Soboci\'nski for discussion,
and the anonymous reviewers for suggestions
about previous versions of this manuscript. The code for
this text has been continued as a Haskell library \cite{vitrea20}.

Bryce Clarke is supported by the Australian Government Research Training Program
Scholarship. Fosco Loregian and Mario Román were supported by the European Union
through the ESF funded Estonian IT Academy research measure (project
2014-2020.4.05.19-0001).

\section{Optics}
\label{sec:optics}

Our first goal is to give a unified definition that captures what it means to be
an \textbf{optic}. We have seen so far how \emph{lenses} and \emph{prisms} work
(\Cref{fig:exampleaddress,fig:examplePrism}). A common pattern can be
extracted from these two cases. \emph{Lenses} can be constructed when we have a
function that splits some data structure of type $S$ into something of the form
$M \times A$ and then recombines it back to $S$. Here, $A$ is the type of
the field we want to focus on, and $M$ is the type combining the remaining
fields that constitute $S$. From this split, we can extract the pair of functions
$\C(S,A) \times \C(S \times A, S)$ that define a lens. \emph{Prisms} can be
constructed when we have a function that can split some data structure of type $S$
into something of the form $M + A$ and put the pieces together again.

The structure that is common to all optics is that they split a bigger data
structure of type \(S \in \C\) into some \emph{focus} of type \(A \in \C\) and
some \emph{context} or \emph{residual} \(M \in \M\) around it. We cannot access
the context directly, but we can still use its shape to update the original data
structure, replacing the current focus by a new one. The definition will capture
this fact imposing a quotient relation on the possible contexts; this quotient
is expressed by the dinaturality condition of a coend. The category of contexts
\(\M\) will be monoidal, allowing us to compose optics with contexts \(M\) and
\(N\) into an optic with context \(M \otimes N\). Finally, as we want to capture
\emph{type-variant} optics, we leave open the possibility of the new focus being of a
different type \(B \in \D\), possibly in a different category, which yields a
new data structure of type \(T \in \D\). This is summarized in \Cref{fig:opticstructure}.

\begin{figure}[H]
  \centering
\begin{tikzpicture}[x=0.75pt,y=0.75pt,yscale=-1,xscale=1]
\draw   (40,80) .. controls (40,74.48) and (44.48,70) .. (50,70) .. controls (55.52,70) and (60,74.48) .. (60,80) .. controls (60,85.52) and (55.52,90) .. (50,90) .. controls (44.48,90) and (40,85.52) .. (40,80) -- cycle ;
\draw    (60,80) -- (87,80) ;
\draw [shift={(90,80)}, rotate = 180] [fill={rgb, 255:red, 0; green, 0; blue, 0 }  ][line width=0.08]  [draw opacity=0] (10.72,-5.15) -- (0,0) -- (10.72,5.15) -- (7.12,0) -- cycle    ;
\draw   (90,80) .. controls (90,74.48) and (94.48,70) .. (100,70) .. controls (105.52,70) and (110,74.48) .. (110,80) .. controls (110,85.52) and (105.52,90) .. (100,90) .. controls (94.48,90) and (90,85.52) .. (90,80) -- cycle ; \draw   (92.93,72.93) -- (107.07,87.07) ; \draw   (107.07,72.93) -- (92.93,87.07) ;
\draw   (260,80) .. controls (260,74.48) and (264.48,70) .. (270,70) .. controls (275.52,70) and (280,74.48) .. (280,80) .. controls (280,85.52) and (275.52,90) .. (270,90) .. controls (264.48,90) and (260,85.52) .. (260,80) -- cycle ;
\draw   (210,80) .. controls (210,74.48) and (214.48,70) .. (220,70) .. controls (225.52,70) and (230,74.48) .. (230,80) .. controls (230,85.52) and (225.52,90) .. (220,90) .. controls (214.48,90) and (210,85.52) .. (210,80) -- cycle ; \draw   (212.93,72.93) -- (227.07,87.07) ; \draw   (227.07,72.93) -- (212.93,87.07) ;
\draw    (230,80) -- (257,80) ;
\draw [shift={(260,80)}, rotate = 180] [fill={rgb, 255:red, 0; green, 0; blue, 0 }  ][line width=0.08]  [draw opacity=0] (10.72,-5.15) -- (0,0) -- (10.72,5.15) -- (7.12,0) -- cycle    ;
\draw    (100,70) .. controls (100.42,42.15) and (213.07,40.77) .. (219.7,67.44) ;
\draw [shift={(220,70)}, rotate = 248] [fill={rgb, 255:red, 0; green, 0; blue, 0 }  ][line width=0.08]  [draw opacity=0] (10.72,-5.15) -- (0,0) -- (10.72,5.15) -- (7.12,0) -- cycle    ;
\draw   (130,110) .. controls (130,104.48) and (134.48,100) .. (140,100) .. controls (145.52,100) and (150,104.48) .. (150,110) .. controls (150,115.52) and (145.52,120) .. (140,120) .. controls (134.48,120) and (130,115.52) .. (130,110) -- cycle ;
\draw   (170,110) .. controls (170,104.48) and (174.48,100) .. (180,100) .. controls (185.52,100) and (190,104.48) .. (190,110) .. controls (190,115.52) and (185.52,120) .. (180,120) .. controls (174.48,120) and (170,115.52) .. (170,110) -- cycle ;
\draw    (100,90) .. controls (100.13,108.13) and (117.5,109.71) .. (127.12,109.95) ;
\draw [shift={(130,110)}, rotate = 180.97] [fill={rgb, 255:red, 0; green, 0; blue, 0 }  ][line width=0.08]  [draw opacity=0] (10.72,-5.15) -- (0,0) -- (10.72,5.15) -- (7.12,0) -- cycle    ;
\draw    (190,110) .. controls (208.33,110.11) and (218.05,102.14) .. (219.74,92.93) ;
\draw [shift={(220,90)}, rotate = 465] [fill={rgb, 255:red, 0; green, 0; blue, 0 }  ][line width=0.08]  [draw opacity=0] (10.72,-5.15) -- (0,0) -- (10.72,5.15) -- (7.12,0) -- cycle    ;
\draw (50,80) node    {$S$};
\draw (270,80) node    {$T$};
\draw (140,110) node    {$A$};
\draw (180,110) node    {$B$};
\draw (51,52.4) node [anchor=north west][inner sep=0.75pt]    {$\mathit{input}$};
\draw (133,23.4) node [anchor=north west][inner sep=0.75pt]    {$\mathit{residual}$};
\draw (71,112.4) node [anchor=north west][inner sep=0.75pt]    {$\mathit{focus}$};
\draw (211,113.4) node [anchor=north west][inner sep=0.75pt]    {$\mathit{new\ focus}$};
\draw (231,53.4) node [anchor=north west][inner sep=0.75pt]    {$\mathit{output}$};
\end{tikzpicture}
\caption{The common structure of an optic. We could provide semantics for diagram of this kind
  in terms of profunctors, see \cite{roman21}.}
\label{fig:opticstructure}
\end{figure}

Multiple definitions of optics of increasing generality have been given in
\cite{milewski17,boisseau18,riley18}. We encompass all of them under an abstract
definition in terms of monoidal actions.

Let \((\M,\otimes,I,a,\lambda,\rho)\) be a monoidal \Vt{category}
  \cite{day70}. Let it act on two arbitrary \Vt{categories} \(\C\) and \(\D\)
  by means of strong monoidal \Vt{functors} \((\actL{}{}) \colon \M \to [ \C , \C ]\)
  and \((\actR{}{}) \colon \M\to [\D ,\D]\); and let us write
  \[\begin{aligned}
      & \phi_{A}^L \colon A \cong \actL{I}{A}, &\quad& \phi_{M,N,A}^L \colon \actL{M}{\actL{N}{A}} \cong \actL{(M \otimes N)}{A}, \\
      & \phi_{B}^R \colon B \cong \actR{I}{B}, &\quad& \phi_{M, N,B}^R \colon \actR{M}{\actR{N}{B}} \cong \actR{(M \otimes N)}{B},
    \end{aligned}\]
  for the structure isomorphisms of the monoidal
  actions \(\actL{}{}\) and \(\actR{}{}\), which we use in infix notation.

\begin{definition}
  \label{def:optic}
Let \(S,A \in \C\) and \(T,B \in \D\). An $(\actL{}{},\actR{}{})$\textbf{-optic} from \((S,T)\) with the
focus on \((A,B)\) is a (generalized) element of the following object described as a coend:
\[\defining{linkoptic}{\ensuremath{\mathbf{Optic}}}_{\actL{}{},\actR{}{}} ((A, B), (S,T)) \coloneqq
\int^{M \in \mathbf{M}} \C(S, \actL{M}{A}) \otimes \D(\actR{M}{B}, T).
\]
\end{definition}

The two monoidal actions \(\actL{}{}\) and \(\actR{}{}\) represent the
two different ways in which the context interacts with the focus: one when the
data structure is decomposed and another one, possibly different, when it is
reconstructed. By varying these two actions we will recover many examples from the
literature and introduce some new ones, as the table in \Cref{table:optics} summarizes.

\begin{figure}[h]
\centering
\begin{tabular}{lllcH@{\hspace*{-\tabcolsep}}}
\hline
\vrule width0pt height2.5ex depth 1ex %
Name & Description & Actions & Base & Ref. \\
\hline
\vrule width0pt height2.5ex %
  Adapter
     &  \(\C(S,A) \otimes \D(B,T)\) & \((\Optic_{\id,\id})\) & $\V,\otimes$ & {\ref{def:adapter}} \\
  \hyperlink{linklens}{Lens}
     & \(\C(S,A) \times \D(S \bullet B, T)\)  & \((\Optic_{\times,\bullet})\) & $\W,\times$ &  {{\ref{def:lens}}} \\
  Monoidal lens
     & \(\CCom(S,A) \times \C(\U S \otimes B, T)\) & \((\Optic_{\otimes,_\U\times})\) & $\W,\times$ &  {{\ref{def:myerslens}}} \\
  Algebraic lens
     & \(\C(S,A) \times \D(\Psi S \bullet B, T)\) & \((\Optic_{_\U\times,_\U\bullet})\) &  $\W,\times$&  {{\ref{def:algebraiclens}}} \\
  Monadic lens
     & \(\W(S , A) \times \W(S \times B , \Psi T)\) & \((\Optic_{\times,\rtimes})\) & $\W,\times$& {{\ref{def:monadiclens}}} \\
  Linear lens
     & \(\C(S, [B, T] \bullet A)\)  & \((\Optic_{\bullet, \otimes})\) & $\V,\otimes$ & {{\ref{def:linearlens}}} \\
  \hyperlink{linkprism}{Prism}
     & \(\C(S, T \bullet A) \times \D(B,T)\) & \((\Optic_{\bullet,+})\) & $\W,\times$ & {{\ref{def:prism}}} \\
  Coalg. prism
     & \(\C(S, \Theta T \bullet A) \times \D(B,T)\) & \((\Optic_{_\U\bullet,_\U+})\) & $\W,\times$ & {{\ref{remark:coalgebraicprism}}} \\
  \hyperlink{linkgrate}{Grate}
     & \(\D([S,A] \bullet B, T)\) & \((\Optic_{\{\ ,\ \},\bullet})\) & $\V,\otimes$ & {{\ref{def:grate}}} \\
  \hyperlink{linkglass}{Glass}
     & \(\C(S \times [[S, A], B] , T)\) &  \((\Optic_{\times[\ ,\ ],\times[\ ,\ ]})\) & $\W,\times$ & {{\ref{def:glass}}} \\
  Affine traversal
     & \(\C(S, T + A \otimes \{B, T\})\) & \((\Optic_{+\otimes,+\otimes})\) & $\W,\times$ &  {{\ref{def:affine}}} \\
  \hyperlink{linktraversal}{Traversal}
     & \(\V(S, \sum\nolimits^n A^n \otimes [B^n, T])\) & \((\Optic_{\mathrm{Pw},\mathrm{Pw}})\) & $\V,\otimes$ & {{\ref{def:traversal}}} \\
  Kaleidoscope
     & \(\sum\nolimits_n \V([A^n,B],[S^n, T])\) & \((\Optic_{\mathrm{App},\mathrm{App}})\) & $\V,\otimes$ & {{\ref{def:kaleidoscope}}} \\
  Setter
     & \(\V([A,B],[S,T])\) & \((\Optic_{\mathrm{ev},\mathrm{ev}})\) & $\V,\otimes$ & {{\ref{def:setter}}} \\
  Fold
     & \(\V(S, \mathcal{L} A)\) & \((\Optic_{\mathrm{Foldable},\ast})\) & $\V,\otimes$ & {{\ref{def:fold}}}
\end{tabular}
\caption{Table of optics, together with their explicit description and their
  generating monoidal actions. The bullet represents some monoidal action; the brackets represent some monoidal action from the opposite category.}
\label{table:optics}
\end{figure}

The purpose of an abstract unified definition is twofold: firstly, it provides a
framework to classify existing optics and explore new ones, as we do in
\Cref{sec:examples}; and secondly, it enables a unified profunctor
representation, which we present in \Cref{sec:tambaratheory}.

\section{Examples of optics}
\label{sec:examples}

\subsection{Lenses and prisms}

\emph{Lenses} can be seen as accessors for a particular subfield of a data
structure. In its basic form, they are given by a pair of functions: the
\texttt{view} function that accesses the subfield; and
the \texttt{update} function that overwrites its
contents.

The basic definition of \emph{lens}~\cite{oles82,foster05,palmer07} has been generalized in many different directions. \emph{Monadic lenses}~\cite{AbouSaleh16},
\emph{lenses in a symmetric monoidal category}~\cite[\S 2.2]{spivak19},
\emph{linear lenses}~\cite[\S 4.8]{riley18} or \emph{achromatic lenses}~\cite[\S
5.2]{boisseau17} are some of them. These generalizations were not meant to be
mutually compatible. They use the monoidal structure in different
ways and introduce monadic effects in different parts of the signature. Some
were not presented as \emph{optics}, and a profunctor representation for them
was not considered. We present a unified description that captures all of these
variants.

We present two derivations of lenses as mixed optics that capture all of the
variants mentioned before, together with new ones, and endow them with a unified profunctor
representation (\Cref{th:profrep}).

The first derivation is based on cartesian structure. It generalizes the original
derivation~\cite{milewski17} and captures \emph{lenses in a symmetric monoidal
  category} (\Cref{def:myerslens}) and \emph{monadic lenses}
(\Cref{def:monadiclens}). It is then refined to cover \emph{achromatic
  lenses} and describe some new variants of optics missing in the literature. The
second derivation slightly generalizes \emph{linear lenses}~\cite[\S
4.8]{riley18} and uses the closed structure instead.

\subsubsection{Lenses}
Most variants of lenses rely on a cartesian monoidal structure. Throughout this
section, we take a cartesian closed category \((\defining{linkwenrichment}{\ensuremath{\mathcal{W}}}, \times,1)\) as our
base for enrichment. During the rest of the paper we will explicitly use $\W$ to
refer to a cartesian monoidal base of enrichment and use $\V$ to refer to a
not-necessarily-cartesian base of enrichment. A monoidal \Wt{category}
\((\C,\times,1)\) is said to be \emph{cartesian} if there exist \Wt{natural} isomorphisms
\(\C(Z,X \times Y) \cong \C(Z,X) \times \C(Z,Y)\) and \(\C(X,1) \cong 1\).

\begin{definition}\label{def:lens}
  Let \(\C\) be a cartesian \Wt{category} with a monoidal \Wt{action}
  \((\bullet) \colon \C \times \D \to \D\) to an arbitrary \Wt{category} \(\D\).
  A \defining{linkdefinelens}{lens} is an element of
  \[\defining{linklens}{\ensuremath{\mathbf{Lens}}}((A, B), (S, T)) \coloneqq
  \C(S,A) \times \D(S \bullet B, T).
  \]
\end{definition}
\begin{proposition}\label{prop:lens}
  \tLenses{} are mixed optics (as in \Cref{def:optic}) for the actions of
  the cartesian product \((\times) \colon \C \times \C \to \C\) and
  \((\bullet) \colon \C \times \D \to \D\). That is,
  \(\Lens \cong \Optic_{(\times,\bullet)}\).
\end{proposition}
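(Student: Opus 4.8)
The plan is to exhibit an explicit isomorphism between $\mathbf{Lens}((A,B),(S,T)) = \C(S,A) \times \D(S \bullet B, T)$ and the coend $\int^{M} \C(S, \actL{M}{A}) \otimes \D(\actR{M}{B}, T)$, where in this instance $\actL{}{}$ is the cartesian product action $(\times)\colon \C\times\C\to\C$ and $\actR{}{}$ is the given action $(\bullet)\colon \C\times\D\to\D$. So I must identify the residual category $\M$ with $\C$ acting on itself on the left and on $\D$ on the right. The whole computation is a coend manipulation, so the strategy is: pick a candidate for the representing object $M$, show the coend collapses onto that object, and then simplify what remains using the cartesian hypotheses on $\C$.

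First I would note that the natural choice of residual is $M = S$, because the data structure $S$ itself is the most informative possible context. Concretely, I would use the (co)Yoneda lemma in the form of the ninja/density formula: since $\C$ is cartesian, $\C(S, M\times A) \cong \C(S,M)\times \C(S,A)$ by the first isomorphism in the definition of cartesian $\W$-category. Substituting this into the coend gives
\[
\int^{M} \C(S,M)\otimes \C(S,A)\otimes \D(M\bullet B, T)
\;\cong\;
\C(S,A)\otimes \int^{M}\C(S,M)\otimes \D(M\bullet B, T),
\]
pulling the constant factor $\C(S,A)$ out of the coend. Then the remaining coend $\int^{M}\C(S,M)\otimes \D(M\bullet B,T)$ is exactly a coYoneda reduction: it is the coend defining the left Kan–style density formula, and it collapses to $\D(S\bullet B, T)$ by the co-Yoneda lemma, because $\D((-)\bullet B, T)$ is a (contravariant) $\W$-functor in the variable being integrated and $\C(S,-)$ is the corepresentable. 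This yields $\C(S,A)\otimes \D(S\bullet B, T)$, which (since $\otimes$ here is the cartesian product of the base, matching the $\times$ in the definition of $\mathbf{Lens}$) is precisely $\mathbf{Lens}((A,B),(S,T))$.

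I would then spell out the maps in both directions to confirm the isomorphism is the expected one: from a lens $(v\colon S\to A,\; u\colon S\bullet B\to T)$ one builds the coend element at $M = S$ given by $(\langle \id_S, v\rangle\colon S\to S\times A,\; u\colon S\bullet B\to T)$; conversely, given a coend element represented at some $M$ by $(f\colon S\to M\times A, g\colon M\bullet B\to T)$, one extracts $v = \pi_A\circ f$ and, using $\pi_M\circ f\colon S\to M$ together with functoriality of $(-)\bullet B$ and the dinaturality/wedge condition of the coend, transports $g$ back to a map $S\bullet B\to T$. Checking that these are mutually inverse is exactly where the dinaturality condition of the coend gets used, and it is the one genuinely non-formal point: one must verify that changing the representative $M$ along a morphism $h\colon M\to M'$ acts compatibly, so that the "evaluate at $M=S$" map is well-defined on the quotient. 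This is the step I expect to be the main (though mild) obstacle; everything else is bookkeeping with the cartesian isomorphisms and co-Yoneda. I would also remark that naturality in $(A,B)$ and $(S,T)$ is automatic since every step (the cartesian iso, pulling out the constant, co-Yoneda) is natural, so the displayed isomorphism is an isomorphism of profunctors, not merely a pointwise bijection.
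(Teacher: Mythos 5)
Your proposal is correct and takes essentially the same route as the paper's proof: apply the cartesian isomorphism \(\C(S,C\times A)\cong\C(S,C)\times\C(S,A)\), pull the constant factor out, and collapse the remaining coend \(\int^{C}\C(S,C)\otimes\D(C\bullet B,T)\cong\D(S\bullet B,T)\) by coYoneda. The explicit description of the two maps and the dinaturality check you sketch are exactly what the coYoneda step packages up, so they add detail but not a different argument.
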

\begin{proof}
  The universal property of the product can be summarized as it being right
  adjoint to the diagonal functor \((\Delta) \colon \C \to \C^{2}\).
  \begin{align*}
  & \int^{C \in \C} \C(S,C \times A) \times \D(C \bullet B, T) \\
  \cong & \hintAdjunction{\Delta}{(\times)} \\
  & \int^{C \in \C} \C(S , C) \times \C(S , A) \times \D(C \bullet B , T) \\
  \cong & \hintCoyoneda \\
  & \C(S , A) \times \D(S \bullet B , T). \qedhere
  \end{align*}
\end{proof}

This definition can be specialized to the pair
\(\C(S,A) \times \C(S \times B , T)\) if we take \(\C = \D\) and we let
\((\bullet)\) be the cartesian product. It is, however, more general than that, and it
captures the following examples (\Cref{def:myerslens,def:monadiclens}).

\subsubsection{Lenses in a symmetric monoidal category}

\begin{definition}\label{def:myerslens}
  A \textbf{monoidal lens} {{\cite[\S 2.2]{spivak19}}} in a symmetric monoidal category \(\C\) is a \emph{view and update} pair where
  the view is a commutative comonoid homomorphism,
  \[\defining{linkmlens}{\ensuremath{\mathbf{MonLens}_{\otimes}}} ((A, B), (S,T)) \coloneqq
    \CCom(S,A) \times \C(\U S \otimes B, T).\]
  Here \(\U\) represents the forgetful functor \(\U \colon \CCom \to \C\).
\end{definition}

\begin{proposition}\label{prop:monoidallenses}
  Monoidal lenses in a symmetric monoidal category are a particular case of \Cref{def:lens}.
\end{proposition}
\begin{proof}
  The category of cocommutative comonoids \(\CCom\) over a category \(\C\) can
  be given a cartesian structure in such a way that the forgetful functor
  \(\U \colon \CCom \to \C\) is strict monoidal (Fox shows a
  stronger result \cite{fox76}). By Proposition \ref{prop:lens}, we can show
  \(\mLens_\otimes \cong \Optic_{(\otimes,\bullet)}\) where
  \((\bullet)\) is given by \(S \bullet A \coloneqq \U S \otimes A\).
\end{proof}

\subsubsection{Monadic lenses}
\begin{definition}\label{def:monadiclens}
  Monadic lenses \cite[\S 2.3]{AbouSaleh16} allow for monadic effects in the
  \texttt{update} function (an example is Figure \ref{fig:exampleBox}). For
  \(\Psi \colon \W \to \W\) a \Wt{monad},
  \[\defining{linkmonadiclens}{\ensuremath{\mathbf{MndLens}_\Psi}} ((A, B), (S, T)) \coloneqq \W(S , A) \times \W(S \times B , \Psi T).\]
\end{definition}

\begin{proposition}\label{prop:monadiclens}
  Monadic lenses are a particular case of \Cref{def:lens}.
\end{proposition}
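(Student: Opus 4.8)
The plan is to realize $\mathbf{MndLens}_\Psi$ as a special case of $\mathbf{Lens}$ by choosing the target category $\D$ and the action $(\bullet)$ appropriately. The natural guess, reading off the type signatures, is to set $\C = \W$, to take $\D$ to be the Kleisli category $\Kl(\Psi)$, and to take the action $(\bullet)\colon \W \times \Kl(\Psi) \to \Kl(\Psi)$ to be induced by the cartesian product on $\W$. Concretely, on objects $S \bullet B \coloneqq S \times B$, and on a morphism $g \colon S \times B \to \Psi T$ in $\Kl(\Psi)$ together with a morphism $h \colon S' \to S$ in $\W$ we form the evident composite; the key point is that products in $\W$ give a well-defined action on the Kleisli category because the monad structure provides the strength needed to push $S \times (-)$ through $\Psi$ — in $\Set$ (or any cartesian closed $\W$) every monad is canonically strong, so this is available.

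With this choice the definition of lens unwinds to
\[
\mathbf{Lens}((A,B),(S,T)) = \C(S,A) \times \D(S \bullet B, T) = \W(S,A) \times \Kl(\Psi)(S \times B, T),
\]
and by definition of the Kleisli hom-sets $\Kl(\Psi)(S \times B, T) = \W(S \times B, \Psi T)$, which is exactly $\mathbf{MndLens}_\Psi((A,B),(S,T))$. So the statement reduces to checking that the data I have described genuinely constitutes a monoidal $\W$-action of the cartesian $\W$-category $\C = \W$ on $\D = \Kl(\Psi)$ in the sense required by Definition~\ref{def:lens}: that $(\bullet)$ is functorial in both arguments, that the unit and associativity coherence isomorphisms $\phi$, $\varphi$ are present and natural, and that the whole assignment is $\W$-enriched. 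The steps, in order, are: (1) recall the canonical strength of $\Psi$ and use it to define $(\bullet)$ on morphisms of $\Kl(\Psi)$; (2) verify functoriality and the monoidal coherences, which follow from the strength axioms together with the cartesian coherences of $\W$; (3) observe the hom-set identification above and conclude $\mathbf{MndLens}_\Psi \cong \mathbf{Lens}$ for this action, hence $\mathbf{MndLens}_\Psi \cong \Optic_{(\times,\rtimes)}$ by the preceding proposition.

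The main obstacle is step (2) — specifically, confirming that the Kleisli category carries a genuine $\W$-enriched monoidal action of $\W$ rather than merely an unenriched one, since Definition~\ref{def:lens} is stated for $\W$-categories and $\W$-actions. This is where the cartesian closedness of $\W$ does real work: it is what lets one internalize the strength and the action maps as morphisms in $\W$, and what makes the coherence diagrams commute at the enriched level. I expect this to be a routine but slightly fiddly diagram chase, and in the interest of brevity the write-up will likely just point to the canonical strength and assert the coherences, exactly as the analogous Proposition~\ref{prop:monoidallenses} does for comonoids. One minor subtlety worth flagging: the action $(\bullet)$ must land in $\Kl(\Psi)$ in a way compatible with the forgetful/free adjunction, so that $S \bullet B = S \times B$ really is the object-level formula and the effectful part lives entirely in the hom; this is automatic but should be stated.
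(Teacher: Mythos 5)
Your proposal is correct and follows essentially the same route as the paper: both take $\D = \operatorname{Kl}(\Psi)$, use the canonical strength of the \Wt{monad} $\Psi$ (available since every \Wt{endofunctor} is strong) to define the action $(\rtimes)\colon \W \times \operatorname{Kl}(\Psi) \to \operatorname{Kl}(\Psi)$ with $S \rtimes B = S \times B$ on objects, and then identify $\operatorname{Kl}(\Psi)(S \rtimes B, T) = \W(S \times B, \Psi T)$ to conclude $\mathbf{MndLens}_\Psi \cong \Optic_{(\times,\rtimes)}$. The paper likewise asserts the coherences rather than chasing the diagrams, so your anticipated level of detail matches its proof.
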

\begin{proof}
  Every \Wt{endofunctor} is \textit{strong}: this is because a functor $F \colon \W \to \W$ with a strength is the same thing as a $\W$-enrichment of the functor \cite{levy2019strong}. Thus, the
  \Wt{monad} \(\Psi\) comes with a \Wt{natural} family
  \(\theta_{X,Y} \colon X \times \Psi(Y) \to \Psi(X \times Y)\). This induces a
  \Wt{action}
  \((\rtimes) \colon \W \times \operatorname{Kl}(\Psi) \to \operatorname{Kl}(\Psi)\),
  with $\operatorname{Kl}(\Psi)$ the Kleisli category of the monad; defined, on morphisms, as the composite
\[\begin{tikzcd}[row sep=tiny]
   \W(X,Y) \times   \W(A,\Psi B)\rar{(\times)} &
    \W(X \times A, Y \times \Psi B) \\
    \phantom{\W(X \times A, Y \times \Psi B)} \rar{\theta} &
    \W(X \times A, \Psi (Y \times B)).
  \end{tikzcd}\]
  Using that
\(\operatorname{Kl}_\Psi(S \rtimes B,T) \coloneqq \W(S \times B, \Psi T)\),
monadic lenses are lenses (as in \Cref{def:lens}), \(\MndLens_\Psi \cong \Optic_{(\times,\rtimes)}\),
where \((\bullet)\) is given by
\((\rtimes) \colon \W \times \operatorname{Kl}(\Psi) \to \operatorname{Kl}(\Psi)\).
\end{proof}

\begin{remark}
  This technique is similar to the one used by \cite[\S 4.9]{riley18} to describe a non-mixed
  variant called \emph{effectful lenses}.
\end{remark}

\begin{figure}[h]
\centering
  \begin{lstlisting}
stamp :: MonadicLens IO a b (Timestamped a) (Timestamped b)
stamp = mkMonadicLens @IO viewContents updateContentsAndStamp
   where
     viewContents :: Timestamped a -> a
     viewContents = contents'

     updateStamp :: Timestamped a -> b -> IO (Timestamped b)
     updateStamp x b = do
       currentTime <- getCurrentTime
       return (x {contents' = b , modified' = currentTime})

greeting :: IO (Timestamped String)
greeting = do
  t <- getCurrentTime
  x <- pure (Timestamped t t "What is the answer?")
  threadDelay (2500000) -- microseconds
  x & stamp .! "42"

 >> greeting
 Contents: "42",
 Created:  2020-02-02 12:24:55.119075225 UTC,
 Modified: 2020-02-02 12:24:57.621826372 UTC.
\end{lstlisting}
\caption{A polymorphic family of type-variant monadic lenses for the \hask{IO}
  monad is used to track how a data holder (\hask{Timestamped}) is accessed 2.5
  seconds after creation.}
\label{fig:exampleBox}
\end{figure}

\subsubsection{Algebraic lenses}

We can further generalize \Cref{def:lens} if we allow the
\textit{context} over which we take the coend to be an algebra for a fixed
monad. The motivation is that lenses with a context like this appear to have
direct applications in programming; for instance, the \textit{achromatic
  lenses} of \cite{boisseau18} are a particular case of this definition. These
\textit{algebraic lenses} should not be confused with the previous
\textit{monadic lenses} in \Cref{def:monadiclens}.

\begin{definition}\label{def:algebraiclens}
  Let \(\Psi \colon \C \to \C\) be a \Wt{monad} on a cartesian \Wt{category}
  \(\C\). Let \((\bullet) \colon \C \times \D \to \D\) be a monoidal \Wt{action}
  on an arbitrary \Wt{category} \(\D\). An \textbf{algebraic lens} is an element
  of
\[\defining{linkalgebraiclens}{\ensuremath{\mathbf{AlgLens}}}_\Psi((A,B), (S,T))
  \coloneqq \C(S, A) \times \D(\Psi S \bullet B, T).
\]
\end{definition}

\begin{proposition}
  Algebraic lenses are mixed optics for the
  actions of the product by the carrier of an algebra
  \((_\U \times) \colon \EM_\Psi \times \C \to \C\) and
  \((_\U \bullet) \colon \EM_\Psi \times \D \to \D\). That is,
  \(\AlgLens_\Psi \cong \Optic_{(_\U \times, _\U \bullet)}\).
\end{proposition}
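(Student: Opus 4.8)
The plan is to run the same argument as in the proof for ordinary lenses (Definition~\ref{def:lens}), replacing the diagonal-product adjunction on $\C$ by the free-forgetful adjunction $F \dashv \U \colon \C \to \EM_\Psi$ of the Eilenberg-Moore category. Before computing I would record that the two actions in the statement really exist and are strong monoidal: since $\U \colon \EM_\Psi \to \C$ creates limits, $\EM_\Psi$ inherits a cartesian monoidal \Wt{category} structure $(\EM_\Psi,\times,1)$ for which $\U$ is strong monoidal (compare Proposition~\ref{prop:monoidallenses}), and precomposing with $\U$ the (curried) cartesian product $\C \to [\C,\C]$ and the given action $\C \to [\D,\D]$ gives strong monoidal actions $({}_\U\times) \colon \EM_\Psi \to [\C,\C]$ and $({}_\U\bullet) \colon \EM_\Psi \to [\D,\D]$, which send an algebra $C$ to $\U C \times (-)$ and to $\U C \bullet (-)$ respectively. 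Hence $\Optic_{({}_\U\times,{}_\U\bullet)}((A,B),(S,T))$ is defined and, unfolding the definitions, equals $\int^{C\in\EM_\Psi}\C(S,\U C\times A)\otimes\D(\U C\bullet B,T)$.

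From there it is a short chain of natural isomorphisms. Using that $\C$ is cartesian, $\C(S,\U C\times A)\cong\C(S,\U C)\otimes\C(S,A)$; by the adjunction $F\dashv\U$, $\C(S,\U C)\cong\EM_\Psi(FS,C)$, where $FS = (\Psi S,\mu_S)$ is the free $\Psi$-algebra on $S$; the remaining factor $\C(S,A)$ is constant in $C$ and hence comes out of the coend; and finally, since the functor $C\mapsto\D(\U C\bullet B,T)$ is contravariant in $C$ (because $\U C\bullet B$ is covariant), the enriched coYoneda lemma gives
\[
  \int^{C\in\EM_\Psi}\EM_\Psi(FS,C)\otimes\D(\U C\bullet B,T)\;\cong\;\D(\U(FS)\bullet B,T)\;=\;\D(\Psi S\bullet B,T),
\]
using that $\U(FS) = \Psi S$. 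Composing the chain gives $\Optic_{({}_\U\times,{}_\U\bullet)}((A,B),(S,T))\cong\C(S,A)\otimes\D(\Psi S\bullet B,T) = \mathbf{Lens}_\Psi((A,B),(S,T))$, naturally in $(A,B)$ and $(S,T)$, which is the claim.

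Each step is routine once the setup is fixed, so the real work sits at the two edges rather than in any deep calculation. First, one must verify that $({}_\U\times)$ and $({}_\U\bullet)$ genuinely are strong monoidal \emph{actions}, which is exactly where one uses that $\U$ creates limits and hence preserves finite products and the terminal object. Second, one must keep the variances straight in the coYoneda step, so that $\EM_\Psi(FS,-)$ is the honest enriched representable being integrated against the contravariant functor $\D(\U(-)\bullet B,T) \colon (\EM_\Psi)^{\mathrm{op}} \to \W$. It is worth noting that, in contrast with the monadic lens case (Proposition~\ref{prop:monadiclens}), no strength of $\Psi$ is needed here: the monad enters only through its Eilenberg-Moore adjunction, which is precisely why algebraic lenses and monadic lenses are genuinely different optics.
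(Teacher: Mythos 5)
Your proposal is correct and follows essentially the same route as the paper's proof: expand $\C(S,\U C\times A)$ by the cartesian structure, transpose $\C(S,\U C)\cong\EM_\Psi(\Psi S,C)$ along the free--forgetful adjunction, and eliminate the coend by coYoneda against the contravariant functor $\D(\U(-)\bullet B,T)$. The extra remarks on the strong monoidality of the actions and on the variance in the coYoneda step only make explicit what the paper leaves implicit.
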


\begin{proof}
  The \Wt{category} of algebras is
  cartesian, making the forgetful functor \(\U \colon \EM_\Psi \to \C\) monoidal;
  \(\U(\bullet) \times \bullet\) defines a monoidal action.  The forgetful \(\U\) has a left adjoint
  \(\mathcal{F}^\Psi\) such that \(\U \circ \mathcal{F}^\Psi = \Psi\).
\begin{align*}
  & \int^{C \in \mathbf{\EM_\Psi}}
  \C(S, \U  C \times A) \times \D(\U  C \bullet B, T) \\
  \cong & \hintAdjunction{(\times)}{\Delta} \\
  & \int^{C \in \mathbf{\EM_\Psi}}
  \C(S, \U  C) \times \C(S, A) \times \D(\U  C \bullet B, T) \\
  \cong & \hintAdjunction{\mathcal{F}^\Psi}{\U} \\
  & \int^{C \in \mathbf{\EM_\Psi}}
    \EM_\Psi(\mathcal{F}^\Psi S, C) \times \C(S, A) \times
    \D(\U  C \bullet B, T) \\
  \cong & \hintCoyoneda \\
  & \C(S, A) \times \D(\Psi S \bullet B, T). \qedhere
\end{align*}
\end{proof}

\begin{remark}
  Algebraic lenses are a new optic. When \((\bullet)\) is the cartesian product,
  algebraic lenses are given by the usual \texttt{view} function that accesses
  the subfield, and a variant of the \texttt{update} function that takes
  the original source as a computation rather than a value.
\end{remark}

\begin{example}
  The algebraic lens for the list monad \(\mathcal{L} \colon \V \to \V\) is a
  new kind of optic that we dub a \textit{classifying lens}. This is given by two
  functions, the usual \texttt{view} function that accesses the focus, and a
  \texttt{classify} function that takes a list of examples together with some
  piece of data and produces a new example.
  A classifying lens can be trained with a dataset
  (Figure~\ref{fig:irisdataset}) to classify a new focus into a complete data
  structure. A learning algorithm (in this case, a naive version of
  \emph{nearest neighbor}) defines an algebraic lens that can be used to
  classify foci into full data structures.

\begin{figure}[H]
\centering
  \begin{lstlisting}
 let iris =
 [ Iris Setosa 4.9 3.0 1.4 0.2
 , Iris Setosa 4.7 3.2 1.3 0.2
 , ...
 , Iris Virginica 5.9 3.0 5.1 1.8 ]

 measure :: AlgebraicLens [] Measurements Flower
 measure = mkAlgebraicLens @[] measurements learn
  where
   distance :: Measurements -> Measurements -> Float
   distance (Measurements a b c d) (Measurements x y z w) =
      (sqrt . sum . fmap (**2)) [a-x,b-y,c-z,d-w]

   learn :: [Flower] -> Measurements -> Flower
   learn l m = Flower m (species (minimumBy
     (compare `on` (distance m . measurements)) l))

 >>> (iris !! 4) ^. measure
 (5.0, 3.6, 1.4, 0.2)

 >>> iris & measure .? Measurements 4.8 3.1 1.5 0.1
 Iris Versicolor (4.8, 3.1, 1.5, 0.1)
\end{lstlisting}
\caption{Fisher's \texttt{iris} dataset \cite{fisher36},
  samplying lengths and widths of sepals and petals of three species of iris.
  A classifying lens
  (\texttt{measure}) is used both for accessing the measurements of a point in
  the \hask{iris} dataset and to classify \emph{new} measurements (not already in the dataset) into a species
  (\texttt{Versicolor}).}\label{fig:exampleIris}\label{fig:irisdataset}
\end{figure}
\end{example}

\begin{remark}\label{def:achromatic}\label{def:apochromatic}
  The algebraic lens for the \textit{maybe monad}
  \(\mathcal{M} \colon \V \to \V\) was studied in \cite[\S
  5.2]{boisseau17} under the name \textbf{achromatic lens}. It is motivated by
  the fact that, sometimes in practice, lenses come naturally equipped with a
  \texttt{create} function~\cite[\S 3]{foster05} along the usual \texttt{view}
  and \texttt{update}.
For this implementation, we note that
\[\C(S,A) \times \C(\mathcal{M} S \times B,T) \cong \C(S,A) \times \C(S\times B,T) \times \C(B,T).\]
\end{remark}

\begin{remark}
  The name \textit{achromatic lens} can also refer to a different proposal in \cite[\S 4.10]{riley18},
  \[\C(S, [B,T] + 1) \times \C(S,A) \times \C(B,T).\]
  This is the optic for the action \(\odot \colon \C \to [\C,\C]\) defined by
  \(M \odot A \coloneqq (M + 1) \times A\). It is not exactly equivalent to the
  \emph{achromatic lens} in \cite{boisseau17}, which is the optic for the action
  $(\times) \colon \textrm{\(\mathcal{M}\)-Alg} \to [\C,\C]$ defined by
  \(\U M \times A\). It can be implemented by \texttt{view} and \texttt{create}
  functions, this time together with a \texttt{maybeUpdate} that is allowed to
  fail.
\end{remark}

\subsubsection{Lenses in a closed category}
Linear lenses are a different generalization of lenses which relies on a closed
monoidal structure. Their advantage is that we do not need to require our
enriching category to be cartesian anymore.

\begin{definition}[{{\cite[\S 4.8]{riley18}}}]\label{def:linearlens}
  Let \((\D, \otimes, [\ ,\ ])\) be a right closed \Vt{category} with a monoidal
  \Vt{action} \((\bullet) \colon \D \otimes \C \to \C\) to an arbitrary
  \Vt{category} \(\C\). A \textbf{linear lens} is an element of
  \[
    \defining{linklinearlens}{\ensuremath{\mathbf{LinearLens}}}_{\otimes,\bullet} ((A, B), (S, T)) \cong
    \C(S, [B , T] \bullet A).
  \]
\end{definition}

\begin{proposition}
  Linear lenses are mixed optics (as in \Cref{def:optic}) for the
  actions of the monoidal product \((\otimes) \colon \D \times \D \to \D\) and
  \((\bullet) \colon \D \times \C \to \C\). That is,
  \(\LinearLens_{\otimes,\bullet} \cong \Optic_{\bullet, \otimes}\)
\end{proposition}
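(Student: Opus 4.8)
The plan is to unfold $\Optic_{\otimes,\bullet}$ as a coend over the residual category $\D$ and then collapse it in two steps, paralleling the derivation of ordinary lenses (the proof of the proposition after Definition~\ref{def:lens}) but trading the universal property of the cartesian product for that of the internal hom $\{-,-\}$ of $\D$. Concretely, the residual category is $\D$ itself, acting on $\C$ by the given strong monoidal action $(\bullet)$ and on $\D$ by the left regular action $M \mapsto (M \otimes -)$, which is strong monoidal with structure isomorphisms the associator and left unitor of $\D$; so Definition~\ref{def:optic} applies and
\[
\Optic_{\otimes,\bullet}((A,B),(S,T)) \;=\; \int^{M \in \D} \C(S, M \bullet A) \otimes \D(M \otimes B, T).
\]

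The substantive step is the collapse. Since $\D$ is right closed, the hom-tensor adjunction $(-\otimes B) \dashv \{B,-\}$ supplies a $\V$-natural isomorphism $\D(M \otimes B, T) \cong \D(M, \{B,T\})$, so the coend becomes $\int^{M \in \D} \C(S, M \bullet A) \otimes \D(M, \{B,T\})$. The $\V$-functor $M \mapsto \C(S, M \bullet A)$ is covariant (it is $\C(S,-)$ postcomposed with $(-)\bullet A$), hence the enriched co-Yoneda lemma in the form $\int^{M} \D(M, X) \otimes G M \cong G X$ applies with $X = \{B,T\}$ and gives $\C(S, \{B,T\} \bullet A)$, which is exactly $\Lens_{\{\}}((A,B),(S,T))$. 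Putting the two isomorphisms together,
\[
\int^{M \in \D} \C(S, M \bullet A) \otimes \D(M \otimes B, T) \;\cong\; \int^{M \in \D} \C(S, M \bullet A) \otimes \D(M, \{B,T\}) \;\cong\; \C(S, \{B,T\} \bullet A),
\]
and both isomorphisms are natural in $A,B,S,T$ because each is assembled from an adjunction and the density formula; this yields $\Lens_{\{\}} \cong \Optic_{\otimes,\bullet}$.

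I do not expect a genuine obstacle: the calculation is two lines, and its only delicate points are invoking the hom-tensor adjunction of $\D$ in its enriched form (legitimate since $\D$ is a closed $\V$-category, not merely closed on underlying sets) and applying co-Yoneda with the correct variance. As in the discussion of set-theoretic size in the Setting, one should also observe that, although $\D$ need not be small, the coend nonetheless exists since the computation exhibits it as represented by the object $\{B,T\} \bullet A$ of $\C$, so no cocompleteness of the base is required.
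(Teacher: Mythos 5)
Your proposal is correct and follows essentially the same route as the paper's proof: unfold the coend over the residual $\D$, use right-closedness to rewrite $\D(M\otimes B,T)\cong\D(M,\{B,T\})$, and collapse by coYoneda to obtain $\C(S,\{B,T\}\bullet A)$. The extra remarks on the enriched hom-tensor adjunction, variance, and size are consistent with the paper's setting and do not change the argument.
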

\begin{proof}
  The monoidal product has a right adjoint given by the exponential.
  \begin{align*}
    & \int^{D \in \D} \C(S,D \bullet A) \otimes \D(D \otimes B, T) \\
    \cong & \hintAdjunction{(- \otimes B)}{[B,-]} \\
    & \int^{D \in \D} \C(S , D \bullet A) \otimes \D(D , [B , T]) \\
    \cong & \hintCoyoneda \\
    & \C(S , [B , T] \bullet A). \qedhere
  \end{align*}
\end{proof}

\subsubsection{Prisms}

Prisms pattern-match on data structures and handle a possible failure to
match. They are given by a \texttt{match} function that tries to access the
matched structure and a \texttt{build} function that constructs an abstract type
from one of its possible matchings.
Prisms happen to be lenses in the opposite category. However, they can also be
described as optics in the original category for a different pair of actions. We
will provide a derivation of prisms that is dual to our derivation of lenses for
a cartesian structure. This derivation specializes to the pair
\(\C(S,T+A) \times \C(B,T)\).
\begin{definition}\label{def:prism}
  Let \(\D\) be a cocartesian \Wt{category} with a monoidal \Wt{action}
  \((\bullet) \colon \D \times \C \to \C\) to an arbitrary category \(\C\). A
  \textbf{prism} is an element of
  \[\defining{linkprism}{\ensuremath{\mathbf{Prism}}}((A, B), (S,T)) \coloneqq \C(S , T \bullet A) \times \D(B , T).
  \]
  In other words, a prism from \((S,T)\) to \((A,B)\) is a lens from \((T,S)\)
  to \((B,A)\) in the opposite categories \(\D^{op}\) and \(\C^{op}\). However,
  they can also be seen as optics from \((A,B)\) to \((S,T)\).
\end{definition}

\begin{proposition}
  Prisms are mixed optics (as in \Cref{def:optic}) for the actions of
  the coproduct \((+) \colon \D \times \D \to \D\) and
  \((\bullet) \colon \C \times \D \to \D\). That is,
  \(\Prism \cong \Optic_{(\bullet,+)}\).
\end{proposition}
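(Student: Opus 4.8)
The plan is to mirror the proof of the cartesian-lens proposition, replacing the product by the coproduct and running the co-Yoneda step on the other side. Conceptually this is forced: Definition~\ref{def:prism} already records that a prism in \((\C,\D)\) is literally a lens in \((\D^{op},\C^{op})\), and \(\Optic\) transports across opposite categories, \(\Optic_{\actL{}{},\actR{}{}}^{\C,\D}((A,B),(S,T)) \cong \Optic_{\actR{}{},\actL{}{}}^{\D^{op},\C^{op}}((B,A),(T,S))\), simply by flipping the two hom-objects inside the coend. So one could just invoke the previous proposition applied to \((\D^{op},\C^{op})\). However, the direct coend computation is short, so I would spell it out.

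First I would fix the ambient data. The relevant monoidal category is \((\D,+,0)\) with its cocartesian structure (finite coproducts always assemble into a symmetric monoidal structure), acting on \(\D\) on one side by \(M \mapsto (- + M)\) and on \(\C\) on the other by the given action \(({\bullet})\), \(M \mapsto (M \bullet -)\). The first action is strong monoidal by the universal property of coproducts; the second is strong monoidal by hypothesis. Hence \(\Optic_{(\bullet,+)}((A,B),(S,T))\) is well-defined and equals \(\int^{M\in\D} \C(S, M\bullet A)\times\D(M + B, T)\).

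Then comes the one genuine computation, which parallels the lens proof step by step. Since \(\D\) is cocartesian, the coproduct is left adjoint to the diagonal, so \(\D(M + B, T) \cong \D(M, T)\times\D(B, T)\) naturally in all three variables; substituting and pulling the constant factor \(\D(B,T)\) out of the coend leaves \(\D(B,T)\times\int^{M\in\D}\C(S, M\bullet A)\times\D(M,T)\). Now apply co-Yoneda: the functor \(M \mapsto \C(S, M\bullet A)\) is covariant in \(M\), being the composite of \(({\bullet})\) with a hom-functor, so \(\int^{M\in\D}\C(S, M\bullet A)\times\D(M,T) \cong \C(S, T\bullet A)\). This gives \(\C(S, T\bullet A)\times\D(B,T) = \mathbf{Prism}((A,B),(S,T))\), and one checks as usual that the isomorphism is natural in \((A,B),(S,T)\).

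I do not expect a real obstacle; the content is the universal property of the coproduct together with co-Yoneda, exactly as in the cartesian-lens case, only dualized. The single point requiring care is the bookkeeping of variances: unlike the cartesian lens, where the product action is effectively two-sided, here the action \(({\bullet})\) is applied on one side only, so one must confirm that \(M \mapsto \C(S, M\bullet A)\) is covariant before invoking co-Yoneda in the form \(\int^{M} F(M)\times\D(M,T)\cong F(T)\). It is also worth making explicit that ``cocartesian'' means ``has finite coproducts'' and that this is precisely what turns \((+)\colon \D\times\D\to\D\) into a strong monoidal action of \((\D,+,0)\) on itself. As a sanity check, specializing \(\C = \D\) with \({\bullet} = +\) recovers the pair \(\C(S, T + A)\times\C(B,T)\) announced before the definition.
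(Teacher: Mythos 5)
Your proposal is correct and follows essentially the same route as the paper: eliminate the coproduct via its adjunction with the diagonal, pull out the constant factor \(\D(B,T)\), and apply co-Yoneda to obtain \(\C(S, T\bullet A)\times\D(B,T)\). The additional remarks on variance and on the duality with lenses are sound but not needed beyond the paper's two-step coend computation.
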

\begin{proof}
  The coproduct \((+) \colon \D \times \D \to \D\) is left adjoint to the diagonal
  functor.
  \begin{align*}
  & \int^{D \in \D} \C(S,D \bullet A) \times \D(D + B, T) \\
  \cong & \hintAdjunction{(+)}{\Delta} \\
  & \int^{D \in \D} \C(S , D \bullet A) \times \D(D , T) \times \D(B , T) \\
  \cong & \hintCoyoneda \\
  & \C(S , T \bullet A) \times \D(B , T). \qedhere
  \end{align*}
\end{proof}
\begin{remark}[Prisms in a symmetric monoidal category]
  A prism in a symmetric monoidal category $\C$ is a match and build pair where
  the build is a monoid homomorphism,
  \[\mathbf{MonPrism}((A, B), (S, T)) \coloneqq \C(S,\U T \otimes A) \times \CMon(B,T).\]
\end{remark}

\begin{remark}[Prisms with coalgebraic context]\label{remark:coalgebraicprism}
  Let \(\Theta\) be a \Wt{comonad} in a cocartesian \Wt{category} \(\D\) and
  \((\bullet) \colon \D \times \C \to \C\) a monoidal \Wt{action}. A \textbf{coalgebraic prism} is an element of
  \[\mathbf{AlgPrism}_\Theta((A, B), (S, T))
    \coloneqq \C(S, \Theta T \bullet A) \times \D(B, T).\]
  The coalgebraic
  variant is given by a \texttt{cMatch} function, that captures the failure into
  a comonad, and the usual \texttt{build} function.
\end{remark}

\subsection{Traversals}

Traversals extract the elements of a finitary container \cite{oconnor15} into an ordered list, allowing us
to iterate over the elements without altering the container
(Figure~\ref{fig:exampleTraversal}). Traversals are constructed from a single
\texttt{extract :: s -> ([a], [b] -> t)} function that both outputs the elements
and takes a new list to update them. Usually, we require the length of the input to
the function of type \hask{[b] -> t} to be the same as the length of \hask{[a]}.
This restriction can be also encoded into the type when dependent types are
available.

\begin{figure}[H]
\centering
  \begin{lstlisting}
   let places =
     [ "43 Adlington Rd, Wilmslow, United Kingdom"
     , "26 Westcott Rd, Princeton, USA"
     , "St James's Square, London, United Kingdom" ]

   each :: Traversal a [a]
   each = mkTraversal (\x -> (x , id))

   >>> places & each.asAddress.street %
    [ "43 ADLINGTON RD, Wilmslow, United Kingdom"
    , "26 WESTCOTT RD, Princeton, USA"
    , "ST JAMES'S SQUARE, London, United Kingdom"
    ]
\end{lstlisting}    
\caption{The composition of a traversal (\texttt{each}) with a prism
  (\texttt{asAddress}) and a lens (\texttt{street}) is used to parse a collection of
  strings and modify one of the resulting subfields.}\label{fig:exampleTraversal}
\end{figure}

\begin{definition}
\label{def:traversal}
  Let $\mathbf{C}$ be a symmetric monoidal closed \Vt{category} with countably
  infinite coproducts.
  A \textbf{traversal} is an element of
  \[\defining{linktraversal}{\ensuremath{\mathbf{Traversal}}}((A, B), (S, T)) \coloneqq
    \C\left(S , \int^{n \in \mathbb{N}} A^n \otimes [B^n, T] \right).
  \]
\end{definition}

\begin{remark}
  Let \((\mathbb{N},+)\) be the free strict monoidal \Vt{category} on one object.
  Ends and coends indexed by \(\mathbb{N}\) coincide with products and
  coproducts, respectively. Here \(A^{(-)} \colon \mathbb{N} \to \C\) is the
  unique monoidal \Vt{functor} sending the generator of \(\mathbb{N}\) to
  \(A \in \C\). Each functor \(X \colon \mathbb{N} \to \C\) induces a
  \textit{power
    series} \[\defining{linkpw}{\ensuremath{\operatorname{Pw}}}_X(A) = \int\nolimits^{n \in \mathbb{N}} A^n \otimes X_{n}.\]
  This defines an action
  \(\Pw \colon [\mathbb{N}, \C] \to [\C,\C]\) sending the indexed
  family to its power series\ifdefined\EXTENDEDABSTRACT\else\ (see Remark~\ref{remark:series})\fi. We propose a
  derivation of the \emph{traversal} as the optic for power series.
\end{remark}

\begin{proposition}\label{prop:traversal}
  Traversals are optics (as in \Cref{def:optic}) for power series.
  That is, \(\Traversal \cong \Optic_{\Pw,\Pw}\).
\end{proposition}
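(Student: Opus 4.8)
The goal is to show $\mathbf{Traversal}((A,B),(S,T)) \cong \Optic_{\operatorname{Pw},\operatorname{Pw}}((A,B),(S,T))$, where the right-hand side unfolds by Definition~\ref{def:optic} as
\[
\int^{C \in [\mathbb{N},\C]} \C\!\left(S, \operatorname{Pw}_C(A)\right) \otimes \C\!\left(\operatorname{Pw}_C(B), T\right).
\]
The plan is to manipulate this coend over the functor category $[\mathbb{N},\C]$ down to the single hom-object $\V\!\left(S, \int^{n} A^n \otimes [B^n, T]\right)$ of Definition~\ref{def:traversal}, using exactly the same pattern that drives the proofs of Propositions on lenses, algebraic lenses, and prisms: identify a representable piece inside the coend, then apply coYoneda to contract the coend. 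The key observation to exploit is that $\operatorname{Pw}_{(-)}(B) \colon [\mathbb{N},\C] \to \C$ has a right adjoint, so that $\C(\operatorname{Pw}_C(B), T)$ becomes a hom-object in $[\mathbb{N},\C]$ of the form $[\mathbb{N},\C](C, G(T))$ for a suitable $G$; coYoneda over $[\mathbb{N},\C]$ then collapses the coend and substitutes $C := G(T)$.

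**The main steps, in order.**

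First I would compute the right adjoint to the power-series functor $\operatorname{Pw}_{(-)}(B)$. Since $\operatorname{Pw}_C(B) = \int^{n} B^n \otimes C_n$ is a coend (hence a colimit) in $C$, and $\C(-,T)$ sends colimits to limits, we get
\[
\C\!\left(\operatorname{Pw}_C(B), T\right) \cong \int_{n} \C(B^n \otimes C_n, T) \cong \int_{n} \C\!\left(C_n, [B^n, T]\right),
\]
using that $\C$ is $\V$-enriched and tensored so that $\C(B^n \otimes -, T) \cong \C(-, [B^n,T])$; here $[B^n,T]$ is the cotensor/internal hom. The right-hand side is precisely the hom $[\mathbb{N},\C]\big(C,\, [B^{(-)},T]\big)$ of the functor category, where $[B^{(-)},T] \colon \mathbb{N} \to \C$ is the functor $n \mapsto [B^n,T]$. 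Next I substitute this into the optic coend:
\[
\int^{C} \C\!\left(S, \operatorname{Pw}_C(A)\right) \otimes [\mathbb{N},\C]\big(C, [B^{(-)},T]\big),
\]
and apply coYoneda over $[\mathbb{N},\C]$ to contract the coend, yielding $\C\!\left(S,\, \operatorname{Pw}_{[B^{(-)},T]}(A)\right)$. Finally, unfolding the power series $\operatorname{Pw}_{[B^{(-)},T]}(A) = \int^{n} A^n \otimes [B^n,T]$ and noting the base hom is $\V(S,-)$ (or $\C(S,-)$, identified via the enrichment since the expression lives in $\V$), gives exactly $\mathbf{Traversal}((A,B),(S,T))$.

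**Anticipated obstacle.**

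The routine parts are the adjunction computation and the two coYoneda reductions. The step that needs genuine care is justifying that $\operatorname{Pw}_{(-)}(B)$ really is $\V$-cocontinuous in the indexing functor $C$ in the strong sense needed for coYoneda over $[\mathbb{N},\C]$ — i.e.\ that the isomorphism $\C(\operatorname{Pw}_C(B),T) \cong [\mathbb{N},\C](C,[B^{(-)},T])$ is $\V$-natural in $C$, not merely natural in the underlying ordinary sense. This hinges on $A^{(-)}$ and $B^{(-)}$ being the unique strong monoidal $\V$-functors out of the free strict monoidal $\V$-category $\mathbb{N}$ on one object, and on the interchange of the coend $\int^n$ defining $\operatorname{Pw}$ with the enriched hom; both facts are available from the Setting and from the Remark preceding the proposition, but spelling out the $\V$-naturality is where the real work sits. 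A secondary subtlety worth a sentence is the size issue flagged in the Setting: the coend is indexed by the possibly-large category $[\mathbb{N},\C]$, so one should note (following Riley) that the result is nonetheless represented by an honest object of $\V$, which is exactly what the final formula exhibits.
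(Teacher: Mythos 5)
Your proposal is correct and follows essentially the same route as the paper's proof: rewrite \(\V(\operatorname{Pw}_C(B),T)\) via continuity and currying as the end \(\int_n \V(C_n,[B^n,T])\), recognize it as the hom-object \([\mathbb{N},\V](C,[B^{(-)},T])\), and contract the coend over \([\mathbb{N},\V]\) by coYoneda to land on \(\V(S,\int^n A^n\otimes[B^n,T])\). The packaging as "\(\operatorname{Pw}_{(-)}(B)\) has a right adjoint" is just a conceptual restatement of those same steps (the paper likewise frames the derivation as generalizing that of linear lenses), so there is nothing substantively different to compare.
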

\begin{proof}
  The derivation generalizes that of linear lenses (\Cref{def:linearlens}).

\noindent
\resizebox{\linewidth}{!}{
\begin{minipage}{\linewidth}
\begin{align*}
& \int^{X  \in [ \mathbb{N} , \C]}
  \C\left(S , \int^{n \in \mathbb{N}} A^n \otimes X_n\right) \otimes
  \C\left(\int^{n \in \mathbb{N}} B^n \otimes X_n , T\right) \\
\cong & \hintContinuity \\
& \int^{X  \in [ \mathbb{N} , \C]} \C \left(S , \int^{n \in \mathbb{N}}  A^n \otimes X_n \right) \otimes
\int_{n \in \mathbb{N}} \C\left( X_n \otimes B^n , T\right) \\
\cong & \hintAdjunction{(- \otimes B^{n})}{[B^{n},-]} \\
& \int^{X \in [ \mathbb{N} , \C]} \C\left(S , \int^{n \in \mathbb{N}}  A^n \otimes X_n \right) \otimes
\int_{n \in \mathbb{N}} \C\left( X_n , [B^n , T]\right) \\
\cong & \hintNaturalTransformation \\
& \int^{X \in [ \mathbb{N} , \C]} \C\left(S , \int^{n \in \mathbb{N}} A^n \otimes X_n  \right) \otimes
  [ \mathbb{N} , \C ] \left(X_{(-)} , [B^{(-)} , T]\right) \\
\cong & \hintCoyoneda \\
& \C \left(S , \int^{n \in \mathbb{N}} A^n \otimes [B^n, T] \right). \qedhere
\end{align*}
\end{minipage}}
\end{proof}

The derivation from the general definition of optic to the concrete description
of lenses and prisms in functional programming was first described by
\cite{pickering17} and \cite{milewski17}, but finding a derivation of
traversals like the one presented here, fitting the same elementary pattern as
lenses or prisms, was left as an open problem. It should be noted, however, that
derivations of the traversal as the optic for a certain kind of functor called
\textbf{Traversables} (which should not be confused with traversals themselves)
have been previously described by \cite{boisseau18,riley18}. For a derivation
using Yoneda, \cite{riley18} recalls a parameterised adjunction that has an
equational proof in the work of \cite{jaskelioff15}. These two derivations do
not contradict each other: two different classes of functors can generate the
same optic; if, for instance, the adjunction describing both of them gives rise
to the same monad. This seems to be the case here: traversables are coalgebras
for a comonad and power series are the cofree coalgebras for the same
comonad~\cite[\S 4.2]{roman19}.

In the \(\mathbf{Sets}\mbox{-based}\) case, the relation between traversable functors, applicative functors~\cite{mcbride08} and these power series functors has been studied by \cite{rypacek12}.

\begin{remark}\label{remark:series}
  The \Vt{functor} \(\Pw \colon [\mathbb{N},\C] \to [\C,\C]\) is
  actually a monoidal action thanks to the fact that two power series
  functors compose into a power series functor.
\begin{align*}
  &\int^{m \in \mathbb{N}} \left( \int^{n \in \mathbb{N}} A^n \otimes C_n \right)^m \otimes D_m \\
  \cong & \hint{Product distributes over colimits} \\
  &\int^{m} \int^{n_1,\dots,n_m} A^{n_1} \otimes \dots \otimes A^{n_m}
   \otimes C_{n_1} \otimes \dots \otimes C_{n_m} \otimes D_m \\
  \cong & \hint{Rearranging terms} \\
  &\int^{k \in \mathbb{N}} A^k \otimes \left( \sum_{n_1 + \dots + n_m = k}
  C_{n_1} \otimes \dots \otimes C_{n_m} \otimes D_m \right).
\end{align*}
We are then considering an implicit non-symmetric monoidal structure where the
monoidal product
\((\bullet) \colon [\mathbb{N},\C] \otimes [\mathbb{N},\C] \to [\mathbb{N},\C]\)
of \(C_n\) and \(D_n\) can be written as follows; and the relevant copowers exist because
of $\C$ having an initial object.
\[\int^{m} \int^{n_1,\dots , n_m}
  \mathbb{N}(n_1 + \dots + n_m, k) \cdot
  C_{n_1} \otimes \dots \otimes C_{n_m} \otimes D_m.
\]
This is precisely the structure described by Kelly \cite[\S 8]{kelly:operads05} for the
study of non-symmetric operads. A similar monoidal structure is described there
when we substitute \(\mathbb{N}\) by \((\mathbb{P},+)\), the
\Vt{\textit{category of permutations}} defined as the free strict symmetric
monoidal category on one object. The same derivation can be repeated with this
new structure to obtain an optic similar to the traversal, with the difference that
elements are not ordered explicitly, and given by
\[\C \left(S , \int^{n \in \mathbb{P}} A^n \otimes [B^n, T] \right).\]
\end{remark}
 
\subsubsection{Affine traversals}

\emph{Affine traversals} strictly generalize prisms and linear lenses in the
non-mixed case allowing a \textit{lens-like} accessing pattern to fail; they are
used to give a concrete representation of the composition between a lens and a
prism. An affine traversal is implemented by a single \texttt{access} function.

\begin{definition}\label{def:affine}
  Let \(\W\) be cartesian closed and let \(\C\) be a symmetric monoidal closed
  \Wt{category} that is also cocartesian. An \textbf{affine traversal} is an
  element of
  \[\defining{linkaffine}{\ensuremath{\mathbf{Affine}}}_{\otimes}\left( (A, B), (S, T) \right)
  := \mathbf{C}(S , T + A \otimes [B,T]).\]
\end{definition}

\begin{proposition}
  Affine traversals are optics (as in \Cref{def:lens}) for the action
  \((+\otimes) \colon \C^2 \to [\C,\C]\) that sends \(C, D \in \C\) to the
  functor \(C + D \otimes (-)\). That is,
  \(\Affine_{\otimes} \cong \Optic_{(+\otimes),(+\otimes)}\).
\end{proposition}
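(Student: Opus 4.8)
The plan is to mimic the derivations of linear lenses and of the traversal (Proposition~\ref{prop:traversal}) almost verbatim, since the action $(+\otimes)$ pairs a coproduct summand with a tensor factor and each of these has a right adjoint. Starting from the coend defining $\Optic_{(+\otimes),(+\otimes)}((A,B),(S,T))$, the integrand is
\[
\C\big(S,\ (C + D \otimes A)\big) \otimes \C\big((C + D \otimes B),\ T\big),
\]
where the coend is taken over $(C,D) \in \C \times \C$. First I would split the second factor using the universal property of the coproduct: $\C(C + D\otimes B, T) \cong \C(C,T) \otimes \C(D\otimes B, T)$. Then I would apply closedness of $\C$ (it is symmetric monoidal closed) to rewrite $\C(D \otimes B, T) \cong \C(D, \{B,T\})$. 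At this point the integrand is
\[
\C\big(S,\ C + D\otimes A\big) \otimes \C(C,T) \otimes \C\big(D, \{B,T\}\big),
\]
and I would run the coYoneda lemma twice — once in the variable $C$, collapsing $\int^{C}\C(S,C + D\otimes A)\otimes \C(C,T)$ down to $\C(S, T + D\otimes A)$, and once in the variable $D$, collapsing $\int^{D}\C(S, T + D\otimes A)\otimes \C(D,\{B,T\})$ down to $\C(S, T + \{B,T\}\otimes A)$. Up to the symmetry of $\otimes$ this is $\C(S, T + A\otimes\{B,T\}) = \mathbf{Affine}_{\otimes}((A,B),(S,T))$, which is the claimed isomorphism.

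The one subtlety worth checking carefully is that coYoneda in the variable $C$ is being applied to a functor of $C$ that is genuinely covariant and of the right shape: $\C(S, C + D\otimes A)$ is covariant in $C$ because $(-) + D\otimes A \colon \C \to \C$ is a functor and post-composition with $\C(S,-)$ preserves variance, so $\int^{C}\C(C,T)\otimes \C(S, C + D\otimes A) \cong \C(S, T + D\otimes A)$ is exactly the density formula. The same remark applies to the $D$-variable step, using that $(-)\otimes A$ and $(-)\otimes B$ are functors and that $\{B,-\}$ lands covariantly. One should also note that this is a genuinely \emph{non-mixed} optic: both actions are the same functor $(+\otimes)\colon \C^2 \to [\C,\C]$ into a single category $\C$, so there is no $\D$ to keep track of, and the hypothesis that $\W$ is cartesian closed is only needed so that $(+\otimes)$ is a $\W$-enriched strong monoidal action (it is not used in the coend manipulation itself).

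The main obstacle, such as it is, is purely bookkeeping: one must confirm that $(+\otimes)\colon \C^2 \to [\C,\C]$ really is a strong monoidal action of the monoidal category $(\C^2, \oplus)$ — i.e. that there are coherent isomorphisms $(C_1 + D_1\otimes(-)) \circ (C_2 + D_2\otimes(-)) \cong (C_1 + D_1\otimes C_2) + (D_1\otimes D_2)\otimes(-)$ and a unit iso $(0 + 1\otimes(-)) \cong \Id$ — so that Definition~\ref{def:optic} applies and the coend is over the correct indexing category. This uses distributivity of $\otimes$ over $+$ in $\C$ (which holds because $\C$ is monoidal closed, hence $(-)\otimes X$ preserves all colimits including coproducts) together with associativity of $\otimes$; it is routine but is the only place where the hypotheses on $\C$ are really exercised. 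Once that is in place the coend computation above is entirely mechanical, following exactly the template of the linear-lens proof with one extra coYoneda step.
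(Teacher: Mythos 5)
Your proposal is correct and follows essentially the same route as the paper: split the reconstruction hom-object along the coproduct, use closedness to rewrite $\C(D\otimes B,T)\cong\C(D,\{B,T\})$, and collapse the two coend variables by coYoneda (with an implicit use of symmetry at the end), exactly as in the paper's derivation, which likewise rests on the monoidal product distributing over the coproduct to make $(+\otimes)$ a strong monoidal action. Your explicit remarks on verifying the action structure and on where symmetry enters are finer-grained than the paper's two-line proof but do not constitute a different argument.
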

\begin{proof}
  The action uses that the monoidal product, which is in this case a left
  adjoint, distributes over the coproduct.
  \begin{align*}
  & \int^{C,D} \C(S , C + D \otimes A) \times \C(C + D \otimes B, T) \\
   \cong & \hintAdjunction{(+)}{\Delta} \\
  & \int^{C,D} \C(S , C + D \otimes A) \times \C(C , T) \times \C(D \otimes B, T) \\
    \cong & \hintCoyoneda
            \hintAdjunction{(- \otimes B)}{[B,-]} \\
  & \int^D \C(S , T + D \otimes A) \times \C(D , [B, T]) \\
  \cong & \hintCoyoneda \\
  & \C(S , T + A \otimes [B,T]). \qedhere
  \end{align*}
\end{proof}

\subsubsection{Kaleidoscopes}

\emph{Applicative} functors are commonly used in functional programming as
they provide a convenient generalization to monads with better properties for
composition; they form the \Vt{category} \(\defining{linkapp}{\ensuremath{\mathbf{App}}}\) of monoids with
respect to Day convolution \((\ast) \colon [\V,\V] \times [\V,\V] \to [\V,\V]\),
which is defined as
\[
  (F \ast G)(A) = \int^{X,Y \in \V} \V(X \otimes Y, A) \otimes FX \otimes GY.
\]
Alternatively, they are lax monoidal \Vt{functors} for the cartesian structure.
It is natural to ask what is the optic associated with applicative functors. We know
from a basic result in category theory~\cite[\S VII, Theorem 2]{maclane78} that,
as the category \([\V,\V]\) has coproducts indexed by the natural numbers, and Day convolution
distributes over them, the free applicative functor can be computed as the
colimit \((I + F + F^{\ast 2} + F^{\ast 3} + \dots)\). Having characterized free
applicative functors, computing their associated optic amounts to an application
of coend calculus.

\begin{definition}\label{def:kaleidoscope}
  A \textbf{kaleidoscope} is an element of
\[\mathbf{Kaleidoscope}\left((A, B), (S, T)\right) :=
  \prod_{n \in \mathbb{N}} \V \left([A^n, B] , [S^n, T] \right).
\]
\end{definition}

\begin{proposition}
  Kaleidoscopes are optics for the action of applicative functors.
\end{proposition}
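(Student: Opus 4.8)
The plan is to mimic the derivations of the traversal (Proposition~\ref{prop:traversal}) and of the affine traversal, starting from the coend defining \(\Optic_{\mathrm{App},\mathrm{App}}\) and simplifying it using the fact that the free applicative functor on \(F\) is \(I + F + F^{\ast 2} + \dots\), together with a coYoneda reduction. Concretely, \(\mathbf{App}\) is the category of monoids for Day convolution, and the forgetful \Vt{functor} \(\U \colon \mathbf{App} \to [\V,\V]\) has a left adjoint \(\mathsf{Free}\) computed as the countable coproduct above. The action in play sends an applicative functor \(F\) to its underlying endofunctor acting on \(\V\); so I would first write
\[
\mathbf{Optic}_{\mathrm{App},\mathrm{App}}((A,B),(S,T)) \;=\;
\int^{F \in \mathbf{App}} \V(S, FA) \otimes \V(FB, T),
\]
and then replace \(\V(S,FA)\) by a hom in \([\V,\V]\) out of a representable via Yoneda, so as to bring the free--forgetful adjunction into position.

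The key steps, in order, are: (1) rewrite \(\V(S,FA)\) as \([\V,\V](\V(A,-)\otimes S, UF)\) — i.e.\ the copower \(\V(A,-)\pitchfork S\) evaluated, using that \(FA \cong [\V,\V](\V(A,-),F)\) by Yoneda in \([\V,\V]\) and adjoining the copower by \(S\); (2) apply the free--forgetful adjunction \(\mathbf{App}(\mathsf{Free}(\V(A,-)\otimes S), F) \cong [\V,\V](\V(A,-)\otimes S, UF)\); (3) use coYoneda over \(\mathbf{App}\) to collapse the coend, obtaining \(\V(\mathsf{Free}(\V(A,-)\otimes S)(B), T)\); (4) expand \(\mathsf{Free}\) as \(I + F_0 + F_0^{\ast 2}+\dots\) with \(F_0 = \V(A,-)\otimes S\), compute the \(n\)-th Day-convolution power \(F_0^{\ast n}\) explicitly — \(F_0^{\ast n}(B) \cong \V(A^n, B)\otimes S^n\) since \((\V(A,-)\otimes S)^{\ast n} = \V(A^{\otimes n},-)\otimes S^{\otimes n}\) by associativity of Day convolution — and hence get \(\mathsf{Free}(F_0)(B) \cong \coprod_{n} \V(A^n,B)\otimes S^n\); (5) push \(\V(-,T)\) inside the coproduct, turning it into a product, and currying each factor to land on \(\prod_{n\in\mathbb{N}} \V([A^n,B],[S^n,T])\), which is the stated object.

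The main obstacle I expect is step (4): getting the Day-convolution powers of \(\V(A,-)\otimes S\) right, and in particular checking that the free-applicative formula \(I + F + F^{\ast2}+\dots\) applies verbatim in the enriched, possibly non-cartesian \(\V\)-setting (the paper invokes \cite[\S VII, Theorem 2]{maclane78}, but one must be careful that Day convolution genuinely distributes over the countable colimit and that the monoid structure is the free one). A secondary subtlety is matching variances and the \(\mathbf{App}\)-versus-\([\V,\V]\) bookkeeping in steps (1)--(2): the action is by the underlying endofunctor, so one should be explicit that the relevant coend is over \(\mathbf{App}\) while the Yoneda and free-forgetful manipulations happen one level down in \([\V,\V]\). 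Once those are pinned down, steps (3) and (5) are routine coYoneda and currying, exactly as in the traversal derivation.
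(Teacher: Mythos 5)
Your proposal is correct and follows essentially the same route as the paper's proof: rewrite \(\V(S,\U FA)\) via Yoneda and the exponential as the object of natural transformations \([\V,\V]([A,-]\otimes S, \U F)\), pass across the free--forgetful adjunction for applicatives, collapse the coend over \(\App\) by coYoneda, expand the free applicative on \([A,-]\otimes S\) as the sum of Day-convolution powers \([A^n,-]\otimes S^n\), and finish by continuity and currying to reach \(\prod_{n}\V([A^n,B],[S^n,T])\). The only (cosmetic) difference is that you apply coYoneda before expanding \(\mathsf{Free}\) while the paper inserts the explicit sum first, and your explicit computation of the Day powers makes precise a step the paper leaves implicit.
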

\begin{proof}
  Let \(\U \colon \App \to [\V,\V]\) be the forgetful functor from the category
  of applicatives.
\begin{align*}
& \int^{F \in \App} \V(S, \U FA) \otimes \V(\U FB,T)
\\ \cong & \hintYoneda \\
& \int^{F \in \App}  \V\left(S, \int_{C \in \V} \big[ [A,C],\U FC \big] \right) \otimes \V(\U FB,T)
\\ \cong & \hintContinuity \\
& \int^{F \in \App} \int_{C} \V\left(S, \big[ [A,C],\U FC\big] \right) \otimes \V(\U FB,T)
\\ \cong & \hintAdjunction{([A,C] \otimes -)}{\big[ [A,C],-\big]} \\
& \int^{F \in \App} \left( \int_{C} \V\left([A,C] \otimes S, \U FC \right) \right) \otimes \V(\U FB,T)
\\ \cong & \hintNaturalTransformation \\
& \int^{F \in \App} [\V, \V]\left([A,-] \otimes S, \U F \right) \otimes \V(\U FB,T)
\\ \cong & \hint{Adjunction of free applicatives} \\
& \int^{F \in \App} \App\left(\sum_{n\in \mathbb{N}} [A^n , -] \otimes S^n , F \right) \otimes \V(\U FB,T)
\\ \cong & \hintCoyoneda \\
& \V \left(\sum_{n \in \mathbb{N}} S^n \otimes [A^n , B],T \right)
\\ \cong & \hintContinuity \hintAdjunction{(S^{n} \otimes -)}{[S^{n},-]} \\
& \prod_{n \in \mathbb{N}} \V \left([A^n , B],  [S^n, T] \right). \qedhere
\end{align*}
\end{proof}

\begin{remark}
  The free applicative we construct here is known in Haskell programming as the
  \texttt{FunList} applicative \cite{laarhoven09b}.
  In the same way that traversables are written in
  terms of lists, we can approximate Kaleidoscopes as a single function
  \texttt{aggregate :: ([a] -> b) -> ([s] -> t)} that takes a folding for the foci and outputs a folding for
  the complete data structure. Kaleidoscopes are a new optic, and we propose to
  use them as accessors for pointwise foldable data structures.
\end{remark}

Kaleidoscopes pose a problem on the side of applications: they cannot be
composed with lenses to produce new kaleidoscopes. This is because
the constraint defining them (\hask{Applicative}) is not a
superclass of the constraint defining lenses: a functor given by a product is
not applicative, in general. However, a functor given by a product \emph{by a
  monoid} is applicative. This means that applicatives can be composed with
lenses whose residual is a monoid, which are precisely the newly defined
\emph{classifying lenses} (\Cref{def:algebraiclens}).

\begin{figure}[H]
\centering
  \begin{lstlisting}
 representative :: Kaleidoscope Float Measurements
 representative = mkKaleidoscope aggregate
  where
   aggregate f l = Measurements
    (f (fmap sepalLe l)) (f (fmap sepalWi l))
    (f (fmap petalLe l)) (f (fmap petalWi l))

 iris & measure.representative >- mean
  >>> Iris Versicolor; Sepal (5.843, 3.054); Petal (3.758, 1.198)
\end{lstlisting}
\caption{Following the previous Example~\ref{fig:exampleIris}, a kaleidoscope (\texttt{representative}) is composed with an algebraic lens to create a new point in the dataset by aggregating measurements with some function (\texttt{mean}, \texttt{maximum}) and then classifying it.}\label{fig:exampleKaleidoscope}
\end{figure}

\subsection{Grates}

\emph{Grates} create a new structure when provided with a way of creating a new focus from
a view function. They are given by a single \texttt{grate} function with the form
of a nested continuation.

\begin{definition}
\label{def:grate}
  Let \(\C\) be a symmetric closed \Vt{category} (as in~\cite{day78}). Let
  \((\bullet) \colon \C \times \D \to \D\) be an arbitrary action. A
  \textbf{grate} is an element of
\[\defining{linkgrate}{\ensuremath{\mathbf{Grate}}}\left((A, B), (S, T)\right) := \D([S,A] \bullet B, T).\]
\end{definition}
\begin{proposition}
  Grates are mixed optics for the action of the exponential and
  \((\bullet) \colon \C \times \D \to \D\). That is,
  \(\Grate \cong \Optic_{[\ ,\ ],\bullet}\).
\end{proposition}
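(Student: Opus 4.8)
The plan is to mimic the coend computations already used for linear lenses (Definition~\ref{def:linearlens}) and prisms (Definition~\ref{def:prism}): unfold the optic as a coend, transport the ``view'' hom-object across the tensor--hom adjunction, and then collapse the coend by coYoneda. Concretely, the exponential action sends $M$ to the endofunctor $\{M,-\}$ on $\C$ and $(\bullet)$ acts on $\D$, so that
\[
\Optic_{\{\},\bullet}((A,B),(S,T)) \;=\; \int^{M} \C\bigl(S,\{M,A\}\bigr) \otimes \D\bigl(M \bullet B, T\bigr).
\]

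First I would rewrite the left-hand factor using that $\C$ is symmetric monoidal closed. The adjunction $(-\otimes M)\dashv\{M,-\}$ gives $\C(S,\{M,A\})\cong\C(S\otimes M,A)$; the symmetry of $\otimes$ turns this into $\C(M\otimes S,A)$; and one further use of the adjunction on the other side rewrites it as $\C(M,\{S,A\})$. Substituting, the coend becomes $\int^{M}\C(M,\{S,A\})\otimes\D(M\bullet B,T)$, in which the coend variable now occurs corepresentably in the first factor. A single application of coYoneda, evaluating at the object $\{S,A\}$, then collapses the coend to $\D(\{S,A\}\bullet B,T)$, which is precisely $\mathbf{Grate}((A,B),(S,T))$ by Definition~\ref{def:grate}. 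Since each isomorphism is natural in $A,B,S,T$, they paste into the asserted isomorphism of profunctors $\mathbf{Grate}\cong\Optic_{\{\},\bullet}$.

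The point that needs care --- and the only one that is not bookkeeping --- is variance. Unlike the cartesian and cocartesian cases, the exponential action $M\mapsto\{M,-\}$ is \emph{contravariant} in the acting object, so the residual monoidal category is really $\C^{\mathrm{op}}$, with $\otimes$ sent to composition in $[\C,\C]$ via the currying isomorphism $\{M\otimes N,-\}\cong\{M,\{N,-\}\}$ that witnesses strong monoidality; correspondingly the companion action $(\bullet)$ must be read as contravariant in its $\C$-argument (for the motivating instance $\D=\C$, $(\bullet)$ is the internal hom itself, which is indeed contravariant there, and the grate is the familiar nested continuation). I would therefore check explicitly that, after the hom-swap, the variable $M$ sits in the two factors with opposite variance so that the coend is well formed, and that the applicable form of coYoneda is the one adapted to this variance rather than the one used in the prism derivation. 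This is slightly more delicate than in the lens/prism cases, where everything in sight is covariant; but once the variances are lined up, the two rewriting steps are entirely routine, and no hypotheses on $\C$, $\D$ or $(\bullet)$ beyond those of Definition~\ref{def:grate} are required.
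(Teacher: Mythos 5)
Your derivation is correct and is essentially the paper's own proof: two applications of the tensor--hom adjunction (using symmetry) to rewrite $\C(S,\{M,A\})$ as $\C(M,\{S,A\})$, followed by a single coYoneda reduction collapsing the coend to $\D(\{S,A\}\bullet B,T)$. Your extra care about variance --- reading the residual monoidal category as $\C^{\mathrm{op}}$ so that $(\bullet)$ is contravariant in its $\C$-argument, as in the motivating case $\D=\C$ with $(\bullet)$ the internal hom --- is a sound clarification of a point the paper leaves implicit, and does not change the argument.
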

\begin{proof}
  We know from \cite[Proposition 1.2]{day78} that the adjunction
  $[-,A]^{op} \dashv [-,A]$ holds in any symmetric monoidal category.
\begin{align*}
& \int^{C \in \C} \C(S,[C,A]) \otimes \D(C \bullet B, T) \\
\cong & \hintAdjunction{[-,A]^{op}}{[-,A]} \\
& \int^{C \in \C} \C(C,[S, A]) \otimes \D(C \bullet B, T) \\
\cong & \hintCoyoneda \\
& \D([S, A] \bullet B, T) \qedhere
\end{align*}
\end{proof}

The description of a grate and its profunctor representation in terms of
``closed'' profunctors was first reported by Deikun and
O'Connor~\cite{deikun15}; it can be seen as a consequence of the profunctor
representation theorem (\Cref{th:profrep}).

\subsubsection{Glasses}

\emph{Glasses} are a new optic that strictly generalizes grates and lenses for
the cartesian case. In functional programming, glasses can be implemented by a
single function \texttt{glass} that takes a way of transforming \emph{views} into
new foci and uses it to update the data structure. We propose to use them as a
concrete representation of the composition of lenses and grates.

\begin{definition}
  \label{def:glass}
  A \textbf{glass} in a cartesian closed \Wt{category} is an element of
\[\defining{linkglass}{\ensuremath{\mathbf{Glass}}} \left( (A, B), (S, T) \right) := \C(S \times [[S, A], B] , T).\]
\end{definition}

\begin{proposition}
  Glasses are optics for the action
  \((\bullet \times[\bullet, -]) \colon \C^{op} \times \C \to [\C,\C]\) that sends \(C, D \in \C\)
  to the \Wt{functor} \(D \times [C,-]\).
\end{proposition}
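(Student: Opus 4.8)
The plan is to run the same kind of coend computation that establishes grates and cartesian lenses as optics, since a glass is a common generalization of the two. Taking $\D = \C$, the context category is $\M = \C^{op}\times\C$ and both actions are $(C,D)\circL A = (C,D)\circR A = D\times[C,A]$; this is a strong monoidal action for the (semidirect-product-like) monoidal structure $(C,D)\otimes(C',D')\coloneqq(C\times C',\,D\times[C,D'])$ with unit $(1,1)$, as one sees from the exponential laws $[C,-\times =]\cong[C,-]\times[C,=]$ (right adjoints preserve products) and $[C,[C',-]]\cong[C\times C',-]$. With this, Definition~\ref{def:optic} unfolds to
\[
  \Optic_{(\times[]),(\times[])}((A,B),(S,T)) \;=\; \int^{(C,D)\in\C^{op}\times\C} \C(S,\,D\times[C,A]) \times \C(D\times[C,B],\,T).
\]

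The computation I would carry out is as follows. Since $\C$ is cartesian, split $\C(S,D\times[C,A])\cong\C(S,D)\times\C(S,[C,A])$. By Fubini for coends, evaluate the $D\in\C$ integral first, pulling the factor $\C(S,[C,A])$ (which is constant in $D$) outside, and use the tensor--hom adjunction together with coYoneda to contract $\int^{D}\C(S,D)\times\C(D\times[C,B],T)\cong\int^{D}\C(S,D)\times\C(D,[[C,B],T])\cong\C(S,[[C,B],T])$. This leaves $\int^{C\in\C^{op}}\C(S,[C,A])\times\C(S,[[C,B],T])$. Currying $\C(S,[C,A])\cong\C(S\times C,A)\cong\C(C,[S,A])$ and observing that $C\mapsto\C(S,[[C,B],T])$ is a presheaf on $\C^{op}$, a second application of coYoneda contracts the remaining coend to $\C(S,[[[S,A],B],T])$; a final currying $\C(S,[[[S,A],B],T])\cong\C(S\times[[S,A],B],T)$ identifies this object with $\mathbf{Glass}((A,B),(S,T))$.

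The only real care needed is keeping track of the two opposite variances in the mixed-variance coend over $\C^{op}\times\C$: the internal homs $[C,-]$ live in the contravariant slot $C$ and the finite products in the covariant slot $D$, so at each coYoneda step one must check that the functor being plugged in has the correct variance. The $C$-side steps are exactly those of the grate derivation (exponential, then coYoneda in the contravariant slot), and the $D$-side step is the one from the cartesian-lens derivation. As noted in the Setting, $\C^{op}\times\C$ need not be small, but the coend is nevertheless represented by the object $S\times[[S,A],B]$, so no size issue arises.
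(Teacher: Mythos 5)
Your derivation is correct and follows essentially the same route as the paper's proof: split the product hom, contract the covariant product variable by coYoneda, curry the exponential, and contract the remaining contravariant variable by a second coYoneda, differing only in when you curry (so you land on \(\C(S,[[[S,A],B],T])\) rather than the isomorphic \(\C(S\times[[S,A],B],T)\)). Your explicit description of the monoidal structure on \(\C^{op}\times\C\) making \((\times[])\) a strong monoidal action is a nice addition that the paper leaves implicit.
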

\begin{proof}
\begin{align*}
& \int^{C,D} \C(S, C \times [D,A]) \times \C([D,B] \times C , T) \\
\cong & \hintAdjunction{\Delta}{(\times)} \\
& \int^{C,D} \C(S, C) \times \C(S, [D,A]) \times \C([D,B] \times C , T) \\
\cong & \hintCoyoneda \\
& \int^D \C(S, [D, A]) \times \C([D,B] \times S, T) \\
\cong & \hintAdjunction{[-,A]^{op}}{[-,A]} \\
& \int^D \C(D, [S , A]) \times \C([D, B] \times S, T) \\
\cong & \hintCoyoneda \\
& \C([[S,A], B] \times S , T). \qedhere
\end{align*}
\end{proof}

\subsection{Getters, reviews and folds}

Some constructions, such as plain morphisms, can be regarded as degenerate
cases of optics. We will describe these constructions (\emph{getters},
\emph{reviews} and \emph{folds}~\cite{kmett15}) as mixed optics. All of them set
one of the two base categories to be the terminal category and, contrary to most
optics, they act only unidirectionally. We will derive the usual
implementations of getters as \texttt{s -> a}, of reviews as \texttt{b -> t}, and
of folds as \texttt{s -> [a]}. Their profunctor representation can be seen as a
covariant or contravariant application of the Yoneda lemma.

\begin{definition}\label{def:getter}\label{def:review}
  Let \(\C\) be an arbitrary \Vt{category}. Getters (morphisms $\C(S,A)$) are
  degenerate optics for the trivial action on the covariant side. Reviews (morphisms
  $\C(B,T)$) are degenerate optics for the trivial action on the contravariant
  side. In other words, we can obtain plain morphisms as a particular case of optic when
  \(\M = \mathbf{1}\) and \(\D = \mathbf{1}\) or \(\C = \mathbf{1}\),
  respectively.
\end{definition}

The category of \textbf{foldable} functors, $\mathbf{Foldable}$, is the slice category on the list
functor \(\defining{linklist}{\ensuremath{\mathcal{L}}} \colon \V \to \V\), also known as the free monoid functor.
Using the fact that \(\ListMonad\) is a monad, the slice category
\([\V,\V]/ \ListMonad\) can be made monoidal in such a way that the forgetful functor
\([\V,\V]/ \ListMonad \to [\V,\V]\) becomes monoidal.

\begin{definition}\label{def:fold}
  Folds are optics for the action of foldable functors and the trivial
  action on the contravariant side.
\[\mathbf{Fold}((A,\ast), (S,\ast)) = \int^{F \in \mathbf{Foldable} } \V(S , F A) \cong \V(S, \ListMonad A).\]
Folds admit a concrete description, $\V(S, \ListMonad A)$, which can be reached from
the definition of coends as colimits. A coend from a diagram category with a
terminal object is determined by the value at the terminal object and here, $\ListMonad$ is the terminal object. The same
technique can be used to prove that the optic for the slice category over a
monad \(\mathcal{G} \colon \V \to \V\) has a concrete form given by
\(\C(S, \mathcal{G}A)\).
\end{definition}
 
\subsection{Setters and adapters}

We finish our examples of optics with two extremes: the actions from the initial and terminal actions. These are the identity action \(\mathrm{id} \colon [\V , \V] \to [\V , \V]\) and picking the monoidal unit, $\mathrm{I} \colon \mathbf
{1} \to [\V, \V ]$.

\begin{definition}\label{def:setter}
  A setter~\cite{kmett15} is an element of
  \[\defining{linksetter}{\ensuremath{\mathbf{Setter}}} ((A, B), (S, T)) \coloneqq \V([A,B], [S,T]).\]
\end{definition}
\begin{proposition}
  Setters are optics for identity action
  \(\mathrm{id} \colon [\V , \V] \to [\V , \V]\). That is,
  \(\Setter \cong \Optic_{\mathrm{id},\mathrm{id}}\).
\end{proposition}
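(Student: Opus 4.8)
The plan is to carry out a coend computation in the same spirit as the derivation of kaleidoscopes, but simpler: there the context was confined to applicative functors via a forgetful functor $\U \colon \App \to [\V,\V]$ and the free applicative was needed, whereas here the action $\mathrm{ev}$ is simply the action of the whole monoidal category $([\V,\V],\circ,\Id)$ on $\V$ by evaluation, $F \circL A = F \circR A = FA$; it is strong monoidal because composition of endofunctors is (up to the obvious isomorphisms) associative and unital, so no free construction is required and $F$ ranges over all endofunctors. Accordingly I would start from
\[\Optic_{\mathrm{ev},\mathrm{ev}}((A,B),(S,T)) = \int^{F \in [\V,\V]} \V(S, FA) \otimes \V(FB, T)\]
and reduce it to $\V([A,B],[S,T])$ by eliminating the coend over $F$ with a Yoneda argument in the functor category $[\V,\V]$.

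First I would rewrite the factor $\V(S,FA)$, which is covariant in $F$. By the enriched Yoneda lemma $FA \cong \int_{C \in \V} [[A,C], FC]$, and continuity of $\V(S,-)$ turns $\V(S, FA)$ into $\int_{C} \V(S, [[A,C], FC])$; currying by the exponential adjunction of $\V$ rewrites each term as $\V([A,C] \otimes S, FC)$; and recognizing a natural transformation as an end gives
\[\V(S, FA) \;\cong\; \int_{C} \V([A,C] \otimes S, FC) \;\cong\; [\V,\V]\big([A,-] \otimes S, F\big).\]
At this point the optic has become $\int^{F} [\V,\V]([A,-]\otimes S, F) \otimes \V(FB, T)$.

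The key step is then the co-Yoneda (density) formula in $[\V,\V]$. The remaining factor $\V(FB,T)$ is \emph{contravariant} in $F$ — a natural transformation $F \Rightarrow F'$ induces $FB \to F'B$, hence $\V(F'B,T) \to \V(FB,T)$ — so it is a $\V$-presheaf on $[\V,\V]$, and the density formula in the shape $\int^{F} [\V,\V](G, F) \otimes H(F) \cong H(G)$ applies with $G = [A,-]\otimes S$. This collapses the coend to $\V\big(([A,-]\otimes S)(B), T\big) = \V([A,B]\otimes S, T)$, and a final currying (using that $\V$ is symmetric monoidal closed, so $[A,B]\otimes S \cong S \otimes [A,B]$ and $- \otimes S \dashv [S,-]$) gives $\V([A,B]\otimes S, T) \cong \V([A,B], [S,T]) = \mathbf{Setter}((A,B),(S,T))$.

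The main obstacle is bookkeeping rather than depth: one must track variance carefully, since it is precisely the contravariance of $F \mapsto \V(FB,T)$ that makes the density formula applicable in the required form, and one must be sure that the Yoneda, Fubini and continuity manipulations are all legitimate in the $\V$-enriched setting, with $[\V,\V]$ read as the $\V$-category of $\V$-endofunctors. There is also the usual size caveat noted in the Setting: $[\V,\V]$ is large, but the computation exhibits the coend as represented by an object of $\V$, so this causes no trouble.
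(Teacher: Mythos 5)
Your proposal is correct and follows essentially the same route as the paper: the paper's proof also rewrites \(\V(S,FA)\) via the (enriched) Yoneda lemma, the exponential adjunction and the natural-transformations-as-an-end formula to obtain \([\V,\V](S\otimes[A,-],F)\), and then collapses the coend by coYoneda to \(\V(S\otimes[A,B],T)\). The only cosmetic differences are that you make explicit the contravariance check needed for the coYoneda step and the final currying \(\V(S\otimes[A,B],T)\cong\V([A,B],[S,T])\), which the paper leaves implicit.
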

\begin{proof}
  The concrete derivation is described by Riley \cite[\S 4.5.2]{riley18}. There, it is required that functors are strong; this requirement holds automatically for every enriched functor in the base of enrichment \cite{levy2019strong}.
  \begin{align*}
  & \int^{F \in [ \V , \V ]} \V(S, FA) \otimes \V(FB,T) \\
  \cong & \hintYoneda \\
  & \int^{F \in [ \V , \V ]} \left( \int_{C \in \V} [\V(A,C), \V(S, FC)] \right) \otimes \V(FB,T) \\
  \cong & \hint{Copower} \\
  & \int^{F \in [ \V , \V ]} \left( \int_{C \in \V} \V(S \otimes [A,C], FC) \right) \otimes \V(FB,T) \\
  \cong & \hintNaturalTransformation \\
  & \int^{F \in [ \V , \V ]} [\V,\V](S \otimes [A,-], F) \otimes \V(FB,T) \\
  \cong & \hintCoyoneda \\
  & \V(S \otimes [A,B], T) \\
  \cong & \hintAdjunction{(S \otimes-)}{[S,-]} \\
 &  \V([A,B], [S,T]). \qedhere
  \end{align*}
\end{proof}

\begin{remark}
  In functional programming, we implicitly restrict to the case where \(\V\) is
  a cartesian category, and we curry this description to obtain the usual
  representation of setters as a single \texttt{over} function.
\end{remark}

\begin{definition}\label{def:adapter}
  Adapters~\cite{kmett15} are morphisms in \(\C^{op} \otimes \D\). They
  are optics for the action $\mathrm{Id} \colon \mathbf{1} \to [\C,\C]$.
  \[\defining{linkadapter}{\ensuremath{\mathbf{Adapter}}} ((A, B), (S, T)) \coloneqq \C(S,A) \otimes \D(B,T).\]
\end{definition}

\subsection{Optics for (co)free}\label{sec:opticsforcofree}
A common pattern that appears across many optic derivations is that of computing the optic associated with a class of functors using an adjunction to allow for an application of the Yoneda lemma. This observation can be generalized to a class of concrete optics.

Consider some \Vt{endofunctor} \(H \colon \V \to \V\). Any objects \(A,B,S,T \in \V\) can be regarded as functors from the unit \Vt{category}. The following isomorphisms are the consequence of the fact that left and right global Kan extensions are left and right adjoints to precomposition, respectively.
\begin{align*}
\V(S,HA) \cong [\V,\V]( \mathsf{Lan}_A S,H),\\
\V(HB,T) \cong [\V,\V](H, \mathsf{Ran}_B T).
\end{align*}
These extensions exist in \(\V\) and they are given by the formulas
\[
  \mathsf{Lan}_AS \cong [A,-] \otimes S,
  \quad
  \mathsf{Ran}_BT \cong \big[ [-,B],T\big].
\]

\begin{proposition}[{{\cite[\S 3.4.7]{roman19}}}]
\label{prop:opticfree}
  Let the monoidal \Vt{action} \(\U  \colon \M \to [\V,\V]\) have a left adjoint \(\mathcal{L} \colon [\V,\V] \to \M\), or, dually, let it have a right adjoint \(\mathcal{R} \colon [\V,\V] \to \M\). In both of these cases the optic determined by that monoidal action has a concrete form, given by
  \[\V(\mathcal{UL}([A,-] \otimes S)(B),T) \quad\mbox{ or }\quad \V(S,\mathcal{UR} \big[ [-,B],T\big](A)),\]
  respectively.
\end{proposition}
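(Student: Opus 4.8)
The plan is to compute the coend $\int^{M \in \M} \V(S, \U M A) \otimes \V(\U M B, T)$ directly, following the same adjunction-then-Yoneda pattern that recurs throughout Section~\ref{sec:examples}, but now carried out abstractly. The key observation is that the two expressions $\V(S, \U M A)$ and $\V(\U M B, T)$ each involve the functor $\U M$ applied to a fixed object, evaluated against another fixed object; the Kan extension formulas stated just above let us rewrite each of these as a hom in $[\V,\V]$ between $\U M$ and a fixed functor, at which point the adjunction with $\mathcal L$ (or $\mathcal R$) collapses the coend over $\M$ by the (co)Yoneda lemma.

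First, I would treat the left-adjoint case. Using $\V(S, \U M A) \cong [\V,\V](\Lan_A S, \U M)$ — which is the displayed Kan-extension/precomposition adjunction applied to the unit $\Vt{category}$ — the coend becomes $\int^{M} [\V,\V](\Lan_A S, \U M) \otimes \V(\U M B, T)$. Now the adjunction $\mathcal L \dashv \U$ gives $[\V,\V](\Lan_A S, \U M) \cong \M(\mathcal L (\Lan_A S), M)$, so the integrand is $\M(\mathcal L(\Lan_A S), M) \otimes \V(\U M B, T)$, and since the second factor is (co)continuous in $M$ through $\U$, coYoneda in the variable $M$ yields $\V(\U \mathcal L(\Lan_A S)\, B,\, T)$. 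Substituting the formula $\Lan_A S \cong [A,-] \otimes S$ gives exactly $\V(\mathcal{UL}([A,-]\otimes S)(B), T)$.

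The right-adjoint case is formally dual: rewrite $\V(\U M B, T) \cong [\V,\V](\U M, \Ran_B T)$ using the other Kan-extension adjunction, apply $\U \dashv \mathcal R$ to get $[\V,\V](\U M, \Ran_B T) \cong \M(M, \mathcal R(\Ran_B T))$, and then coYoneda in $M$ — now using that $\V(S, \U M A)$ is covariantly (co)continuous in $M$ — collapses the coend to $\V(S,\, \U\mathcal R(\Ran_B T)(A))$, which with $\Ran_B T \cong [[-,B],T]$ is the claimed form.

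The main subtlety to check carefully — rather than a genuine obstacle — is the enriched bookkeeping: one must confirm that the hom-objects $\V(\U M B, T)$ and $\V(S, \U M A)$ are genuinely \Vt{functorial} in $M$ (they are, being composites of $\U$, which is a \Vt{functor}, with representables), that the adjunction $\mathcal L \dashv \U$ is a \Vt{adjunction} so the iso $[\V,\V](\Lan_A S, \U M) \cong \M(\mathcal L(\Lan_A S), M)$ holds as \Vt{natural} in $M$, and that coYoneda applies in the enriched setting — all of which are standard. A secondary point is that the coend is being taken over a possibly large $\M$, but as noted in the Setting, the relevant object exists in $\V$ because it is exhibited by the concrete formula, so the size issue does not bite.
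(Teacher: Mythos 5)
Your proposal is correct and follows essentially the same route as the paper's proof: rewrite $\V(S,\U MA)$ via the $\Lan$/precomposition adjunction, transpose across $\mathcal{L}\dashv\U$, and collapse the coend over $\M$ by coYoneda, with the right-adjoint case handled dually (the paper simply says ``the second one is dual,'' which you spell out, correctly). One tiny quibble: the appeal to (co)continuity of $\V(\U M B,T)$ in $M$ is unnecessary --- coYoneda needs only \Vt{functoriality} in $M$, which you already note.
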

\begin{proof}
We prove the first case. The second one is dual.
\begin{align*}
& \int^{M \in \M} \V(S, \U MA) \otimes \V(\U MB,T)
\\ \cong & \hint{Kan extension} \\
& \int^{M \in \M} [\V, \V]\left( \mathsf{Lan}_A S , \U M \right) \otimes \V(\U MB,T)
\\ \cong & \hintAdjunction{\mathcal{L}}{\mathcal{U}} \\
& \int^{M \in \M} \M\left(\mathcal{L} \mathsf{Lan}_{A}S , M \right) \otimes \V(\U MB,T)
\\ \cong & \hintCoyoneda \\
& \V \left( \mathcal{UL} \mathsf{Lan}_{A}S (B) , T \right). & \qedhere
\end{align*}
\end{proof}

\section{Tambara theory}
\label{sec:tambaratheory}

A fundamental feature of optics is that they can be represented as a single
polymorphic function. Optics in this form are called \emph{profunctor optics}
and we say this function is their \emph{profunctor representation}. Profunctor
optics can be easily composed, even if they are from different families:
composition of optics as polymorphic functions becomes ordinary function
composition. Figure~\ref{fig:lensandprism} shows an example of this phenomenon.
Profunctor optics are functions polymorphic over profunctors endowed with some
extra algebraic structure. This extra structure depends on the family of the
optic they represent. For instance, \emph{lenses} are represented by functions
polymorphic over \emph{cartesian} profunctors, while \emph{prisms} are
represented by functions polymorphic over \emph{cocartesian}
profunctors~\cite[\S 3]{pickering17}. Milewski \cite{milewski17} notes that the algebraic
structures accompanying these profunctors are precisely \emph{Tambara
  modules}, a particular kind of profunctor that has been used
to characterize the monoidal centre of convolution monoidal
categories~\cite{tambara06}. Because of this correspondence, categories of
lenses or prisms can be obtained as particular cases of the ``Doubles for
monoidal categories'' construction defined by Pastro and Street~\cite[\S 6]{pastro08}.
The \emph{double}\footnote{Double, as used by Pastro and Street \cite{pastro08},
  should not be confused with \emph{double} in the sense of \emph{double
    category}.} of an arbitrary monoidal \Vt{category} \((\AA,\otimes,I)\), is a
promonoidal
\footnote{A promonoidal category \(\AA\) is a generalization of a monoidal category with functors replaced by profunctors. For instance, a tensor product is a profunctor \( P \colon \AA^{op} \otimes \AA \otimes \AA \to \V \) and the unit is \(J \colon \AA^{op} \to \V\)}
\Vt{category} \(\DA\) whose hom-objects are defined as
\[
\DA((A,B),(S,T)) \coloneqq\int^{C \in \AA} \AA(S,C \otimes A) \otimes \AA(C \otimes B, T).
\]
In the particular case where \(\AA\) is cartesian or cocartesian, the
\Vt{category} \(\DA\) is precisely the category of lenses or prisms over
\(\AA\), respectively. Moreover, one of the main results of \cite[Proposition
6.1]{pastro08} declares that the \Vt{category} \([\mathcal{D}\C, \V]\) of copresheaves over these
\Vt{categories} is equivalent to the \Vt{category} of
Tambara modules on \(\C\). In the case of lenses or prisms, these Tambara
modules are precisely cartesian and cocartesian profunctors, and this correspondence justifies
their profunctor representation.

We will see how the results of Pastro and Street can be directly applied to the
theory of optics, although they are not general enough for our purposes.
Milewski \cite{milewski17} already proposed a unified description of optics, later
extended by \cite{boisseau18} and \cite{riley18}, that requires a
generalization of the original result by Pastro and Street from monoidal
products to arbitrary monoidal actions. In order to capture \Vt{enriched} mixed
optics, we need to go even further and generalize the definition of Tambara
module in two directions. The monoidal category \(\AA\) in their definition
needs to be substituted by a pair of arbitrary categories \(\C\) and \(\D\), and
the monoidal product \(\otimes_\AA \colon \AA \otimes \AA \to \AA\) needs to be
substituted by a pair of arbitrary monoidal actions
\(\actL{}{} \colon \M \otimes \C \to \C\) and
\(\actR{}{} \colon \M \otimes \D \to \D\), from a common monoidal category
\(\M\).

This section can be seen both as a partial exposition of one of the main results
of the work of Pastro and Street~\cite[Proposition 6.1]{pastro08} and a
generalization of their definitions and propositions to the setting that is most
useful for applications in functional programming.
 
\subsection{Generalized Tambara modules}
Originally, \emph{Tambara modules}~\cite{tambara06} were conceived of as a
structure on top of certain profunctors that made them play nicely with
some monoidal action in both their covariant and contravariant components.

For our purposes, Tambara modules represent the different ways in which we can
use an optic. If a profunctor $P$ has Tambara module structure for the monoidal
actions defining an optic, we can use that optic to lift the profunctor applied
to the foci, $P(A,B)$, to the full structures, $P(S,T)$. For instance, the
profunctor $(-) \times B \to (-)$ can be used to lift the projection
$A \times B \to B$ into the \texttt{update} function $S \times B \to T$. In other
words, this profunctor is a Tambara module compatible with all the families of
optics that admit an \texttt{update} function, such as \emph{lenses}. In
programming libraries, that profunctor can be used to define a polymorphic \texttt{update}
combinator that works across different families of optics.

Formally, we want to prove that Tambara modules for the actions \(\actL{}{}\)
and \(\actR{}{}\) are copresheaves over the category
\(\Optic_{\actL{}{},\actR{}{}}\). This will also justify the profunctor
representation of optics in terms of Tambara modules (Theorem~\ref{th:profrep}).

\begin{definition}\label{def:tambara}
  Let \((\M,\otimes,I)\) be a monoidal \Vt{category} with two monoidal actions
  \((\actL{}{}) \colon \M \otimes \C \to \C\) and
  \((\actR{}{}) \colon \M \otimes \D \to \D\). A generalized \textbf{Tambara
    module} consists of a \Vt{profunctor}
  \(P \colon \C^{op} \otimes \mathbf{D} \to \V\) endowed with a family of
  morphisms
  \[\alpha_{M,A,B} \colon P(A,B) \to P(\actL{M}{A} , \actR{M}{B})\]
  \Vt{natural} in \(A \in \C\) and \(B \in \D\) and \Vt{dinatural} in \(M \in \M\),
  which additionally satisfies the two equations
  \begin{align*}
    P(\phi_{A}^L,\phi^{-1R}_{B}) \circ \alpha_{I,A,B}  &= \id, \\
    P(\phi_{M,N,A},\phi^{-1}_{M,N,B}) \circ \alpha_{M \otimes N,A,B}  &= \alpha_{N,A,B} \circ
  \alpha_{M,\actL{N}{A},\actR{N}{B}},
  \end{align*}
  for every \(M,N \in \M\), every \(A \in \C\) and every \(B \in \D\).
 Let us repeat the equations in terms of commutative diagrams for clarity.
  This is a family of morphisms making the following two diagrams commute.
  
  \[\begin{tikzcd}[column sep=4em]
  P(A,B) \rar{\alpha_{M \otimes N,A,B}}\dar[swap]{\alpha_{N,A,B}} &
  P(\actL{(M \otimes N)}{A},\actR{(M \otimes N)}{B}) \dar{P(\phi_{M,N,A}, \varphi^{-1}_{M,N,B})} \\
  P(\actL{N}{A}, \actR{N}{B}) \rar[swap]{\alpha_{M,\actL{N}{A},\actR{N}{B}}} &
  P(\actL{M}{\actL{N}{A}}, \actR{M}{\actR{N}{B}}) \\
  P(A,B) \drar[swap]{\id} \rar{\alpha_{I,A,B}} &
  P(\actL{I}{A}, \actR{I}{B}) \dar{P(\phi_{A},\varphi^{-1}_{B})} \\
  & P(A,B)
  \end{tikzcd}\]
\end{definition}

When \(\V = \Sets\), we can define a morphism of Tambara modules as a natural
transformation \(\eta_{A,B} \colon P(A,B) \to Q(A,B)\)
satisfying \[\eta_{\actL{M}{A},\actR{M}{B}} \circ \alpha_{M,A,B} = \alpha'_{M,A,B}\circ \eta_{A, B}.\]
For an arbitrary \(\V\), Tambara modules are the objects of a \Vt{category}
\(\defining{linktamb}{\ensuremath{\mathbf{Tamb}}}\) whose hom-object from \((P,\alpha)\) to \((Q,\alpha')\) is
computed as the equalizer of the arrows
\begin{align*}
  \int_{X,Y} \V(P(X,Y),Q(X,Y)&) \\
  \xrightarrow{\phantom{AAAAAAA}\pi_{A,B}\phantom{AAAAAAA}}\ & \V(P(A,B), Q(A,B)) \\
  \xrightarrow{\phantom{AAAAA}\V(\id,\alpha'_{A,B})\phantom{AAAAA}}\ & \V(P(A,B), Q(\actL{M}{A},\actR{M}{B}))
\end{align*}
and
\begin{align*}
  \int_{X,Y} \V(P(X,Y&),Q(X,Y)) \\
  \xrightarrow{\phantom{AA}\pi_{\actL{M}{A},\actR{M}{B}}\phantom{AA}}\ &
  \V(P(\actL{M}{A},\actR{M}{B}), Q(\actL{M}{A},\actR{M}{B})) \\
  \xrightarrow{\phantom{xA}\V(\alpha_{A,B},\id)\phantom{Ax}}\ & \V(P(A,B), Q(\actL{M}{A},\actR{M}{B}))
\end{align*}
for each \(A \in \C\), \(B \in \D\) and \(M \in \M\)~\cite[\S 3.2]{pastro08}.

\begin{remark}
  Pastro and Street \cite{pastro08} follow the convention of omitting the
  unitors and the associators of the monoidal category when defining Tambara
  modules. These appear in \Cref{def:tambara} replaced by the
  structure isomorphisms of the two monoidal actions.
\end{remark}
 
\subsection{Pastro-Street's ``double'' comonad}
Tambara modules are coalgebras for a particular comonad~\cite[\S 5]{pastro08} in profunctors.
That comonad has a left adjoint that must therefore be a monad, and then Tambara
modules can be equivalently described as algebras for that monad. We will
describe the \Vt{category} of \emph{generalized} Tambara modules \(\Tamb\) as an
Eilenberg-Moore category first for a comonad and then for its left adjoint
monad. This will be the main lemma (\Cref{lemma:tambaracoalg}) towards the
profunctor representation theorem (\Cref{th:profrep}).

\begin{remark}\label{remark:axiomsmonoidal}
  Before the definition, let us recall the axioms for a monoidal \Vt{action}
  \(\oslash \colon \M \otimes \C \to \C\) of a monoidal \Vt{category}
  \((\M,\otimes,I)\) with coherence isomorphisms \((a,\lambda,\rho)\) to an
  arbitrary category \(\C\). Let
  \[\begin{aligned}
  & \phi_{A} \colon A \cong I \oslash A, && \phi_{M,N,A} \colon M \oslash (N \oslash A) \cong (M \otimes N) \oslash A, \\
\end{aligned}\] be the structure \Vt{natural} isomorphisms of the monoidal action. Note that the following are precisely the axioms for a strong
monoidal functor \(\M \to [\C,\C]\) written as \(\M \otimes \C \to \C\); they
are simplified by the fact that \([\C,\C]\) is strict.
\[\begin{tikzcd}
    I \oslash M \oslash A \dar[swap]{\phi_{I,M,A}} &
    M \oslash A \lar[swap]{\phi_{M \oslash A}} \rar{M \oslash \phi_A} \dar{\id} &
    M \oslash I \oslash A \dar{\phi_{M,I,A}} \\
    (I \otimes M) \oslash A \rar[swap]{\lambda_M \oslash A}&
    M \oslash A &
    (M \otimes I) \oslash A \lar{\rho_M \oslash A}
\end{tikzcd}\]
\[\begin{tikzcd}
    M \oslash N \oslash K \oslash A \rar{\id} \dar[swap]{\phi_{M,N,K\oslash A}} &
    M \oslash N \oslash K \oslash A \dar{M \oslash \phi_{N,K,A}} \\
    (M \otimes N) \oslash K \oslash A \dar[swap]{\phi_{M \otimes N,K,A}} &
    M \oslash (N \otimes K) \oslash A \dar{\phi_{M,N \otimes K,A}} \\
    ((M \otimes N) \otimes K) \oslash A \rar[swap]{a_{M,N,K} \oslash A} &
    (M \otimes (N \otimes K)) \oslash A
\end{tikzcd}\]
\end{remark}

\begin{definition}\label{def:pastrofunctor}
  We start by constructing the underlying \Vt{functor} of the comonad
  \(\Theta \colon \Prof(\C,\D) \to \Prof(\C,\D)\). Consider first the
  \Vt{functor}
  \[T \colon \M^{op} \otimes \M \otimes \C^{op} \otimes \D \otimes \mathbf{Prof}(\C,\D) \to \V, \]
  given by the composition of the actions \((\actL{}{})\) and \((\actR{}{})\)
  with the evaluation \Vt{functor}
  \(\C^{op} \otimes \D \otimes \Prof(\C,\D) \to \V\). On objects, this is given
  by
  \[ T(M,N,A,B,P) \coloneqq P(\actL{M}{A},\actR{N}{B}).
  \]
  By the universal property of the end, this induces a \Vt{functor}
  \(\C^{op} \otimes \D \otimes \Prof(\C,\D) \to \V\) given by
  \[ T'(A,B,P) = \int_{M \in \M} P(\actL{M}{A},\actR{M}{B}),
  \]
  which can be curried into the \Vt{functor}
  \[\Theta P(A,B) \coloneqq \int_{M \in \M} P(\actL{M}{A}, \actR{M}{B}).
  \]
\end{definition}

\begin{proposition}\label{prop:pastrocomonad}
  The \Vt{functor} \(\Theta\) can be endowed with a comonad structure. Its
  counit is the \Vt{natural} family of morphisms
  \(\varepsilon_{P,A,B} \colon \Theta P (A,B) \to P(A,B)\) obtained by
  projecting on the monoidal unit and applying the unitors,
  \[\begin{tikzcd}[column sep=4em]
    \Theta P(A,B) \rar{\pi_{I}} &
    P(\actL{I}{A}, \actR{I}{B}) \rar{P(\phi_{A}, \phi^{-1R}_{B})} & P(A,B).
  \end{tikzcd}\]
  Its comultiplication is given by
  \(\delta_{P,A,B} \colon \Theta P(A,B) \to \Theta \Theta P(A,B)\), the
  \Vt{natural} family of transformations obtained as the unique morphisms
  factorizing
  \[\begin{tikzcd}[column sep=1.5em]
  \displaystyle \Theta P (A,B)
  \ar{r}{P\left(\phi_{M,N,A},\phi^{-1R}_{M,N,B}\right) \circ \pi_{M \otimes N} }
  &[18ex] P(\actL{M}{\actL{N}{A}}, \actR{M}{\actR{N}{B}})
  \end{tikzcd}\]
  through the projection
  \[\begin{tikzcd}[column sep=4em]
  \Theta \Theta P (A,B)
  \rar{\pi_M \circ \pi_N} &
  P(\actL{M}{\actL{N}{A}} , \actR{M}{\actR{N}{B}})
  \end{tikzcd}\]
  for every \(M,N \in \M\).
\end{proposition}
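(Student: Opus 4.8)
The plan is to take the prescribed $\varepsilon$ and $\delta$ as given and to carry out the verifications in four stages: that $\delta$ is well defined, that $\varepsilon$ and $\delta$ are \Vt{natural} both in $P$ and in $(A,B)$, that the two counit laws hold, and that coassociativity holds. The guiding principle throughout is that $\Theta$ is the end $\int_{M\in\M}\Phi_M$, where $\Phi_M\colon\Prof(\C,\D)\to\Prof(\C,\D)$ is the endofunctor sending a \Vt{profunctor} $P$ to the \Vt{profunctor} $(A,B)\mapsto P(\actL{M}{A},\actR{M}{B})$, and the assignment $M\mapsto\Phi_M$ is a strong monoidal functor from $\M$ to $(\mathrm{End}(\Prof(\C,\D)),\circ)$ precisely because $(\actL{}{})$ and $(\actR{}{})$ are strong monoidal actions, the comparison isomorphisms being assembled from $\phi$ and $\varphi$. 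An end of such a monoidal functor carries a canonical comonad structure, and $\varepsilon$, $\delta$ as displayed are its components; so each check below reduces to a coherence condition already guaranteed by strong monoidality.

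For well-definedness of $\delta_{P,A,B}$, the point is that $\Theta\Theta P(A,B)$ is the iterated end $\int_{M}\int_{N}P(\actL{M}{\actL{N}{A}},\actR{M}{\actR{N}{B}})$, so a morphism into it out of $\Theta P(A,B)$ is exactly a wedge, i.e. a family $(P(\phi_{M,N,A},\varphi^{-1}_{M,N,B})\circ\pi_{M\otimes N})_{M,N}$ that is extranatural separately in $M$ and in $N$. Extranaturality in $N$ follows from the dinaturality of the defining end $\Theta P(A,B)=\int_{N}P(\actL{N}{A},\actR{N}{B})$ together with the \Vt{naturality} of the projections $\pi_{M\otimes N}$ and of the families $\phi_{M,N,-}$, $\varphi_{M,N,-}$; extranaturality in $M$ is the dinaturality of the end $\int_{M}(-)$ together with the dinaturality of $\phi_{-,N,A}$ and $\varphi_{-,N,B}$ in their $\M$-argument. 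Uniqueness of the factorization then defines $\delta$. The counit $\varepsilon_{P,A,B}$ needs no such argument, being a projection composed with isomorphisms; and \Vt{naturality} of both $\varepsilon$ and $\delta$ in $P$ and in $(A,B)$ is immediate, since each is assembled from the projections out of the ends and from the structure isomorphisms $\phi$, $\varphi$, all of which are \Vt{natural} in every argument.

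For the comonad laws I would use repeatedly that $\Theta\Theta P(A,B)$ and $\Theta\Theta\Theta P(A,B)$ are ends, so each identity may be checked after post-composing with all projections $\pi_M\circ\pi_N$, respectively $\pi_M\circ\pi_N\circ\pi_K$. With this reduction, the right counit law $\Theta(\varepsilon_P)\circ\delta_P=\id_{\Theta P}$ becomes, after $\pi_M$, the assertion that the composite of $P(\phi_{M,I,A},\varphi^{-1}_{M,I,B})\circ\pi_{M\otimes I}$ with the unitor part of $\varepsilon$ equals $\pi_M$; this is the triangle-type coherence of $\phi$ and $\varphi$ as strong monoidal functors, with the right unitor of $\M$ absorbed by dinaturality of $\int_M(-)$. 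The left counit law $\varepsilon_{\Theta P}\circ\delta_P=\id_{\Theta P}$ is the mirror statement using $\phi_{I,N,A}$, $\varphi_{I,N,B}$ and the left unitor. Coassociativity $\Theta(\delta_P)\circ\delta_P=\delta_{\Theta P}\circ\delta_P$ becomes, after $\pi_M\circ\pi_N\circ\pi_K$, an equality between two composites of structure isomorphisms, namely $\phi_{M,N,\actL{K}{A}}$ followed by $\phi_{M\otimes N,K,A}$ on one side and $\phi_{N,K,A}$ followed by $\phi_{M,N\otimes K,A}$ on the other, linked through the associator $a$ of $\M$ passing between $\pi_{(M\otimes N)\otimes K}$ and $\pi_{M\otimes(N\otimes K)}$ by dinaturality of the end in the $\M$-variable, and likewise for $\varphi$; this is exactly the associativity coherence axiom of a strong monoidal action, applied once for $(\actL{}{})$ and once for $(\actR{}{})$.

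The main obstacle I anticipate is organizational rather than conceptual: the structure isomorphisms $\phi$ and $\varphi$, which Pastro and Street suppress by working up to coherence, must here be carried explicitly, and one has to check that the wedge condition for $\delta$ and each comonad axiom line up term for term with the triangle and pentagon coherences of the two strong monoidal actions. No new idea is required, but every diagram chase has to be run twice, once on the covariant side with $\phi$ and once on the contravariant side with $\varphi$, so the bookkeeping is heavy; the cleanest presentation is to isolate the monoidal functor $M\mapsto\Phi_M$ first and then appeal to the general fact that the end of a monoidal functor valued in an endofunctor category is a comonad.
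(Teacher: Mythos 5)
Your proposal is correct and takes essentially the same route as the paper's proof: you define $\delta$ via the universal property of the iterated end and verify each comonad law after post-composing with the projections, reducing everything to the wedge (dinaturality) conditions of the end and the coherence axioms of the two strong monoidal actions (Remark~\ref{remark:axiomsmonoidal}), which is exactly how the paper's diagram chases proceed, with your extra well-definedness and naturality checks being welcome additions that the paper leaves implicit. Two cosmetic points: with the conventions fixed by the statement ($\pi_M \circ \pi_N \circ \delta = P(\phi_{M,N,A},\varphi^{-1}_{M,N,B}) \circ \pi_{M\otimes N}$), the law $\Theta\varepsilon_P\circ\delta_P=\id$ actually picks up $\pi_{I\otimes M}$, $\phi_{I,M,A}$ and the left unitor $\lambda$, while $\varepsilon_{\Theta P}\circ\delta_P=\id$ uses $\pi_{M\otimes I}$, $\phi_{M,I,A}$ and $\rho$, so your two counit computations are cross-labelled; and the closing appeal to the general fact that the end of a monoidal functor valued in endofunctors is a comonad tacitly uses that each $\Phi_M$, being given by reindexing along $(A,B)\mapsto(\actL{M}{A},\actR{M}{B})$, preserves the pointwise ends, which is automatic here but should be said if you lead with that abstraction.
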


\begin{proof}
  In order to keep the diagrams in this proof manageable, we introduce the notation
  \(P[M](A,B) \coloneqq P(\actL{M}{A},\actR{M}{B})\). We will show that this
  construction satisfies the comonad axioms.

  We first prove left counitality,
  \(\Theta \varepsilon_P \circ \delta_P = \mathrm{id}_{\Theta P}\), which
  follows from the commutativity of the following diagram.
\[\begin{tikzcd}
    \Theta P(A,B) \rar{\delta_P} \ar{dd}[swap]{\pi_{I \otimes M}} &
    \Theta\Theta P(A,B) \rar{(\Theta \varepsilon)_P} \dar{\pi_M}  &
    \Theta P(A,B) \dar{\pi_M} \\
    & \Theta P[M](A,B) \dar{\pi_I} \rar{\varepsilon_{P[M]}} &
    P[M](A,B) \dar{\id} \\
    P[I \otimes M](A,B) \ar[swap]{r}[yshift=-1ex]{P(\phi_{I,M,A},\phi^{-1R}_{I,M,B})} &
    P[M][I](A,B) \ar[swap]{r}[yshift=-1ex]{P(\phi_{\actL{M}{A}},\phi^{-1R}_{\actR{M}{B}})} &
    P[M](A,B)
  \end{tikzcd}\]
The left pentagon of this diagram commutes because of the
definition of \(\delta\). The upper right square commutes because of
functoriality of ends and naturality of \(\pi_M\). The lower right square
commutes because of the definition of \(\varepsilon\). By the axioms of the
monoidal actions (Remark~\ref{remark:axiomsmonoidal}), the bottom edge of the
whole diagram can be rewritten as
\begin{align*}
 P(\phi_{\actL{M}{A}},\phi^{-1R}_{\actR{M}{B}}) \circ P(\phi_{I,M,A},\phi^{-1R}_{I,M,B}) = P(\lambda^{-1}_M \oslash A, \lambda_M \oslash B).
\end{align*}
Now, by the wedge condition of the end, the left-bottom side
of the previous diagram is just the projection \(\pi_M\).
\[\begin{tikzcd}[column sep=small]
    & \Theta P(A,B) \dlar[swap]{\pi_{I \otimes M}} \drar{\pi_M}& \\
    P[I \otimes M](A,B) \drar[swap]{P(\id,\lambda_M \oslash B)} && P[M](A,B) \dlar{P(\lambda_M \oslash A,\id)} \\
    & P(\actL{I \otimes M}{A},\actR{M}{B}) &
  \end{tikzcd}\]
Finally, by the universal property of the end, that implies
that \(\Theta \varepsilon_P \circ \delta_P\) must coincide with the identity.

Let us now prove right counitality,
\(\varepsilon_{\Theta P} \circ \delta_P = \id_{\Theta P}\), which follows from
the commutativity of the following diagram.
\[\begin{tikzcd}
    \Theta P(A,B) \rar{\delta_P} \ar{dd}[swap]{\pi_{I \otimes M}} &
    \Theta\Theta P(A,B) \rar{ \varepsilon_{\Theta P}} \dar{\pi_I} &
    \Theta P(A,B) \dar{\id} \\ &
    \Theta P[I](A,B) \dar{\pi_M} \rar{\Theta P(\phi_I, \phi_I^{-1R})} &
    \Theta P(A,B) \dar{\pi_M} \\
    P[M \otimes I](A,B) \ar[swap]{r}[yshift=-1ex]{P(\phi_{M,I,A},\phi^{-1R}_{M,I,B})} &
    P[I][M](A,B) \ar[swap]{r}[yshift=-1ex]{P(\phi_A,\phi_B^{-1R})} &
    P[M](A,B)
  \end{tikzcd}\]
Again, the definition of \(\delta\) makes the left pentagon
commute. The upper right square commutes now because of the definition of
\(\varepsilon\), whereas the lower right square commutes because of
functoriality of ends and naturality of \(\pi\). By the axioms of the monoidal
actions (Remark~\ref{remark:axiomsmonoidal}), the bottom edge of the diagram
can be rewritten as
\[
  P(\phi_A^L,\phi_B^{-1R}) \circ P(\phi_{M,I,A}^L,\phi^{-1R}_{M,I,B}) = P(\rho_M^{-1} \oslash A, \rho_M \oslash B).
\]
Now, by the wedge condition of the end, the left-bottom side
of the previous diagram is just the projection \(\pi_M\).
\[\begin{tikzcd}[column sep=small]
 & \Theta P(A,B) \dlar[swap]{\pi_{I \otimes M}} \drar{\pi_M}& \\
 P[I \otimes M](A,B) \drar[swap]{P(\id,\lambda_M \oslash B)} && P[M](A,B) \dlar{P(\lambda_M \oslash A,\id)} \\
 & P(\actL{I \otimes M}{A},\actR{M}{B}) &
\end{tikzcd}\]
Finally, by the universal property of the end,
\(\varepsilon_{\Theta P} \circ \delta_P\) must coincide with the identity.

Coassociativity,
\(\Theta\delta_P\circ\delta_{P} = \delta_{\Theta P}\circ\delta_P\), follows from
commutativity of the following diagram in \Cref{fig:diagramassoc}.

\begin{figure}[!h]
  \centering
\begin{tikzcd}
&    \Theta P(A,B)
      \ar{rd}{\delta_{P}}
      \ar[swap]{dl}{\delta_{P}}
& \\ \Theta\Theta P(A,B)
        \ar[swap]{rd}{\Theta\delta_{P}}
        \ar[swap]{ddd}{\pi_K}
 &&   \Theta\Theta P(A,B)
        \ar{ddd}{\pi_{N \otimes K}}
        \ar{ld}{\delta_{\Theta P}}
\\ &  \Theta \Theta \Theta P(A,B)
       \ar{d}{\pi_K}
\\ &  \Theta \Theta P[K](A,B)
       \ar{dd}{\pi_N}
& \\  \Theta P[K](A,B)
       \urar{\delta_{P[K]}}
       \ar[swap]{ddd}{\pi_{M \otimes N}}
&&   \Theta P[N \otimes K](A,B)
       \ar{ddd}{\pi_M}
       \dlar[near end]{\Theta P(\phi_{N,K,A},\phi_{N,K,B}^{-1R})}
\\ &  \Theta P[K][N](A,B)
       \dar{\pi_M}
&\\ & P[K][N][M](A,B)
\\   P[K][M \otimes N](A,B)
       \urar[near end,xshift=0.5em]{P(\phi_{M,N,\actL{K}{A}}, \phi_{M,N,\actR{K}{B}}^{-1R})}
       \dar{P(\phi_{M \otimes N,K,A},\phi^{-1R}_{M \otimes N,K,B})}
&&   P[N \otimes K][M](A,B)
       \ular[swap, near end]{P(\phi_{N,K,A},\phi_{N,K,B}^{-1R})}
       \dar[swap]{P(\phi_{M, N\otimes K,A},\phi^{-1R}_{M, N\otimes K,B})}
\\ P[(M\otimes N)\otimes K](A,B) \ar{rr}{P(a^{-1}_{M,N,K},a_{M,N,K})} && P[M\otimes (N\otimes K)](A,B)
\end{tikzcd}
\caption{Diagram for the coassociativity axiom.}\label{fig:diagramassoc}
\end{figure}

We need to show that the upper diamond commutes; by the universal property of
the ends, this amounts to showing that it commutes when followed by
\(\pi_M \circ \pi_N \circ \pi_K\). The lower pentagon is made of isomorphisms,
and it commutes by the axioms of the monoidal actions
(\Cref{remark:axiomsmonoidal}). The two upper degenerate pentagons commute by
definition of \(\delta\). The two trapezoids commute by functoriality of ends
and naturality of the projections.

Finally, the two outermost morphisms of the diagram correspond to two
projections from \(\Theta P(A,B)\), namely \(\pi_{(M \otimes N) \otimes K}\) and
\(\pi_{M \otimes (N \otimes K)}\). The wedge condition for the associator
\(a_{M,N,K}\) makes the external diagram commute.
\end{proof}

\begin{lemma}\label{lemma:tambaracoalg}
  Tambara modules are precisely coalgebras for this comonad. There exists an
  isomorphism of \Vt{categories} \(\Tamb \cong EM(\Theta)\) from the
  category of Tambara modules to the Eilenberg-Moore category of \(\Theta\).
\end{lemma}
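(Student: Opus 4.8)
The plan is to transport the coalgebra structure across the universal property of the end defining \(\Theta\), checking that the two comonad laws unwind exactly into the two equations of Definition~\ref{def:tambara}. Fix a \Vt{profunctor} \(P \colon \C^{op}\otimes\D\to\V\). Since \(\Theta P(A,B)=\int_{M}P(\actL{M}{A},\actR{M}{B})\), the universal property of this end says that a \Vt{natural} transformation \(\gamma\colon P\to\Theta P\) is precisely the same datum as a family \(\alpha_{M,A,B}\coloneqq\pi_{M}\circ\gamma_{A,B}\colon P(A,B)\to P(\actL{M}{A},\actR{M}{B})\) that is \Vt{natural} in \(A,B\) and \Vt{dinatural} in \(M\); conversely, any such family factors through the projections \(\pi_{M}\) to give a unique \(\gamma\). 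This is a bijection, on the nose, between coalgebra structure maps and Tambara structure maps on the fixed underlying profunctor \(P\), so it remains to see that under this bijection the coalgebra axioms and the Tambara equations coincide.

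For counitality \(\varepsilon_{P}\circ\gamma=\id_{P}\): post-composing the defining formula \(\varepsilon_{P,A,B}=P(\phi_{A},\varphi_{B}^{-1})\circ\pi_{I}\) of Proposition~\ref{prop:pastrocomonad} with \(\gamma\) yields \(P(\phi_{A},\varphi_{B}^{-1})\circ\alpha_{I,A,B}=\id\); since \(P(\phi_{A},\varphi_{B}^{-1})\) is invertible this is equivalent to \(\alpha_{I,A,B}\circ P(\phi_{A},\varphi_{B}^{-1})=\id\), the first equation of Definition~\ref{def:tambara}. For coassociativity \(\Theta\gamma\circ\gamma=\delta_{P}\circ\gamma\): a morphism into \(\Theta\Theta P\) is determined by its composites with the projections \(\pi_{M}\circ\pi_{N}\colon\Theta\Theta P(A,B)\to P(\actL{M}{\actL{N}{A}},\actR{M}{\actR{N}{B}})\), so it suffices to compare the two sides after post-composing with these. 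On the left, using that \(\Theta\gamma\) acts by \(\gamma\) on the inner argument, one obtains \(\alpha_{M,\actL{N}{A},\actR{N}{B}}\circ\alpha_{N,A,B}\); on the right, using \(\pi_{M}\circ\pi_{N}\circ\delta_{P}=P(\phi_{M,N,A},\varphi^{-1}_{M,N,B})\circ\pi_{M\otimes N}\), one obtains \(P(\phi_{M,N,A},\varphi^{-1}_{M,N,B})\circ\alpha_{M\otimes N,A,B}\). Equating them is exactly the second equation of Definition~\ref{def:tambara}.

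For morphisms, a map of \(\Theta\)-coalgebras \((P,\gamma)\to(Q,\gamma')\) is a transformation \(\eta\colon P\to Q\) with \(\gamma'\circ\eta=\Theta\eta\circ\gamma\); post-composing with \(\pi_{M}\) and using that \(\Theta\eta\) acts by \(\eta\) on the arguments \(\actL{M}{A},\actR{M}{B}\), this becomes \(\alpha'_{M,A,B}\circ\eta_{A,B}=\eta_{\actL{M}{A},\actR{M}{B}}\circ\alpha_{M,A,B}\), the morphism condition for Tambara modules, and conversely that family of equations reassembles into the single coalgebra equation. For a general \(\V\) the matching happens already at the level of hom-objects: unfolding the end \(\Theta Q\) inside the equalizer defining \(EM(\Theta)(P,Q)\) produces precisely the intersection of equalizers describing \(\mathbf{Tamb}((P,\alpha),(Q,\alpha'))\) in Definition~\ref{def:tambara}. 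The resulting assignment is the identity on underlying profunctors and on underlying transformations, hence an isomorphism of \Vt{categories} \(\mathbf{Tamb}\cong EM(\Theta)\).

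The only delicate points are the enriched bookkeeping — that \(\int_{M}\) classifies \(\V\)-dinatural families, including the hom-object statement used for morphisms — and the careful tracking of the structure isomorphisms \(\phi,\varphi\): since Pastro and Street suppress unitors and associators, one must check that the \(P(\phi,\varphi^{-1})\)-factors built into \(\varepsilon\) and \(\delta\) in Proposition~\ref{prop:pastrocomonad} reproduce exactly the coherence factors appearing in Definition~\ref{def:tambara}, with no stray cells; Remark~\ref{remark:axiomsmonoidal} is precisely what makes this come out on the nose. Everything else is a routine diagram chase.
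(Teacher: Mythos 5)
Your proposal is correct and follows essentially the same route as the paper: identify \(\V\)-natural transformations \(P \to \Theta P\) with dinatural families \(\alpha_{M,A,B}\) via the universal property (continuity) of the end, and then match the counit and comultiplication axioms against the two Tambara equations. You simply spell out in detail the axiom-matching and the hom-object comparison that the paper compresses into the sentence ``the two conditions in the definition of Tambara module can be rewritten as the axioms of the coalgebra.''
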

\begin{proof}
  Note that the object of \Vt{natural} transformations from \(P\) to
  \(\Theta P\) is precisely
  \begin{align*}
  & \Prof(\C,\D)(P,\Theta P) \\
  \cong & \hintNaturalTransformation \\
  & \int_{A,B} \V\left(P(A,B),\int_{M \in \M} P(\actL{M}{A}, \actR{M}{B})\right) \\
  \cong & \hintContinuity \\
  & \int_{A,B,M} \V\left(P(A,B), P(\actL{M}{A}, \actR{M}{B})\right)
  \end{align*}
  whose elements can be seen as a family of morphisms that is dinatural in
  \(M \in \M\) and natural in \((A,B) \in \C^{op} \otimes \D\). The two conditions in the
  definition of Tambara module can be rewritten as the axioms of the coalgebra.
\end{proof}

\begin{proposition}
  The \(\Theta\) comonad has a left \Vt{adjoint} \(\Phi\), which must therefore
  be a monad. On objects, it is given by the following formula.
\[\Phi Q(X,Y) = \int^{M,U,V}
  Q(U,V) \otimes \C(X,\actL{M}{U}) \otimes \D(\actR{M}{V},Y).\]
That is, there
exists a \Vt{natural} isomorphism \(\Nat(\Phi Q,P) \cong \Nat(Q,\Theta P)\).
\end{proposition}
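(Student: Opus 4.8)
The plan is to verify the adjunction $\Phi\dashv\Theta$ by a computation in coend calculus, exhibiting both $\Nat(\Phi Q,P)$ and $\Nat(Q,\Theta P)$ as naturally isomorphic to the single object
\[
  \int_{M,U,V}\V\bigl(Q(U,V),\,P(\actL{M}{U},\actR{M}{V})\bigr),
\]
so that composing the two isomorphisms yields the claimed natural isomorphism $\Nat(\Phi Q,P)\cong\Nat(Q,\Theta P)$, natural in $P$ and $Q$ because each step is.

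The side of $\Theta$ is immediate. Unfolding the definition of $\Theta$ (Definition~\ref{def:pastrofunctor}) and writing the object of natural transformations as an end, $\Nat(Q,\Theta P)=\int_{U,V}\V\bigl(Q(U,V),\int_M P(\actL{M}{U},\actR{M}{V})\bigr)$. Since $\V(Q(U,V),-)$ is continuous it commutes with the inner end, and Fubini for ends then merges the indices to give exactly the displayed object.

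For the side of $\Phi$ I would start from $\Nat(\Phi Q,P)=\int_{X,Y}\V(\Phi Q(X,Y),P(X,Y))$ and substitute the coend defining $\Phi Q(X,Y)$. Continuity of the hom turns the coend in the first argument into an end, and after Fubini we reach
\[
  \int_{M,U,V}\int_{X,Y}\V\bigl(Q(U,V)\otimes\C(X,\actL{M}{U})\otimes\D(\actR{M}{V},Y),\,P(X,Y)\bigr).
\]
Using the closed structure of $\V$ I would curry so that each representable is carried alone into the outermost hom while $Q(U,V)$ is pushed into the innermost one, and then discharge $X$ and $Y$ by the Yoneda lemma for ends: integrating over $X$ against $\C(X,\actL{M}{U})$ substitutes $X\mapsto\actL{M}{U}$, and integrating over $Y$ against $\D(\actR{M}{V},Y)$ substitutes $Y\mapsto\actR{M}{V}$. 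What remains is $\int_{M,U,V}\V(Q(U,V),P(\actL{M}{U},\actR{M}{V}))$, matching the expression found above for $\Nat(Q,\Theta P)$.

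The one point requiring care is variance bookkeeping. Since $P$ and $Q$ are profunctors $\C^{op}\otimes\D\to\V$, they are contravariant in the first slot and covariant in the second; after currying one must check that the integrand paired with $\C(X,\actL{M}{U})$ is contravariant in $X$, so that the contravariant form $\int_c\V(\C(c,a),G(c))\cong G(a)$ of the Yoneda lemma applies, and dually that the integrand paired with $\D(\actR{M}{V},Y)$ is covariant in $Y$. Choosing to curry $Q(U,V)$ into the innermost position makes both checks transparent, since the outer hom-layers preserve the variance of $P$ in the relevant variable. The remaining verifications — that $\Phi Q$ is a genuine $\V$-profunctor and $\Phi$ a $\V$-functor, that the coends above exist in the cosmos $\V$, and that the composite isomorphism is natural in $P$ and $Q$ — follow from the cocompleteness and closedness of $\V$ together with the naturality of each individual step, and present no obstacle.
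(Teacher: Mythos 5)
Your proposal is correct and takes essentially the same approach as the paper: the paper runs the very same coend-calculus chain (continuity, currying, Yoneda) in a single direction from \(\Nat(Q,\Theta P)\) to \(\Nat(\Phi Q,P)\), applying Yoneda jointly in \(\C^{op}\otimes\D\) where you instead meet in the middle at \(\int_{M,U,V}\V(Q(U,V),P(\actL{M}{U},\actR{M}{V}))\) and reduce against the two representables separately, with the same variance bookkeeping. The paper only adds the passing remark that the adjunction can also be seen via Kan extensions being adjoint to precomposition, which your argument does not need.
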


\begin{proof}
  We can explicitly construct the \Vt{natural} isomorphism using coend calculus.
\begin{align*}
& \int_{A,B} \V \left(Q(A,B), \int_{M}P(\actL{M}{A}, \actR{M}{B})\right)
\\ \cong & \hintContinuity
\\ & \int_{M,A,B} \V(Q(A,B), P(\actL{M}{A}, \actR{M}{B}))
\\ \cong & \hintYoneda
\\ & \int_{M,A,B} \V\left( Q(A,B) , \int_{X,Y} \V \Big(\C(X, \actL{M}{A}) \otimes \D(\actR{M}{B},Y) , P(X,Y)\Big) \right)
\\ \cong & \hintContinuity
\\ & \int_{M,A,B,X,Y} \V \left( Q(A,B) , \V \Big(\C(X, \actL{M}{A}) \otimes \D(\actR{M}{B},Y) , P(X,Y)\Big) \right)
\\ \cong & \hint{Copower}
\\ & \int_{M,A,B,X,Y} \V(Q(A,B) \otimes \C(X, \actL{M}{A}) \otimes \D(\actR{M}{B}, Y), P(X,Y))
\\ \cong & \hintContinuity
\\ & \int_{X,Y} \V \left( \int^{M,A,B} Q(A,B) \otimes \C(X, \actL{M}{A}) \otimes \D(\actR{M}{B},Y) , P(X,Y) \right). \\
\end{align*}
Alternatively, the adjunction can be deduced from the definition of the comonad
\(\Theta\) and the characterization of global Kan extensions as adjoints to
precomposition.
\end{proof}

\subsection{Pastro-Street's ``double'' promonad}
The second part of this proof occurs in the bicategory of \Vt{profunctors}. In
this bicategory, there exists a formal analogue of the Kleisli construction that,
when applied to the Pastro-Street monad \(\Phi\), yields a category whose
morphisms are the optics from Definition~\ref{def:optic}. This is the crucial
step of the proof, as the universal property of that Kleisli construction will
imply that copresheaves over the category of optics there defined are Tambara
modules (Lemma~\ref{lemma:copresheaves}). After that, the last step will be a
relatively straightforward application of the Yoneda lemma
(Lemma~\ref{lemma:doubleyoneda}).

Let \(\Prof\) be the bicategory of \Vt{profunctors} that has as 0-cells the
\Vt{categories} \(\C,\D,\E, \dots\); as 1-cells \(P \colon \C \nto \D\) the
\Vt{profunctors} given as \(P \colon \C^{op} \otimes \D \to \V\); and as 2-cells
the natural transformations between them. The composition of two
\Vt{profunctors} \(P \colon \C^{op} \otimes \D \to \V\) and
\(Q \colon \D^{op} \otimes \E \to \V\) is the \Vt{profunctor}
\(Q \diamond P \colon \C^{op} \otimes \E \to \V\) defined on objects by the
coend\footnote{Although in general the composition of two profunctors can fail
  to exist for size reasons or when $\V$ lacks certain colimits, we only ever
  need these composites in a formal sense. This perspective can be formalized
  with the notion of \emph{virtual equipment}~\cite{cruttwell09}.}
\[
  (Q \diamond P)(C,E) = \int^{D \in \D} P(C,D) \otimes Q(D,E).
\]
There is, however, an equivalent way of defining profunctor composition if we
interpret each \Vt{profunctor} \(\C^{op} \otimes \D \to \V\) as a \Vt{functor}
\(\C^{op} \to [\D,\V]\) to the category of copresheaves. In this case, the
composition of two profunctors \(P \colon \C^{op} \to [\D,\V]\) and
\(Q \colon \D^{op} \to [\E,\V]\) is the \Vt{functor}
\((Q \diamond P) \colon \C^{op} \to [\E,\V]\) defined by taking a left Kan
extension \((Q \diamond P) \coloneqq \Lan_y Q \circ P\) along the Yoneda
embedding \(y \colon \D^{op} \to [\D,\V]\). The unit profunctor for composition
is precisely the Yoneda embedding.
\[\begin{tikzcd}
    & \D^{op} \rar{Q} \dar[swap]{y} & \left[\E, \V \right] \\
    \C^{op} \rar{P} & \left[\D, \V \right] \urar[swap]{\Lan_y Q \circ P} &
  \end{tikzcd}\]
In the same way that we can construct a Kleisli category over a
monad, we will perform a Kleisli construction over the monoids of the bicategory
\(\Prof\), which are called \textbf{promonads}.
Promonads over the base category
\(\V\) that are also Tambara modules for the product appear frequently in the
literature on functional programming languages under the name of
\emph{arrows}~\cite{hughes00,jacobs09,rivas14}.

\begin{definition}
  A \textbf{promonad} is given by a \Vt{category} \(\A\), an endoprofunctor
  \(T \colon \A^{op} \otimes \A \to \V\), and two \Vt{natural} families
  \(\eta_{X,Y} \colon \C(X,Y) \to T(X,Y)\) and
  \(\mu_{X,Y} \colon (T \diamond T)(X,Y) \to T(X,Y)\) satisfying the following
  unitality and associativity axioms.
\[\begin{tikzcd}
    T \rar{\eta \diamond \id}\drar[swap]{\id} & T\diamond T \dar{\mu} & \lar[swap]{\id \diamond \eta}\dlar{\id} T &[-2em]
    T \diamond T \diamond T\dar[swap]{\id \diamond \mu} \rar{\mu \diamond \id} & T \diamond T \dar{\mu} \\
  & T && T \diamond T \rar{\mu}& T
\end{tikzcd}\]
A module for the promonad is a \Vt{profunctor}
\(P \colon \X^{op} \otimes \A \to \V\), together with a \Vt{natural}
transformation \(\rho \colon T \diamond P \to P\) making the following diagrams
commute.
\[\begin{tikzcd}
P \rar{\eta \diamond \id} \drar[swap]{\id} & T \diamond P \dar{\rho} &
T \diamond T \diamond P \rar{\mu \diamond \id} \dar[swap]{\id \diamond \rho} & T \diamond P \dar{\rho} \\
& P & T \diamond P \rar{\rho} & P
\end{tikzcd}\]
An algebra is a module structure on a \Vt{copresheaf} \(P \colon \A \to \V\),
interpreted as a profunctor \(\mathbf{I}^{op} \otimes \A \to \V\) from the unit
\Vt{category}.
\end{definition}

\begin{lemma}\label{lemma:kleisli}
  The bicategory \(\Prof\) admits the Kleisli construction~\cite[\S
  6]{pastro08}. The Kleisli \Vt{category} \(\Kl(T)\) for a promonad
  \((T, \mu, \eta)\) over \(\A\) is constructed as having the same objects as
  \(\A\) and letting the hom-object between \(X,Y \in \A\) be precisely
  \(T(X,Y) \in \V\).
\end{lemma}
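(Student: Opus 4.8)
The plan is to give \(\Kl(T)\) explicitly and then check that it is the Kleisli object of the monad \(T\) in \(\Prof\). Take as objects those of \(\A\) and as hom-objects \(\Kl(T)(X,Y) \coloneqq T(X,Y)\). Composition is built from the multiplication: since \((T \diamond T)(X,Z) = \int^{Y} T(X,Y) \otimes T(Y,Z)\) carries universal coprojections \(\iota_Y \colon T(X,Y) \otimes T(Y,Z) \to (T \diamond T)(X,Z)\), I define the composition morphism of \(\Kl(T)\) to be the composite \(T(X,Y) \otimes T(Y,Z) \xrightarrow{\iota_Y} (T \diamond T)(X,Z) \xrightarrow{\mu_{X,Z}} T(X,Z)\); this is legitimate because \(\mu\), being a \(\V\)-natural transformation, is in particular dinatural and so descends from the coprojections to the coend. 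The identity at \(X\) is \(I \xrightarrow{j_X} \A(X,X) \xrightarrow{\eta_{X,X}} T(X,X)\), where \(j_X\) is the unit of the \(\V\)-category \(\A\).

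I would then verify the \(\V\)-category axioms, each of which reduces to a promonad axiom once we test it against coprojections. Associativity of the composition above amounts to the equation \(\mu \circ (\mu \diamond \id) = \mu \circ (\id \diamond \mu)\) precomposed with the canonical map \(T(W,X) \otimes T(X,Y) \otimes T(Y,Z) \to (T \diamond T \diamond T)(W,Z)\), using Fubini to identify the two bracketings \((T \diamond T) \diamond T \cong T \diamond (T \diamond T)\). Left and right unitality come from the promonad triangles \(\mu \circ (\eta \diamond \id) = \id\) and \(\mu \circ (\id \diamond \eta) = \id\): here one uses the enriched co-Yoneda lemma, which exhibits the hom-profunctor \(\A\) as the unit for \(\diamond\), so that \(\A \diamond T \cong T \cong T \diamond \A\), and under these isomorphisms the two triangles become, hom-object by hom-object, precisely the unit laws of \(\Kl(T)\).

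For the clause that \(\Prof\) ``admits the Kleisli construction'', note first that \(\eta\) defines an identity-on-objects \(\V\)-functor \(J \colon \A \to \Kl(T)\), its compatibility with composition and identities being again an instance of the promonad axioms. I would then check that restriction along \(J\) sends a profunctor \(Q \colon \X \nto \Kl(T)\) to a right \(T\)-module \(\X \nto \A\) — with action induced by the composition of \(\Kl(T)\) — and that this assignment is an equivalence of the corresponding \(\V\)-categories of modules; this is the universal property characterizing \(\Kl(T)\) as the Kleisli object of the monad \(T\) in the bicategory \(\Prof\). The argument is exactly that of Pastro and Street~\cite[\S 6]{pastro08}, transported verbatim to the \(\V\)-enriched setting, and it is what Lemma~\ref{lemma:copresheaves} below will exploit (taking \(\X\) to be the unit \(\V\)-category). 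I expect the only genuine obstacle to be size and existence of colimits: the composites \(T \diamond T\), \(T \diamond T \diamond T\) and the various profunctors appearing in the universal property need not exist as honest \(\V\)-functors into \(\V\). As in the footnote above, the remedy is to read everything inside the virtual equipment of \(\V\)-profunctors~\cite{cruttwell09}, in which the coprojections, the co-Yoneda identifications, and the Kleisli universal property all remain available and the verification goes through formally.
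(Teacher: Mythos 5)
Your proposal is correct and follows essentially the same route as the paper: composition of \(\Kl(T)\) is extracted from the components of \(\mu\) (via the continuity isomorphism identifying maps out of the coend \(T \diamond T\) with families on the coprojections), and the universal property is established by showing that profunctors \(\X \nto \Kl(T)\) correspond to \(T\)-modules, exactly as in Pastro and Street. You spell out the \(\V\)-category axioms and the identity-on-objects functor \(J\) in more detail than the paper, which simply asserts these follow from the promonad/algebra axioms, but the substance is the same.
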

\begin{proof}
  The multiplication of the promonad is a \Vt{natural} transformation whose
  components can be taken as the definition for the composition of the
  \Vt{category} \(\Kl(T)\). The following isomorphism is a consequence of continuity.
  \begin{align*}
    & \V\left(\int^{Z \in \C} T(X,Z) \otimes T(Z,Y) , T(X,Y)\right) 
      \quad\cong\quad 
      \int_{Z \in \C} \V\left(T(X,Z) \otimes T(Z,Y) , T(X,Y)\right)
  \end{align*}

  Let us show now that this \Vt{category} satisfies the universal property of
  the Kleisli construction. Let \(P \colon \X^{op} \otimes \A \to \V\) be a
  \Vt{profunctor}. A module structure \(\rho \colon T \diamond P \to P\)
  corresponds to a way of making the profunctor \(P\) functorial over \(\Kl(T)\) in the second argument
  \begin{align*}
  & \int_{X \in \X, Z \in \A} \V\left(\int^{Y \in \A} P(X,Y) \otimes T(Y,Z), P(X,Z) \right)\\
  \cong & \hintContinuity \\
  & \int_{X,Y,Z} \V(P(X,Y) \otimes T(Y,Z), P(X,Z)) \\
  \cong & \hint{Exponential} \\
  & \int_{X,Y,Z} \V(T(Y,Z), [P(X,Y), P(X,Z)]).
\end{align*}
Functoriality of this family follows from the monad-algebra axioms.
\end{proof}

\begin{lemma}\label{lemma:copresheaves}
  The category of algebras for a promonad is equivalent to the copresheaf
  category over its Kleisli object.
\end{lemma}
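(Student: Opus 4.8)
The plan is to make the equivalence completely explicit, paralleling the style of Lemma~\ref{lemma:kleisli}, and then to remark on a shorter monadicity argument. The promonad unit $\eta$ gives an identity-on-objects $\V$-functor $J \colon \A \to \Kl(T)$, acting on hom-objects as $\A(X,Y) \xrightarrow{\eta_{X,Y}} T(X,Y) = \Kl(T)(X,Y)$; the unitality and associativity axioms of the promonad, together with the description of Kleisli composition via $\mu$ from Lemma~\ref{lemma:kleisli}, are exactly what is needed for $J$ to be $\V$-functorial. Restriction along $J$ defines $J^{*} \colon [\Kl(T),\V] \to [\A,\V]$, and I will show it factors as an isomorphism onto the category of promonad algebras followed by the forgetful functor to $[\A,\V]$.

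Given a $\V$-copresheaf $Q \colon \Kl(T) \to \V$, I put $P \coloneqq Q \circ J \colon \A \to \V$; since $\Obj \Kl(T) = \Obj \A$ we have $PX = QX$. The action of $Q$ on hom-objects is a family $T(X,Y) \to [PX,PY]$, which by currying and the very coend manipulation used in the proof of Lemma~\ref{lemma:kleisli} is precisely a $\V$-natural transformation $\rho \colon T \diamond P \to P$. One then checks that the two module axioms for $(P,\rho)$ coincide with the two $\V$-functoriality axioms for $Q$: the unit axiom $\rho \circ (\eta \diamond \id) = \id$ unwinds — using that $P$ is itself $\V$-functorial and the coYoneda isomorphism $\A(-,=) \diamond P \cong P$ — to preservation of identities by $Q$; the associativity axiom $\rho \circ (\mu \diamond \id) = \rho \circ (\id \diamond \rho)$ unwinds, via Kleisli composition $= \mu$, to preservation of composition by $Q$. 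Conversely, from a promonad algebra $(P,\rho)$ I build $Q$ with $QX \coloneqq PX$ and hom-action the transpose of $\rho$; the same computations read in reverse show $Q$ is a well-defined $\V$-functor. The two assignments are mutually inverse on the nose — $Q \circ J$ recovers $P$ because $\rho$ precomposed with $\eta$ is the original $P$-action by the unit axiom, and the round trip on algebras is the identity by construction — so we obtain an isomorphism of $\V$-categories, a fortiori the asserted equivalence.

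For morphisms, since the objects agree, a family $\{\phi_X \colon PX \to P'X\}$ underlies a $\V$-natural transformation $Q \to Q'$ exactly when it underlies a morphism of promonad algebras: in both cases the hom-$\V$-object between the structures is obtained by cutting $\int_{X} \V(PX, P'X)$ down by equalizers expressing compatibility with the relevant actions, and the equalizer conditions coincide (those indexed by $\A$-hom-objects following from those indexed by $T$-hom-objects through $\eta$). Hence the object-level isomorphism is $\V$-functorial and fully faithful. A shorter packaging: the adjunction $\Lan_{J} \dashv J^{*}$ induces on $[\A,\V]$ the monad $P \mapsto T \diamond P$ (using $(\Lan_{J} R)(Z) \cong \int^{Y} T(Y,Z) \otimes RY$, with unit and multiplication from $\eta$ and $\mu$), whose Eilenberg-Moore category is by definition the category of promonad algebras; then $J^{*}$ is monadic by Beck's theorem, being conservative because $J$ is identity-on-objects and creating the relevant coequalizers because colimits in both copresheaf categories are computed objectwise and preserved by $J^{*}$. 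The main obstacle either way is the enriched bookkeeping: in the explicit route, carrying the passage between $\rho \colon T \diamond P \to P$ and the hom-action $T(X,Y) \to [PX,PY]$ through the correct coend and exponential adjunctions and verifying that the promonad-module axioms translate verbatim into the $\V$-functor axioms; in the monadic route, checking Beck's hypotheses in the $\V$-enriched setting.
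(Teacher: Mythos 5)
Your argument is correct, and its first half is in substance the paper's own argument, just unfolded by hand: the paper proves Lemma~\ref{lemma:copresheaves} in two lines by specializing the universal property of the Kleisli construction established in Lemma~\ref{lemma:kleisli} to the unit \Vt{category}, via the chain \([\Kl(\Phi),\V] \cong \Prof(\mathbf{I},\Kl(\Phi)) \cong \EM(\Phi)\); the coend/currying computation you perform to match the hom-action \(T(X,Y) \to [PX,PY]\) of a copresheaf on \(\Kl(T)\) with a module map \(\rho \colon T \diamond P \to P\) is exactly the computation already done there for general \(\X\), so your explicit route through the identity-on-objects functor \(J \colon \A \to \Kl(T)\) re-derives rather than reuses that lemma. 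What you add beyond the paper is twofold: you make the correspondence an isomorphism of \Vt{categories} with the hom-objects checked against the equalizer description (the paper leaves the morphism level implicit in the universal property), and you sketch an independent packaging via \(\Lan_J \dashv J^{*}\) and Beck monadicity, identifying the induced monad on \([\A,\V]\) as \(T \diamond (-)\). The paper's route is shorter and stays entirely inside the formal bicategorical argument it has already paid for; your explicit route makes the equivalence concrete (useful for the implementation side), while the monadicity route is genuinely different but costs you the enriched form of Beck's theorem and the verification that the Eilenberg--Moore category of the induced monad has the promonad-algebra morphisms, points you correctly flag as the remaining bookkeeping.
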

\begin{proof}
  Let \(\X\) be any category and \(\Phi \colon \Y \to \Y\) a promonad. By the
  universal property of the Kleisli construction (see
  Lemma~\ref{lemma:kleisli}), \(\Prof(\X,\operatorname{Kl}(\Phi))\) is
  equivalent to the category of modules on \(\X\) for the promonad. In
  particular, \Vt{profunctors} from the unit \Vt{category} to the Kleisli object
  form precisely the category \(\EM(\Phi)\) of algebras for the promonad; thus
  \[[\operatorname{Kl}(\Phi),\V] \cong [\mathbf{I}^{op} \otimes \operatorname{Kl}(\Phi),\V] \cong \Prof(\mathbf{I},\operatorname{Kl}(\Phi)) \cong \EM(\Phi).\qedhere\]
\end{proof}

\begin{proposition}\label{prop:cocontinous}
  Let \(T \colon [\A,\V] \to [\A,\V]\) be a cocontinuous monad. The profunctor
  \(\check{T} \colon \A^{op} \to [\A,\V]\) defined by
  \(\check{T} \coloneqq T \circ y\) can be given a promonad structure. Moreover,
  algebras for \(T\) are precisely algebras for the promonad \(\check{T}\).
\end{proposition}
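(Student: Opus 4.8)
The plan is to recognise $\check T$ as the transport of $T$ along the correspondence between cocontinuous endofunctors of $[\A,\V]$ and endoprofunctors on $\A$, and then to move monoid and module structure across this correspondence.

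First I would compute the relevant composites. By the description of profunctor composition as a left Kan extension, $\check T \diamond \check T = \Lan_y \check T \circ \check T$, where $y \colon \A^{op} \to [\A,\V]$ is the Yoneda embedding. Since $\check T = T \circ y$ and $T$ is cocontinuous, the canonical comparison $\Lan_y(T \circ y) \Rightarrow T$ is invertible: a cocontinuous endofunctor of $[\A,\V]$ is determined by its restriction along $y$, because every copresheaf is a weighted colimit of representables. Hence $\Lan_y \check T \cong T$, so $\check T \diamond \check T \cong T \circ T \circ y$, and likewise all iterated $\diamond$-composites of $\check T$ are identified with the functors $T^{n} \circ y$ coherently with the associativity and unit constraints of $\Prof$. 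Put differently, $(-) \circ y$ is a strong monoidal equivalence from the category of cocontinuous endofunctors of $[\A,\V]$ (under composition) onto $\Prof(\A,\A)$ (under $\diamond$), with pseudo-inverse $\Lan_y(-)$, sending the identity functor to the unit profunctor $y$.

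Granting this, the promonad structure on $\check T$ is simply the image of the monad structure $(u \colon \Id \Rightarrow T,\ m \colon T \circ T \Rightarrow T)$: its unit is $u y \colon y \Rightarrow \check T$ and its multiplication is $\check T \diamond \check T \cong T \circ T \circ y \xrightarrow{\,m y\,} T \circ y = \check T$. The promonad unitality and associativity squares are the images under a monoidal functor of the monad axioms, hence commute. For the final claim, a $\check T$-algebra is a copresheaf $P \colon \A \to \V$, viewed as a profunctor $\mathbf{I} \nto \A$, equipped with $\rho \colon \check T \diamond P \to P$ satisfying the module axioms; the same Kan-extension computation gives $\check T \diamond P = \Lan_y \check T \circ P \cong T \circ P$, so $\rho$ is exactly a structure map $T P \to P$ in $[\A,\V]$, and the module axioms become the two Eilenberg--Moore axioms. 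Conversely every $T$-algebra arises this way, giving $\EM(T) \cong \EM(\check T)$ --- again just the fact that the monoidal equivalence $(-)\circ y$ carries modules to modules.

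The step I expect to carry the real content is the coherence of the isomorphism $\Lan_y \check T \cong T$ with the bicategorical structure of $\Prof$ --- that is, checking that $(-) \circ y$ is genuinely monoidal, not merely that it sends $T$ to a profunctor with the expected underlying data. Everything downstream of that is formal transport of monoid and module structure along a monoidal equivalence. Alternatively one could bypass the equivalence entirely and write $\eta$ and $\mu$ out by hand from the coend formula for $\diamond$ and verify the promonad axioms directly, but the bookkeeping is essentially the same.
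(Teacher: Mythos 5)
Your proposal is correct and follows essentially the same route as the paper: the key step in both is that cocontinuity gives $\Lan_y(T \circ y) \cong T$, whence $\check{T} \diamond \check{T} \cong T \circ T \circ y$ and more generally $\check{T} \diamond P \cong T \circ P$, so that the promonad structure is obtained by whiskering the monad structure with $y$ and $\check{T}$-algebras are identified with $T$-algebras. The only difference is presentational --- you package the transfer as transport along a monoidal equivalence $(-)\circ y$ (flagging its coherence as the remaining check), while the paper states the whiskered unit, multiplication, and axioms directly --- but the underlying argument is the same.
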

\begin{proof}
  First, the fact that \(T\) is cocontinuous means that it preserves left Kan
  extensions and thus,
  \[\Lan_y\check{T} \cong \mathsf{Lan}_y(T \circ y) \cong T \circ \mathsf{Lan}_y(y) \cong T.\]
  This means that the composition of the profunctor \(\check{T}\) with itself is
  \[\check{T} \diamond \check{T} = \mathsf{Lan}_y \check{T} \circ \check{T}
  \cong \mathsf{Lan}_y \check{T} \circ T\circ y
  \cong T \circ T \circ y.
  \]
  The unit and multiplication of the promonad are then obtained by whiskering
  the unit and multiplication of the monad with the Yoneda embedding; that is,
  \((\eta \circ y) \colon y \to T\circ y\) and
  \((\mu \circ y) \colon T \circ T \circ y \to T \circ y\). The diagrams for
  associativity and unitality for the promonad are the whiskering by the Yoneda
  embedding of the same diagrams for the monad. In fact, the same reasoning
  yields that, for any \(P \colon \D^{op} \to [\A, \V]\),
  \[\check{T} \diamond P \cong (T \circ y) \diamond P \cong \Lan_y(T \circ y)\circ P \cong T \circ P.\]
  As a consequence of this for the case
  \(P \colon \mathbf{I} \to [\A^{op},\V]\), any \(T\mbox{-algebra}\) can be seen
  as a \(\check{T}\mbox{-algebra}\) and vice versa. The axioms for the promonad
  structure on \(\check{T}\) coincide with the axioms for the corresponding
  monad on \(T\).
\end{proof}

In particular, the Pastro-Street monad \(\Phi\) is a left adjoint. That implies
that it is cocontinuous and, because of Proposition~\ref{prop:cocontinous}, it
induces a promonad \({\check \Phi} = \Phi \circ y\), having Tambara modules as
algebras. We can compute by the Yoneda lemma that
\[{\check \Phi}(A,B,S,T) = \int^{M} \C(S,\actL{M}{A}) \otimes \D(\actR{M}{B}, T).
\]
This coincides with \Cref{def:optic}. We now define
\(\Optic\) to be the Kleisli \Vt{category} for the promonad \({\check\Phi}\).
 
\subsection{Profunctor representation theorem}
Let us zoom out to the big picture again. It has been observed that optics can
be composed using their profunctor representation; that is, profunctor optics
can be endowed with a natural categorical structure. On the other hand, we have
generalized the double construction in \cite{pastro08} to
abstractly obtain the category \(\Optic\). The final missing piece that makes
both coincide is the \textit{profunctor representation theorem}, which will
justify the profunctor representation of optics and their composition in
profunctor form being the usual function composition.

The profunctor representation theorem for the case \(\V = \Sets\) and non-mixed
optics has been discussed in \cite[Theorem 4.2]{boisseau18}. Although our
statement is more general and the proof technique is different, the main idea is
the same. In both cases, the key insight is the following lemma, already
described by \cite{milewski17}.

\begin{lemma}[``Double Yoneda'' from \cite{milewski17}]
\label{lemma:doubleyoneda}
  For any \Vt{category} \(\A\), the hom-object between \(X\) and \(Y\) is
  \Vt{naturally} isomorphic to the object of \Vt{natural} transformations
  between the functors that evaluate copresheaves in \(X\) and \(Y\); that is,
  \[\A(X,Y) \cong [ [ \A , \V ] , \V ](-(X),-(Y)).\]
  The isomorphism is given by the canonical maps \(\A(X,Y) \to \V(FX,FY)\) for each \(F \in [\A, \V]\).  Its inverse is given by computing its value on the identity on the \(\A(X,-)\) component.
\end{lemma}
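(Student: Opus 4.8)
The plan is to reduce the statement to two applications of the enriched Yoneda lemma. Write $\mathcal{B} := [\A,\V]$ for the (large) $\V$-category of copresheaves on $\A$, and $\mathrm{ev}_X\colon\mathcal B\to\V$, $F\mapsto FX$, for the evaluation $\V$-functors, so that the right-hand side is the hom-object $[\mathcal B,\V](\mathrm{ev}_X,\mathrm{ev}_Y)=\int_{F\in\mathcal B}\V(FX,FY)$. The first step is to observe that each $\mathrm{ev}_X$ is representable: the enriched Yoneda lemma $[\A,\V](\A(X,-),F)\cong FX$, read as an isomorphism of functors in $F$, says precisely that $\mathrm{ev}_X\cong[\mathcal B](yX,-)$, where $y\colon\A^{op}\to\mathcal B$ is the Yoneda embedding $yX:=\A(X,-)$. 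The same holds at $Y$, so the right-hand side is $\V$-naturally isomorphic to $[\mathcal B,\V]\big([\mathcal B](yX,-),[\mathcal B](yY,-)\big)$.

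Next I would apply the Yoneda lemma a second time, now on $\mathcal B$: for the copresheaf $G:=[\mathcal B](yY,-)$ it gives $[\mathcal B,\V]\big([\mathcal B](yX,-),G\big)\cong G(yX)=[\mathcal B](yY,yX)$. Finally, since $y$ is $\V$-fully faithful (again the enriched Yoneda lemma, in the form $[\A,\V](\A(Y,-),\A(X,-))\cong\A(X,Y)$), we obtain $[\mathcal B](yY,yX)\cong\A(X,Y)$. Composing these isomorphisms,
\[
[\mathcal B,\V](\mathrm{ev}_X,\mathrm{ev}_Y)\;\cong\;[\mathcal B,\V]\big([\mathcal B](yX,-),[\mathcal B](yY,-)\big)\;\cong\;[\mathcal B](yY,yX)\;\cong\;\A(X,Y),
\]
which is the claimed isomorphism.

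To match the description in the statement one unwinds the composite. A morphism $f\colon X\to Y$ is sent by $y$ to precomposition $\A(f,-)\colon yY\to yX$, hence by the second Yoneda isomorphism to the $\V$-natural family $[\mathcal B](yX,F)\to[\mathcal B](yY,F)$ given by precomposition with $\A(f,-)$; transporting along $[\mathcal B](yX,F)\cong FX$ and $[\mathcal B](yY,F)\cong FY$ turns this into the functorial action $Ff\colon FX\to FY$, so the forward map is $f\mapsto(Ff)_{F\in\mathcal B}$ as claimed. Running the chain in reverse, a $\V$-natural family $\theta=(\theta_F\colon FX\to FY)_F$ is recovered by evaluating at the component $F=\A(X,-)$ and feeding in $\mathrm{id}_X\in\A(X,X)=\A(X,-)(X)$, which lands in $\A(X,-)(Y)=\A(X,Y)$ — exactly the stated inverse. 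The only genuinely delicate point is set-theoretic: $\mathcal B=[\A,\V]$ is not small, so both the formation of $\int_{F\in\mathcal B}$ and the second use of Yoneda must be read relative to the ambient Grothendieck universe fixed in the Setting; with that convention every step above is a routine enriched Yoneda argument, and $\V$-naturality of each isomorphism is automatic since each is furnished by the Yoneda lemma itself rather than as a bare pointwise bijection.
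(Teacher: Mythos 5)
Your proposal is correct and follows essentially the same route as the paper: both arguments amount to iterated applications of the enriched Yoneda lemma, using that evaluation at \(X\) is represented by \(\A(X,-)\) (so the end \(\int_F \V(FX,FY)\) becomes a hom-object between representables in \([[\A,\V],\V]\)) and then reducing \([\A,\V](\A(Y,-),\A(X,-))\cong\A(X,Y)\) by Yoneda again. Your write-up merely adds the explicit unwinding of the isomorphism and the universe-size caveat, both consistent with the paper's conventions.
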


\begin{proof}
  In the functor \Vt{category} \([ \A , \V]\), we can apply the Yoneda embedding
  to two representable functors \(\A(Y,-)\) and \(\A(X,-)\) to get
  \[[\A, \V](\A(Y,-) , \A(X,-)) \cong \int_F \V\Big([\A(X,-),F], [\A(Y,-),F] \Big).\]
  Here reducing by Yoneda lemma on both the left hand side and the two arguments
  of the right hand side, we get the desired result.
\end{proof}

\begin{remark} %
As a very simple special case of the Double Yoneda construction, the Haskell type
\begin{lstlisting}
    forall f . Functor f => f a -> f b
\end{lstlisting}
is isomorphic to the simple function type \texttt{a -> b} \cite{milewski17}.
It is straightforward for a functional programmer to construct the two witnesses to the isomorphism: the functorial action in one direction, and instantiation to the identity functor in the other. 
\end{remark}

\begin{theorem}[Profunctor representation theorem]
\label{th:profrep}
\[\int_{P \in \Tamb} \V(P(A,B) , P(S,T)) \cong
\Optic((A,B),(S,T)).
\]
\end{theorem}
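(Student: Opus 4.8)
The plan is to deduce the theorem by stringing together the structural results already proved, finishing with Milewski's ``Double Yoneda'' lemma (Lemma~\ref{lemma:doubleyoneda}). The single idea driving the argument is that, once we know that $\mathbf{Tamb}$ \emph{is} the copresheaf category of $\Optic$, both sides of the claimed isomorphism become assertions purely about $\Optic$ and its copresheaves, and the theorem reduces to Double Yoneda.

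First I would identify Tambara modules with copresheaves on $\Optic$. By Lemma~\ref{lemma:tambaracoalg}, $\mathbf{Tamb}\cong\EM(\Theta)$; since $\Phi\dashv\Theta$ exhibits the monad $\Phi$ as the left adjoint of the comonad $\Theta$, their Eilenberg--Moore categories coincide, so $\mathbf{Tamb}\cong\EM(\Phi)$. As observed after Proposition~\ref{prop:cocontinous}, $\Phi$ is a left adjoint, hence cocontinuous, so it yields a promonad $\check\Phi=\Phi\circ y$ whose algebras are exactly the $\Phi$-algebras, and the Yoneda computation there identifies $\check\Phi$ with the hom-profunctor of $\Optic$ (Definition~\ref{def:optic}), so that $\Optic=\Kl(\check\Phi)$. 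Applying Lemma~\ref{lemma:copresheaves} to $\check\Phi$ then gives an equivalence between its algebras and $[\Optic,\V]$. Composing all of this produces an equivalence $\mathbf{Tamb}\simeq[\Optic,\V]$ under which the underlying profunctor of a Tambara module $P$, evaluated at an object $(A,B)$ of $\Optic$, corresponds to the value at $(A,B)$ of the matching copresheaf.

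Next I would invoke Double Yoneda. Specializing Lemma~\ref{lemma:doubleyoneda} to $\A=\Optic$, $X=(A,B)$, $Y=(S,T)$ gives
\[\Optic((A,B),(S,T))\;\cong\;[[\Optic,\V],\V]\big({-}(A,B),{-}(S,T)\big)\;\cong\;\int_{Q\in[\Optic,\V]}\V\big(Q(A,B),Q(S,T)\big).\]
Transporting this end along the equivalence $[\Optic,\V]\simeq\mathbf{Tamb}$ from the previous step --- using that such an equivalence preserves ends of this shape and carries the evaluation functor ${-}(A,B)$ to ``underlying profunctor, evaluated at $(A,B)$'' --- replaces $Q$ by a Tambara module $P$ and $Q(A,B),Q(S,T)$ by $P(A,B),P(S,T)$, yielding $\Optic((A,B),(S,T))\cong\int_{P\in\mathbf{Tamb}}\V(P(A,B),P(S,T))$, as desired.

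The hard part will not be any coend computation but the bookkeeping in the first step: one must verify that the chain $\mathbf{Tamb}\cong\EM(\Theta)\cong\EM(\Phi)\cong\{\check\Phi\text{-algebras}\}\cong[\Optic,\V]$ is compatible with the ``evaluation at an object'' functors, so that the end over $\mathbf{Tamb}$ in the statement really does match the end over $[\Optic,\V]$ produced by Double Yoneda. Concretely, Lemma~\ref{lemma:copresheaves}'s equivalence sends a copresheaf on $\Kl(\check\Phi)$ to its restriction along the identity-on-objects inclusion (viewed as a $\check\Phi$-algebra), and one needs this restriction to be carried by the earlier equivalences to the forgetful functor $\mathbf{Tamb}\to\Prof(\C,\D)$ followed by evaluation at $(A,B)$. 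Once that identification is pinned down, the theorem follows by direct substitution, with no further manipulation of coends.
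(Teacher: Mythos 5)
Your proposal is correct and follows essentially the same route as the paper: the paper's proof is exactly ``apply Double Yoneda to $\Optic$ and use that copresheaves over $\Optic$ are Tambara modules,'' where the latter is obtained by the same chain you spell out ($\mathbf{Tamb}\cong\EM(\Theta)\cong\EM(\Phi)$, then Proposition~\ref{prop:cocontinous} and Lemma~\ref{lemma:copresheaves} applied to the promonad $\check\Phi$ whose Kleisli object is $\Optic$). Your only addition is making explicit the compatibility of the equivalence with the evaluation functors, which the paper leaves implicit.
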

\begin{proof}
  We apply Double Yoneda (Lemma~\ref{lemma:doubleyoneda}) to the \Vt{category}
  \(\Optic\) and then use that copresheaves over it are precisely Tambara
  modules (Proposition~\ref{lemma:copresheaves}).
  
  The immediate practical application of this theorem is to justify the following profunctor representation
  \begin{lstlisting}
  forall p . Tambara p => p a b -> p s t
  \end{lstlisting}
for all the optics we've been discussing in this paper, where the \(\mathit{Tambara}\) constraint
is replaced by the class of profunctors preserving the appropriate monoidal action.
For instance, the standard lens
\begin{lstlisting}
  type Lens a b s t = forall p . Cartesian p => p a b -> p s t
\end{lstlisting}
is defined by the class of profunctors preserving
the action defined by the cartesian product, which are Tambara modules for the cartesian product.
\begin{lstlisting}
  class Profunctor p => Cartesian p where
    first'  :: p a b -> p (a, c) (b, c)
    second' :: p a b -> p (c, a) (c, b)
\end{lstlisting}
A similar encoding works for the rest of the optics.

\end{proof}

\section{Conclusions}
\label{sec:conclusions}
We have extended a result by Pastro and Street to a setting that is useful for
optics in functional programming. Using it, we have refined some of the optics
already present in the literature to mixed optics, providing derivations for
each one of them. We have also described new optics.

Regarding functional programming, the work suggests an architecture for a
library of optics that would benefit from these results. Instead of implementing
each optic separately, the general definition can be instantiated in all the 
particular cases. We can then just consider specific functions for constructing
the more common families of optics. Tambara modules can be used to implement
each one of the combinators of the library, ensuring that they work for as many
optics as possible. The interested reader can find the implementation in
Appendix~\ref{sec:implementation}.

Many of the other applications of optics may benefit from the flexibility of
enriched and mixed optics. They may be used to capture some \emph{lens}-like
constructions and provide a general theory of how they should be studied; the
specifics remain as future work.

\subsection{Van Laarhoven encoding}
This paper has focused on the profunctor representation of optics. A similar
representation that also provides the benefit of straightforward composition is
the \textbf{van Laarhoven encoding}~\cite{laarhoven09}. It is arguably less
flexible than the profunctor representation, being based on representable
profunctors, but it is more common in practice. For instance, 

\begin{proposition}[Van Laarhoven-style traversals]
Traversals admit a
different encoding in terms of profunctors represented by an applicative
functor.
\[ \mathbf{Traversal}((A,B),(S,T)) \cong  \int_{F \in \App} \V(A, FB) \otimes \V(S,FT). \]
\end{proposition}
\begin{proof}
\begin{align*}
  & \int_{F \in \App} \V(A, FB) \otimes \V(S,FT) \\
    \cong & \hintYoneda \\
  & \int_{F \in \App} [\V,\V](A \otimes [B,-], F) \otimes \V(S,FT) \\
    \cong & \hint{Adjunction, free applicatives} \\
  & \int_{F \in \App} \App \left(\sum_{n \in \mathbb{N}} A^n \otimes [B^n,-], F \right) \otimes \V(S,FT) \\
    \cong & \hintCoyoneda \\
  & \V\left(S, \sum_{n \in \mathbb{N}} A^n \otimes [B^n , T]\right). \qedhere
\end{align*}
\end{proof}

Exactly the same technique yields lenses and \emph{grates}~\cite{deikun15},
using arbitrary representable or corepresentable profunctors, respectively.

\begin{proposition}[Van Laarhoven lenses \cite{laarhoven09}]
Lenses admit an encoding in terms of functors, or representable profunctors.
\[    \LinearLens_{\otimes,\otimes} ((A, B), (S, T)) \cong
 \int_{F \in [\V,\V]} \V(A, FB) \otimes \V(S,FT).
\]
\end{proposition}
\begin{proof}
  By coend calculus, using the Yoneda lemma.
\begin{align*}
  & \int_{F \in [\V,\V]} \V(A, FB) \otimes \V(S,FT) \\
    \cong & \hintYoneda \\
  & \int_{F \in  [\V,\V]} [\V,\V](A \otimes [B,-], F) \otimes \V(S,FT) \\
    \cong & \hintCoyoneda \\
  & \V\left(S, A \otimes [B , T]\right).  \qedhere
\end{align*}
\end{proof}

\begin{proposition}[Van Laarhoven-style grates]
  Grates admit an encoding in terms of corepresentable functors.
\[    \Grate((A, B), (S, T)) \cong
 \int_{F \in [\V,\V]} \V(FA, B) \otimes \V(FS,T).
\]
\end{proposition}
\begin{proof}
  Again by coend calculus, and using the Yoneda lemma.
\begin{align*}
  & \int_{F \in [\V,\V]} \V(FA, B) \otimes \V(FS,T) \\
    \cong & \hintYoneda \\
  & \int_{F \in  [\V,\V]} [\V,\V](F,[[\bullet,A],B]) \otimes \V(FS,T) \\
    \cong & \hintCoyoneda \\
  & \V\left([[S, A],B] , T\right).  \qedhere
\end{align*}
\end{proof}

\subsection{Related work}

Pastro and Street \cite{pastro08} first described the construction of
\emph{doubles} in their study of Tambara theory. Their results can be reused for
\emph{optics} thanks to the observations of \cite{milewski17}. The profunctor
representation theorem and its implications for functional programming have been
studied by \cite{boisseau18}. We combine their approach with Pastro and Street's
to get a proof of a more general version of this theorem.

The case of mixed optics was first mentioned by Riley \cite[\S 6.1]{riley18}, but his
work targeted a more restricted case. Specifically, the definitions of
\emph{optic} given in the literature \cite{milewski17,boisseau18,riley18}
deal only with the particular case in which \(\V = \Sets\), the categories
\(\C\) and \(\D\) coincide, and the two actions are the same.
Riley derives a class of optics and their
laws~\cite[\S4.4]{riley18} that is closely related to ours in \Cref{sec:opticsforcofree};
our proposal makes stronger assumptions but may be more straightforward to apply in
programming contexts. Riley uses the results of \cite{jaskelioff15} to
propose a description of the traversal in terms of traversable functors~\cite[\S
4.6]{riley18}; our derivation simplifies this approach, which was in principle
not suitable for the enriched case.

A central aspect of Riley's work is the extension of the concept of \emph{lawful
 lens} to arbitrary \emph{lawful optics}~\cite[\S 3]{riley18}. This extension
works exactly the same for the optics we define here, so we do not address it
explicitly in this paper. A first reasonable notion of lawfulness for the case of
mixed optics for two actions \((\actL{}{}) \colon \M \otimes \C \to \C\) and
\((\actR{}{}) \colon \N \otimes \D \to \D\) is to use a cospan
\(\C \to \mathbf{E} \gets \D\) \emph{of actions} to push the two parts of the
optic into the same category and then consider lawfulness in \(\mathbf{E}\).

\subsection{Further work}

In terms of functional programming, optics of different kinds compose using polymorphic function composition.
 A categorical account of how optics of different kinds compose into
    optics is left for further work. Specifically, it should be able to explain
    the ``lattice of optics'' described in \cite{pickering17,boisseau18}.
    Some preliminary results have been discussed by \cite{roman19}, but
    the proposal to model the lattice is still too \emph{ad-hoc} to be
    satisfactory.  The topic of lawfulness \cite[\S 3]{riley18} and how it
    relates to composition and mixed optics is also left for further work.

 The relation between power series functors and traversables is implicit
    across the literature on polynomial functors and containers. It can be shown
    that \emph{traversable} structures over an endofunctor \(T\) correspond to
    certain parameterised coalgebras using the free applicative
    construction~\cite{rypacek12}. We believe that it is possible to refine this
    result to make our derivation for traversals more practical for functional
    programming.

    It can be noted that lenses are the optic for products, functors that
    distribute over strong functors. Traversals are the optic for traversables,
    functors that distribute over applicative functors. Both have a van
    Laarhoven representation in terms of strong and applicative functors
    respectively. A generalization of this phenomenon needs a certain Kan
    extension to be given a coalgebra structure~\cite[Lemma 4.1.3]{roman19}, but
    it does not necessarily work for any optic.

 Optics have numerous applications in the literature, including game
    theory~\cite{ghani18}, machine learning~\cite{fong19} and model-driven
    development~\cite{stevens10}. Beyond functional programming, enriched optics
    open new paths for explorating applications of optics. Both mixed optics
    and enriched optics allow us to more precisely adjust the existing
    definitions to match the desired applications.

\newpage

\section{Appendix: Haskell implementation}
\label{sec:implementation}

Let \(\V\) be a cartesian closed category whose objects model the types of our
programming language and whose points \(1 \to X\) represent programs of type
\(X\). The following is an informal translation of the concepts of enriched
category theory to a Haskell implementation where a single abstract definition
of optic is used for a range of different examples. The code for this text can
be compiled under GHC 8.6, using the libraries \hask{split} and \hask{delay}. It
includes an implementation of optics and all the examples we have discussed
(Figures~\ref{fig:example0},~\ref{fig:exampleBox},~\ref{fig:exampleIris} and
\ref{fig:exampleTraversal}).

The complete code can be found at
\begin{center}
  \url{https://github.com/mroman42/vitrea}
\end{center}

\subsection{Concepts of enriched category theory}

\begin{definition}[{{\cite[\S 1.2]{kelly05}}}, see also {{\cite{visscher15}}}]
  A \Vt{category} \(\C\) consists of a set \(\mathrm{Obj}(\C)\) of objects, a
  hom-object \(\C(A,B)\in \V\) for each pair of objects
  \(A,B \in \mathrm{Obj}(\C)\), a composition law
  \(\C(A,B) \times \C(B,C) \to \C(A,C)\) for each triple of objects, and an
  identity element \(1 \to \C(A,A)\) for each object; subject to the usual
  associativity and unit axioms.
\end{definition}
\begin{lstlisting}
class Category objc c where
  unit :: (objc x) => c x x
  comp :: (objc x) => c y z -> c x y -> c x z
\end{lstlisting}
In Haskell, we define a class of higher-order type constructors that represent a category.
Here, the objects for our category are selected from Haskell types by the constraint \hask{objc}. 
Hom-objects are selected by the two-argument type constructor \hask{c}. We then ask that they have a identity, \hask{unit}, and a composition, \hask{comp}.

\begin{definition}[{{\cite[\S 1.2]{kelly05}}}]
  A \Vt{functor} \(F \colon \C \to \D\) consists of a function
  \(\Obj(\C) \to \Obj(\D)\) together with a map \(\C(A,B) \to \D(FA,FB)\) for
  each pair of objects; subject to the usual compatibility with composition and
  units. \Vt{bifunctors} and \Vt{profunctors} can be defined analogously. 
\end{definition}
\begin{lstlisting}
class ( Category objc c, Category objd d, Category obje e
      , forall x y . (objc x , objd y) => obje (f x y) )
      => Bifunctor objc c objd d obje e f where
  bimap :: ( objc x1, objc x2, objd y1, objd y2 )
        => c x1 x2 -> d y1 y2 -> e (f x1 y1) (f x2 y2)

class ( Category objc c, Category objd d )
      => Profunctor objc c objd d p where
  dimap :: (objc x1, objc x2, objd y1, objd y2)
        => c x2 x1 -> d y1 y2 -> p x1 y1 -> p x2 y2
\end{lstlisting}
Bifunctors and profunctors are both instances of functors.
Here, we ask for the maps on morphisms, \hask{bimap} for bifunctors and \hask{dimap} for profunctors.

\begin{definition}[{{\cite{day70}}}]
  A monoidal \Vt{category} is a \Vt{category} \(\M\) together with a
  \Vt{functor} \((\otimes) \colon \M \otimes \M \to \M\), an object
  \(I \in \M\), and \Vt{natural} isomorphisms
  \(\alpha \colon (A \otimes B) \otimes C \cong A \otimes (B \otimes C)\),
  \(\rho \colon A \otimes I \cong A\), and
  \(\lambda \colon I \otimes A \cong A\), satisfying the usual coherence axioms
  for a monoidal category.
\end{definition}
\begin{lstlisting}
class ( Category obja a
      , Bifunctor obja a obja a obja a o
      , obja i )
      => MonoidalCategory obja a o i where
  alpha  :: (obja x, obja y, obja z)
         => a (x `o` (y `o` z)) ((x `o` y) `o` z)
  alphainv  :: (obja x, obja y, obja z)
            => a ((x `o` y) `o` z) (x `o` (y `o` z))
  lambda    :: (obja x) => a (x `o` i) x
  lambdainv :: (obja x) => a x (x `o` i)
  rho       :: (obja x) => a (i `o` x) x
  rhoinv    :: (obja x) => a x (i `o` x)
\end{lstlisting}
We define the monoidal category relative to a bifunctor representing the tensor and an object representing the unit.
We ask for all the coherence maps, i.e. \hask{alpha} for the associator.

\begin{definition}
  A monoidal \Vt{action} \(F \colon \M \otimes \C \to \C\) from a
  monoidal \Vt{category} \(\M\) to an arbitrary category \(\C\) is a
  \Vt{functor} together with two \Vt{natural} isomorphisms \(F(I,X) \cong X\)
  and \(F(M,F(N,X))\cong F((M \otimes N), X)\) satisfying associativity and
  unitality conditions.
\end{definition}
\begin{lstlisting}
class ( MonoidalCategory objm m o i
      , Bifunctor objm m objc c objc c f
      , Category objc c )
      => MonoidalAction objm m o i objc c f where
  unitor :: (objc x) => c (f i x) x
  unitorinv :: (objc x) => c x (f i x)
  multiplicator :: (objc x, objm p, objm q)
                => c (f p (f q x)) (f (p `o` q) x)
  multiplicatorinv :: (objc x, objm p, objm q)
                => c (f (p `o` q) x) (f p (f q x))
\end{lstlisting}
Monoidal actions are bifunctors with extra maps representing the coherence conditions, e.g. \hask{unitor} for the oplaxator.

\begin{definition}
  \label{def_haskell_optic}
  \Cref{def:optic} has now a direct interpretation in more generality.
  Note how the coend is modeled as an existential type in \hask{x} using a
  \texttt{GADT}.
\begin{lstlisting}
  data Optic objc c objd d objm m o i f g a b s t where
    Optic :: ( MonoidalAction objm m o i objc c f
             , MonoidalAction objm m o i objd d g
             , objc a, objc s , objd b, objd t , objm x )
          => c s (f x a) -> d (g x b) t
          -> Optic objc c objd d objm m o i f g a b s t
\end{lstlisting}
\end{definition}

\subsection{Mixed profunctor optics}

We can implement Tambara modules (\Cref{def:tambara}) and profunctor
optics using the profunctor representation theorem (\Cref{th:profrep}).
\begin{lstlisting}
class ( MonoidalAction objm m o i objc c f
      , MonoidalAction objm m o i objd d g
      , Profunctor objc c objd d p )
      => Tambara objc c objd d objm m o i f g p where
  tambara :: (objc x, objd y, objm w)
          => p x y -> p (f w x) (g w y)

type ProfOptic objc c objd d objm m o i f g a b s t = forall p .
  ( Tambara objc c objd d objm m o i f g p
  , MonoidalAction objm m o i objc c f
  , MonoidalAction objm m o i objd d g
  , objc a , objd b , objc s , objd t
  ) => p a b -> p s t
\end{lstlisting}

The isomorphism between existential and profunctor optics can be explicitly
constructed from \Cref{lemma:doubleyoneda}.

\ignore{
\begin{lstlisting}
 -- The translation from profunctor to existential optics requires the
 -- following two isntances to be defined.  The first one describes the
 -- representable copresheaves of the category of optics, that is,
 -- Optic(A,B,-,-).
 instance ( MonoidalAction objm m o i objc c f
          , MonoidalAction objm m o i objd d g
          , objc a , objd b )
          => Profunctor objc c objd d (Optic objc c objd d objm m o i f g a b) where
   dimap f g (Optic l r) = Optic (comp @objc @c l f) (comp @objd @d g r)

 -- The second one shows that these representable copresheaves are
 -- Tambara modules for their defining actions.
 instance ( MonoidalAction objm m o i objc c f
          , MonoidalAction objm m o i objd d g
          , objc a , objd b )
          => Tambara objc c objd d objm m o i f g (Optic objc c objd d objm m o i f g a b) where
   tambara (Optic l r) = Optic
     (comp @objc @c (multiplicator @objm @m @o @i @objc @c @f) (bimap @objm @m @objc @c @objc @c (unit @objm @m) l))
     (comp @objd @d (bimap @objm @m @objd @d @objd @d (unit @objm @m) r) (multiplicatorinv @objm @m @o @i @objd @d @g))
\end{lstlisting}
}

\begin{lstlisting}
 ex2prof :: forall objc c objd d objm m o i f g a b s t .
        Optic     objc c objd d objm m o i f g a b s t
     -> ProfOptic objc c objd d objm m o i f g a b s t
 ex2prof (Optic l r) =
   dimap @objc @c @objd @d l r .
   tambara @objc @c @objd @d @objm @m @o @i

 prof2ex :: forall objc c objd d objm m o i f g a b s t .
     ( MonoidalAction objm m o i objc c f
     , MonoidalAction objm m o i objd d g
     , objc a , objc s
     , objd b , objd t )
     => ProfOptic objc c objd d objm m o i f g a b s t
     -> Optic     objc c objd d objm m o i f g a b s t
 prof2ex p = p (Optic
     (unitorinv @objm @m @o @i @objc @c @f)
     (unitor @objm @m @o @i @objd @d @g))
\end{lstlisting}
We used the \hask{TypeApplications} language extension to explicitly pass type 
parameters to polymorphic functions.

\subsection{Combinators}

After constructing optics, an implementation should provide ways of using them.
Many optics libraries, such as Kmett's \emph{lens}~\cite{kmett15}, provide a
vast range of combinators. Each of these combinators works on some group of
optics that share a common feature. For instance, we could consider all the
optics that implement a \texttt{view} function, and create a single combinator
that lets us view the focus inside a family of optics.

This may seem, at first glance, difficult to model. We do not know, a priori,
which of our optics will admit a given combinator. However, the fact that
Tambara modules are copresheaves over optics suggests that we can use them to
model ways of accessing optics; and in fact, we have found them to be very
satisfactory to describe combinators in their full generality.

\begin{remark}
  As an example, for any fixed \(A\) and \(B\), consider the profunctor
  \(P_{A,B}(S,T) \coloneqq (S \to A)\). It can be seen as modelling the
  \texttt{view} combinator that some optics provide.
\begin{lstlisting}
newtype Viewing a b s t = Viewing { getView :: s -> a }
instance Profunctor Any (->) Any (->) (Viewing a b) where
  dimap l _ (Viewing f) = Viewing (f . l)
\end{lstlisting}
If we want to apply this combinator to a particular optic, we need it to be a
Tambara module for the actions describing the optic. For instance, we can show
that it is a Tambara module for the cartesian product, taking \(\C = \D = \M\);
this means it can be used with \emph{lenses} in the cartesian case. In other
words, \emph{lenses} can be used to \texttt{view} the focus.
\begin{lstlisting}
  instance Tambara Any (->) Any (->) Any (->) (,) ()
      (,) (,) (Viewing a b) where
    tambara (Viewing f) = Viewing (f . snd)
\end{lstlisting}
Optic combinators are usually provided as infix functions that play nicely
with the composition operator. Specifically, they have ``fixity and semantics
such that subsequent field accesses can be performed with
\hask{Prelude..} [function composition]''~\cite{kmett15}.
\begin{lstlisting}
   infixl 8 ^.
  (^.) :: s -> (Viewing a b a b -> Viewing a b s t) -> a
  (^.) s l = getView (l (Viewing id)) s
\end{lstlisting}
\end{remark}

\subsubsection{Table of combinators}
\label{sec:tablecombinators}
The names of our combinators try to match, where possible, the names used by Kmett's lens
library~\cite{kmett15}.

\quad\\

\begin{tabular}{rl}
\hline
\vrule width0pt height2.5ex depth 1ex %
 & Combinators. \\
\hline
\vrule width0pt height2.5ex %
\hask{(^.) ::} & \hask{s -> (Viewing a b a b -> Viewing a b s t) -> a} \\
& View a single target. \\
  \hask{(?.) ::} & \hask{s -> (Previewing a b a b  } \\
  & \hask{-> Previewing a b s t) -> Maybe a} \\
& Try to view a single target; this can possibly result in failure. \\
\hask{(.~) ::} & \hask{(Setting a b a b -> Setting a b s t) -> b -> s -> t} \\
& Replace a target with a given value. \\
\hask{(}\texttt{\%}\hask{~) ::} & \hask{(Replacing a b a b -> Replacing a b s t)} \\
& \hask{-> (a -> b) -> (s -> t)} \\
  & Replace a target by applying a function. \\
  \hask{(.?) ::} & \hask{(Monad m) => (Classifying m a b a b} \\
  & \hask{ -> Classifying m a b s t'') -> b -> m s -> t} \\
& Classifies the target into a complete instance. \\
\texttt{(>-) ::} & \hask{(Aggregating a b a b -> Aggregating a b s t)} \\
& \hask{-> ([a] -> b) -> [s] -> t} \\
& Aggregates the whole structure by aggregating the targets. \\
  \hask{(.!) ::} & \hask{(Monad m)=> (Updating m a b a b} \\
  & \hask{-> Updating m a b s t)-> b -> s -> m t} \\
  & Replaces the target, producing a monadic effect.
\end{tabular}

\subsection{Table of optics}

We can consider all of these optics in the case where some cartesian closed
\(\W\) is both the enriching category and the base for the optic. This case is
of particular interest in functional programming.

\begin{longtable}{llc}
\hline
\vrule width0pt height2.5ex depth 1ex %
Name & Description & Ref. \\
\hline
\vrule width0pt height2.5ex %
Adapter & \hask{(s -> a) , (b -> t)} &  {\ref{def:adapter}} \\
Lens & \hask{(s -> a) , (s -> b -> t)} &  {{\ref{def:lens}}} \\
Algebraic lens & \hask{(s -> a) , (m s -> b -> t)} &  {{\ref{def:algebraiclens}}} \\
Prism & \hask{(s -> Either a t) , (b -> t)} &  {{\ref{def:prism}}} \\
Coalgebraic prism & \hask{(s -> Either a (c t)) , (b -> t)} & {{\ref{def:algebraiclens}}} \\
Grate & \hask{((s -> a) -> b) -> t} &  {{\ref{def:grate}}} \\
Glass & \hask{((s -> a) -> b) -> s -> t} &  {{\ref{def:glass}}} \\
Affine Traversal & \hask{s -> Either t (a , b -> t)} &  {{\ref{def:affine}}} \\
Traversal & \hask{s -> (Vec n a, Vec n b -> t)} &  {{\ref{def:traversal}}} \\
Kaleidoscope & \hask{(Vec n a -> b) -> (Vec n s -> t)} &  {{\ref{def:kaleidoscope}}} \\
Setter & \hask{(a -> b) , (s -> t)} &  {{\ref{def:setter}}}\\
Fold & \hask{s -> [a]} &  {{\ref{def:fold}}}\\
\end{longtable}

\newpage
\bibliographystyle{plainnat}
\bibliography{main}

\end{document}